\providecommand{\abs}[1]{\lvert#1\rvert}
\DeclareMathOperator{\supp}{supp}
\theoremstyle{definition}
    \newtheorem{thm}{Theorem}[section]
    \newtheorem{lemma}[thm]{Lemma}
    \newtheorem{cor}[thm]{Corollary}
    \newtheorem{defn}[thm]{Definition}
    \newtheorem*{rethm}{Theorem}
    \newtheorem*{redefn}{Definition}
\theoremstyle{remark}
    \newtheorem*{rmk}{Remark}
    \newtheorem*{notation}{Notation}
\title{On the capacity of the binary adversarial wiretap channel}
\author{Carol Wang\thanks{Department of Electrical and Computer Engineering, National University of Singapore. Email: {\tt wangc@nus.edu.sg}. This work was supported by a Ministry of Education Tier 2 Grant (R-263-000-B61-112).}}
\date{}
\begin{document}
\maketitle


\begin{abstract}
New bounds on the semantic secrecy capacity of the binary adversarial wiretap channel are established . Against an adversary which reads a $\rho_r$ fraction of the transmitted codeword and modifies a $\rho_w$ fraction of the codeword, we show an achievable rate of $1-h(\rho_w)-\rho_r$, where $h(\cdot)$ is the binary entropy function. We also give an upper bound which is nearly matching when $\rho_r$ is small. 

\end{abstract}

\section{Introduction}

In the most basic model of communication, a sender attempts to communicate with a receiver over a noisy channel, with the goal of achieving {\em reliability}: the receiver should be able to recover the intended message even in the presence of noise. The wiretap channel, introduced in~\cite{wyner,bcc}, adds a wiretapper or eavesdropper to the model. The wiretapper also has access to a noisy version of the sender's transmission, and now the sender also wants to achieve {\em secrecy}: the wiretapper should not learn anything about the intended message. (These notions will be made precise later.) Typically, the sender is connected to both the receiver and wiretapper by a memoryless broadcast channel, and the dual goals of reliability and secrecy can be met with positive communication rate when the channel to the wiretapper is ``noisier'' than that to the receiver. 

In this work, we consider the extension of the wiretap model to the adversarial setting. We not only allow the wiretapper to choose an arbitrary $\rho_r$ fraction of transmitted symbols to read (this is the ``Wiretap Channel II'' model of Ozarow and Wyner~(\cite{wt2})), but also allow the wiretapper to choose an arbitrary $\rho_w$ fraction of errors to add to the transmission before it reaches the receiver. We refer to this model as the $(\rho_r,\rho_w)$ adversarial wiretap channel. An incomplete survey of related models and results appears in Section~\ref{sec:prev}; this particular model and the name ``adversarial wiretap channel'' were introduced in~\cite{WSN} along with an explicit construction for large alphabets. The authors of~\cite{WSN} also show that the capacity of such a channel is at most $1-\rho_w-\rho_r$. 

The model of the adversarial wiretap channel represents a natural middle ground between truly adversarial errors (i.e.\ $\rho_r=1$, when the adversary has full knowledge of what is being transmitted) and fully oblivious errors~(i.e.\ $\rho_r=0$, defined in~\cite{obliv}), where the adversary has no knowledge of which codeword is being transmitted, but can add arbitrary errors to the codeword. In that respect it is related to the more abstract model of $\gamma$-oblivious channels due to~\cite{obliv}, where, loosely, the parameter $\gamma$ controls how much the channel knows about the transmitted codeword (see also the discussion in Section~\ref{sec:prev}). 
\medskip

The main theorem of this work is the following, which bounds the rate of a binary code which simultaneously achieves reliability and secrecy in the adversarial wiretap model. 

\begin{rethm}[Theorem~\ref{thm:main}] The capacity of the $(\rho_r,\rho_w)$ adversarial wiretap channel is at least $1-h(\rho_w) - \rho_r$. 
\end{rethm}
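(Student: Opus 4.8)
The plan is a random-coding argument realizing a stochastic code of the following shape. Fix a small $\epsilon>0$ and a blocklength $n$, reserve $o(n)$ of the coordinates for ``control'' information, and on the remaining $(1-o(1))n$ coordinates place a code $C^\ast\subseteq\{0,1\}^{(1-o(1))n}$ of rate $1-h(\rho_w)-\epsilon$ chosen uniformly at random; with probability $1-o(1)$ this $C^\ast$ is $(\rho_w,L)$-list-decodable for $L=O(1/\epsilon)$. Partition $C^\ast$ uniformly at random into $2^{(1-h(\rho_w)-\rho_r-2\epsilon)n}$ equal-size ``bins'' $\mathcal C_m$ of cardinality $2^{(\rho_r+\epsilon)n}$; the message $m$ names a bin, and the encoder transmits a uniformly random codeword of $\mathcal C_m$. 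In the $o(n)$ control coordinates we additionally send a short ``handshake'' string $\theta$ together with a hash $\mathsf{hash}_\theta$ of the $C^\ast$-part of the word, each carried by its own tiny code so that it is recoverable under a $\rho_w$-fraction of flips yet hidden from any reader of a $\rho_r$-fraction; since this uses $o(n)$ symbols it costs nothing in rate. The decoder first recovers $\theta$, then list-decodes the $C^\ast$-part of the received word to $\le L$ candidate codewords, and outputs the bin of the (with high probability unique) candidate whose embedded hash is consistent with $\theta$.

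Secrecy is the routine ingredient. Fix a bin $\mathcal C_m$ and a coordinate set $T$ with $|T|=\rho_r n$; the projections $\{c|_T:c\in\mathcal C_m\}$ behave like $2^{(\rho_r+\epsilon)n}$ near-uniform samples in $\{0,1\}^{\rho_r n}$, so a Chernoff bound shows the law of $c|_T$ for a uniformly random $c\in\mathcal C_m$ is within statistical distance $2^{-\Omega(\epsilon n)}$ of uniform, except with probability that is doubly-exponentially small in $n$. That tail easily beats $2^{h(\rho_r)n}\cdot 2^{(1-h(\rho_w)-\rho_r)n}$, so a union bound over all sets $T$ and all bins still leaves an $o(1)$ failure probability; the same estimate applied to the handshake's own tiny code hides $\theta$ from the reader. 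Hence the reader's view, for \emph{every} message, is within $2^{-\Omega(n)}$ of one fixed distribution, so for every distribution on messages it leaks at most $2^{-\Omega(n)}$ bits about the message, which is the required semantic secrecy. Since the binning is invertible given the message and the encoder's randomness, it is transparent to the decoder.

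Reliability is the crux. Correcting a $\rho_w$-fraction of worst-case errors uniquely requires relative distance more than $2\rho_w$, hence rate at most $1-h(2\rho_w)<1-h(\rho_w)$, so the argument must genuinely use that the adversary reads only a $\rho_r$-fraction. Conditioning on the read set and the read string, the adversary's error vector $e$ (weight $\le\rho_w n$) becomes a fixed vector that is, crucially, essentially independent of $\theta$ because $\theta$ is hidden by its tiny secrecy code, while the transmitted codeword is uniform over a set whose unread $(1-\rho_r)n$ coordinates still carry almost full entropy. The true codeword is among the $\le L$ list-decoding candidates; a wrong candidate survives the handshake test only if its fixed embedded hash happens to equal $\mathsf{hash}_\theta$ of its own payload, and since $\theta$ is near-uniform and essentially independent of the list this occurs with probability $\le 2^{-|\theta|}$ per candidate, so the overall error is at most $L\cdot 2^{-|\theta|}$ plus a $2^{-\Omega(n)}$ slack; taking $|\theta|=\omega(\log L)$ — which its control code can deliver using $o(n)$ coordinates, and $|\theta|=\Theta(\epsilon n)$ making the bound $2^{-\Omega(n)}$ — makes this vanish. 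The two points needing real care are: formalizing that the list of candidates is (nearly) independent of $\theta$, even though $\theta$ is itself encoded into the transmitted word and hence weakly correlated with the received word; and designing the control codes so that $\Theta(\epsilon n)$ bits can be both reliably delivered under a $\rho_w$-fraction of flips and kept hidden from a $\rho_r$-fraction reader. Both are handled by common-randomness reduction techniques familiar from the theory of arbitrarily varying channels.

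Finally, each ``bad code'' event above — failure of list-decodability, of the secrecy estimate, or of the control codes — has probability $o(1)$ over the random choice of $C^\ast$ and of the partition, so some code meets all requirements simultaneously: against every admissible $(\rho_r,\rho_w)$-adversary it has decoding-error probability and leakage both $2^{-\Omega(n)}$, at message rate $1-h(\rho_w)-\rho_r-O(\epsilon)$. Letting $\epsilon\to0$ shows every rate below $1-h(\rho_w)-\rho_r$ is achievable, i.e.\ the capacity is at least $1-h(\rho_w)-\rho_r$.
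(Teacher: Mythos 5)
Your secrecy argument is broadly sound and parallels the paper's: the paper establishes weak secrecy by counting how many codewords in each bin are consistent with any fixed view, and semantic secrecy via the stronger soft-covering lemma of~\cite{soft}; your Chernoff-plus-union-bound over bins and read sets is in the same spirit (modulo converting a statistical-distance bound into a mutual-information bound, which is routine). The reliability argument is a genuinely different route, and it is the one that breaks.

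The control region you reserve is a fixed, publicly known set of $o(n)$ coordinates, since the code and its layout are known to the adversary. The adversary's budgets are $\rho_r n$ reads and $\rho_w n$ writes, both $\Theta(n)$, and the adversary may concentrate them wherever it likes. For large $n$ both budgets exceed $o(n)$, so the adversary can read every control coordinate (hence $\theta$ is not hidden) and simultaneously flip every control coordinate (hence $\theta$ and $\mathsf{hash}_\theta$ cannot be recovered). There is no ``tiny code'' on $o(n)$ symbols that is recoverable when the adversary may spend its entire $\rho_w n$ error budget inside that region, and no code on $o(n)$ symbols can hide anything from a reader with a $\Theta(n)$ read budget. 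So the handshake can be neither delivered reliably nor kept secret, and the list-decode-then-disambiguate step collapses. The analogous technique in the AVC literature requires either genuine shared common randomness between encoder and decoder (not available here --- your only randomness is the encoder's secret seed, which the decoder does not possess) or computational restrictions on the adversary (as in~\cite{complim}); neither applies to the $(\rho_r,\rho_w)$ AWTC. The same objection partially undercuts the secrecy of $\theta$, which you also route through the control region.

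The paper's reliability proof avoids side channels entirely: it keeps the same random code with nearest-neighbor decoding and argues, following~\cite{myopic}, that after the adversary has read its $\rho_r n$ coordinates the transmitted codeword is still (conditionally) uniform on the large set $C|_V$ of codewords consistent with the view, so the adversary remains effectively oblivious. Partitioning $C|_V$ into sub-codes and applying concentration bounds (Lemma~\ref{lem:conc}) together with list-decodability of the random code (Corollary~\ref{cor:ld}) shows that any fixed error vector the adversary might choose causes a decoding error for only a vanishing fraction of $C|_V$. That conditional-obliviousness mechanism is what lets one beat the worst-case $1-h(2\rho_w)$ barrier without any out-of-band hint to the decoder; you should replace the handshake construction with an argument of this form.
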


Loosely, the loss of $\rho_r$ in the achievable rate is required to achieve secrecy, and can be matched in constructions by adding some pseudorandom noise to the transmitted symbols. The loss of $h(\rho_w)$ in the rate then corresponds to what is necessary to correct a $\rho_w$ fraction of errors. The challenge faced by previous work such as~\cite{wt2adv} is that the best known rate for correcting an {\em arbitrary} $\rho_w$ error fraction is $1-h(2\rho_w)$ for binary codes. However, there are a few cases when we can achieve a rate of $1-h(\rho_w)$ against a $\rho_w$ fraction of errors, most notably in the case of {\em random} errors (\cite{shannon}), and for oblivious errors (\cite{obliv}). 

The main point of our achievability analysis, which uses random coding, is that the adversary's errors must behave like random or oblivious errors, even with the auxiliary knowledge of a $\rho_r$ fraction of the codeword. This allows us to show that correcting such errors only requires a $h(\rho_w)$ loss in the achievable rate, allowing for a final rate of $1-h(\rho_w)-\rho_r$. This recalls the work of~\cite{myopic} (``Sufficiently myopic adversaries are blind''), and indeed we build on their techniques to show that the adversary is still ``blind'' after reading his choice of symbols. 

Our result improves on previous known bounds for binary codes, and comes close to matching the upper bound on achievable rate induced by the random wiretap channel (Theorem~\ref{thm:rand-cap}) when $\rho_r$ is small. 

\subsubsection*{Organization}

In Section~\ref{sec:prelim}, we define some terminology relating to the adversarial wiretap model, including the capacity, and build on this to discuss related work.  In Section~\ref{sec:ub}, we show an upper bound on the achievable rate of any family of codes for the adversarial wiretap channel. We also give a lower bound in Section~\ref{sec:lb} using a random code construction, which we show achieves both reliability and secrecy. We conclude in Section~\ref{sec:conc} with some discussion of future work. 

\section{Preliminaries}
\label{sec:prelim}

\subsection{Setup and notation}
\label{sec:ntn}

\begin{notation} We will use the following conventions throughout. 

Unless otherwise noted, all logs are to base $2$. $[n]$ denotes the set $\{1,2,\dotsc, n\}$.  The function $h(\cdot)$ is the binary entropy function $h(p) = -p\log p - (1-p)\log(1-p)$, defined for $p\in[0,1]$. More generally, for a random variable $\mathbf{X}$, we denote by $H(\mathbf{X})$ the entropy of $\mathbf{X}$. 

A binary {\em code} $C$ is a subset of $\{0,1\}^n$ for some integer $n$, the {\em block length}. The {\em rate} of a code $C$ is $R(C):=\log\abs{C}/n$. All references to codes ``of rate $R$'' implicitly mean codes of rate {\em at least} $R$; in particular, if $R<0$, we will assume the code is empty. 

In this work, all codes $C$ come equipped with an arbitrary encoding function which is a bijection between $[\abs{C}]$ and $C$ mapping a {\em message} $m$ to its {\em encoding} $x$. $C$ also admits a decoding function from $\{0,1\}^n\to [\abs{C}]$, the ``nearest neighbor decoder,'' which maps a string $y$ to the message $m$ whose encoding is closest to $y$ in Hamming distance, with ties broken arbitrarily. 

In defining channel capacities, we will think of codes as belonging to a {\em family} of codes (denoted $\mathcal{C}$). A family of codes is a sequence $(C_n\subseteq\{0,1\}^n)_{n\to\infty}$ of codes with growing block length $n$. We are interested in the behavior of the codes in the family as $n\to\infty$. For example, the {\em rate} of a family $\mathcal{C}$ is 
\[R(\mathcal{C}) := \liminf_{n\to\infty} R(C_n).\]
\end{notation}

The focus of this work is on determining the capacity of the {\em binary} adversarial wiretap channel. Although we believe that analogous results hold over larger alphabets, we have not checked this formally. 

\begin{defn} Let $\rho_r\in(0,1)$ and let $\rho_w\in(0,1/2)$. 
The {\bf $(\rho_r,\rho_w)$ adversarial wiretap channel (AWTC)} allows the adversary to read an arbitrary $\rho_r$ fraction of the transmitted codeword and introduce a $\rho_w$ fraction of errors  whose distribution depends only on the code and the symbols read by the adversary. 
\end{defn}

\begin{rmk} This is the model proposed by~\cite{WSN}, where it is also called an AWTC, and is the same as the ``active eavesdropper'' model of~\cite{wt2adv} when $\rho_r=\rho_w$. The adversary may use randomness both in choosing the locations to read and in introducing errors. 

In this definition, we restrict $\rho_r,\rho_w$ to be nonzero. As we will outline in Section~\ref{sec:prev}, the capacity has already been established when either parameter is zero. 
\end{rmk}
\medskip

We now define what it means for a code to achieve {\em secrecy} and {\em reliability} over the AWTC. We will require our codes to achieve both conditions. We will define both {\em weak} secrecy and the stronger notion of {\em semantic} secrecy. 
\smallskip

For a code $C\subseteq\{0,1\}^n$, we denote by $\mathbf{S}$ the random variable corresponding to the source message (distributed according to some distribution $P_{\mathbf{S}}$), and by $\mathbf{X}$ the random variable corresponding to $C$'s (possibly randomized) encoding of the message. We write $\hat{\mathbf{S}}$ for the output of the decoder upon receiving the corrupted version of $\mathbf{X}$. 

For a subset $\mathscr{S}\subseteq[n]$ of size $\rho_r n$ corresponding to the coordinates chosen by the adversary, denote by $\mathbf{V}(\mathscr{S})$ the view of the adversary after observing the coordinates in $\mathscr{S}$. That is, if $\mathscr{S}=\{i_1,\dotsc, i_{\rho_r n}\}\subseteq [n]$, and $x\sim \mathbf{X}$ is a codeword, then $\mathbf{V}(\mathscr{S})\in \{0,1,?\}^n$ is the string whose $i$th coordinate is $?$ if $i\notin \mathscr{S}$, and $x_i$ if $i\in \mathscr{S}$. We will also refer to $\mathscr{S}$ as the {\em support} of $\mathbf{V}(\mathscr{S})$. 

\begin{defn} Let $C\subseteq\{0,1\}^n$. The {\bf equivocation} $\Delta$ of the encoder is
\[\Delta(C) := \min_{\mathscr{S}:\abs{\mathscr{S}}=\rho_r n} H(\mathbf{S}| \mathbf{V}(\mathscr{S})). \]
\end{defn}

The quantity $H(\mathbf{S}| \mathbf{V}(\mathscr{S}))$ measures the uncertainty remaining after the coordinates of $\mathscr{S}$ have been observed, and higher $\Delta$ corresponds to higher secrecy (the adversary learns less about the encoded message).

\begin{defn}[Weak secrecy] \label{def:secrecy} 
Let $\mathcal{C}$ be a family of codes $(C_n\subseteq \{0,1\}^n)_{n\to\infty}$, and let the source distribution $P_{\mathbf{S}}$ be uniform. Let $\eta_n$ be the normalized equivocation of $C_n$; that is, 
\[\eta_n = \frac{1}{n} \Delta(C_n).\]

If $\eta_n$ approaches the rate $R(\mathcal{C})$ of $\mathcal{C}$ as $n\to\infty$, then we say that $\mathcal{C}$ achieves (asymptotic) {\bf weak secrecy}. 
\end{defn}

We now define semantic secrecy (security), via the equivalent notion of mutual-information security (see~\cite{semantic}). This notion of secrecy arises from dropping the assumption that messages are chosen uniformly from the message space. 

\begin{defn}[Semantic secrecy] \label{def:semantic} Let $\mathcal{C}$ be a family of codes $(C_n\subseteq \{0,1\}^n)_{n\to\infty}$. Define the semantic secrecy metric 
\[\mathrm{Sem}(C_n) = \max_{P_{\mathbf{S}}, \mathscr{S}} I(\mathbf{S}, \mathbf{V}(\mathscr{S})) = \max_{P_{\mathbf{S}},\mathscr{S}} D\bigl[P_{\mathbf{V}(\mathscr{S}) | \mathbf{S},C_n} \| P_{\mathbf{V}(\mathscr{S})| C_n} | P_{\mathbf{S}}\bigr], \]
where for two finite-support distributions $P$ and $Q$, $D(P\|Q)$ denotes the relative entropy
\[D(P\|Q) = \sum_{x\in \supp(P)} P(x) \log \frac{P(x)}{Q(x)}.\]

If $\mathrm{Sem}(C_n) = e^{-\Omega(n)}$, then we say that $\mathcal{C}$ achieves \bf{semantic secrecy}. 
\end{defn}

\begin{defn}[Reliability] Let $\mathcal{C}$ be a family of codes $(C_n\subseteq \{0,1\}^n)_{n\to\infty}$. Let $\delta_n$ be the average error probability of $C_n$; that is, 
\[\delta_n = \Pr[\hat{\mathbf{S}}\neq\mathbf{S}],\]
where the probability is taken over the randomness of the encoder and the error distribution introduced by the adversary. 

If $\delta_n\to 0$ as $n\to\infty$, then we say that $\mathcal{C}$ achieves (asymptotic) {\bf reliability}. 
\end{defn}

Let $\mathcal{C}$ be a code family of rate at least $R>0$ which achieves reliability. If $\mathcal{C}$ achieves weak secrecy, we say that $R$ is {\em achievable} under weak secrecy, and if $\mathcal{C}$ achieves semantic secrecy, we will say that $R$ is  achievable under semantic secrecy. 

\begin{defn} The {\bf weak secrecy capacity} of the $(\rho_r,\rho_w)$ AWTC is the supremum of achievable rates under weak secrecy, and the {\bf semantic secrecy capacity} is the supremum of achievable rates under semantic secrecy. 
\end{defn}

Our main result is the following. 
\begin{thm} \label{thm:main} Let $C_s(\rho_r,\rho_w)$ be the semantic secrecy capacity of the $(\rho_r,\rho_w)$ AWTC. Then
\[ \max\bigl(1 - h(\rho_w) - \rho_r,0\bigr)\leq C_s(\rho_r,\rho_w) \leq 1 - h(\rho_w) - \rho_r - \min_p f(p),\]
where $f(p) = h\bigl( (2\rho_w-1)p + 1 - \rho_w\bigr) - h(\rho_w) - \rho_r h(p).$

In fact, the upper bound holds under weak secrecy. 
\end{thm}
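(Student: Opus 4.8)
The natural approach is random coding with a stochastic encoder that appends pseudorandom noise. I would fix the rate $R = 1 - h(\rho_w) - \rho_r - \epsilon$ and build a code with roughly $2^{n(R + \rho_r + \epsilon/2)}$ codewords, partitioned into $2^{nR}$ bins of size $2^{n(\rho_r + \epsilon/2)}$; the encoder maps a message to a uniformly random codeword in its bin. Draw all codewords independently and uniformly from $\{0,1\}^n$. For \emph{secrecy}, I would argue that for any fixed read-set $\mathscr{S}$ of size $\rho_r n$ and any message distribution, the view $\mathbf{V}(\mathscr{S})$ is nearly independent of the message: since each bin contains $\gg 2^{\rho_r n}$ uniformly random strings, the projection of a random bin onto the $\rho_r n$ coordinates of $\mathscr{S}$ is statistically close to uniform on $\{0,1\}^{\rho_r n}$ with overwhelming probability over the code (a standard soft-covering / leftover-hash argument, e.g.\ via the expurgation technique of~\cite{semantic} or the channel-resolvability bounds there). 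A union bound over the $\binom{n}{\rho_r n}$ choices of $\mathscr{S}$ — only $2^{o(n)}$ many up to the $\epsilon/2$ slack in the exponent — shows $\mathrm{Sem}(C_n) = e^{-\Omega(n)}$ for most codes.

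**The reliability half and the main obstacle.** The hard part is decoding: the adversary adds a $\rho_w$-fraction of errors \emph{as a function of the $\rho_r n$ symbols it read}, and the naive list-decoding radius for an arbitrary $\rho_w$-fraction of errors only supports rate $1 - h(2\rho_w)$. To get $1 - h(\rho_w)$ I would invoke the myopic-adversary framework of~\cite{myopic}: the key claim is that, conditioned on the $\rho_r n$ symbols the adversary sees, the transmitted codeword is still essentially uniform on an exponentially large set, so the adversary's error vector is (close to) independent of the actual codeword. Concretely, I would show that for a random code, with high probability every codeword is ``isolated'' — for each codeword $x$, the number of other codewords agreeing with $x$ on any fixed $\rho_r n$-subset $\mathscr{S}$ is still exponential but the ball of radius $\rho_w n$ around $x + e$ (for any $e$ the adversary might pick based on $\mathbf{V}(\mathscr{S})$) contains no other codeword. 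This is the place where the counting is delicate: one conditions on the read positions, uses that the error pattern lies in a set of size $\le 2^{n h(\rho_w)}$, and shows by a union bound over codewords, over the $2^{o(n)}$ read-sets, and over the $2^{n h(\rho_w)}$ possible error patterns that nearest-neighbor decoding succeeds provided $R + \rho_r + h(\rho_w) < 1$, i.e.\ $R < 1 - h(\rho_w) - \rho_r$. The subtlety — and what I'd expect to spend the most effort on — is handling the adversary's randomness and the dependence of $e$ on the code itself, which is exactly the obstacle~\cite{myopic} overcomes; I would adapt their argument to incorporate the extra $\rho_r n$ revealed symbols as additional conditioning that costs an extra $\rho_r$ in the exponent.

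**The upper bound (converse).** For the converse I would use the ``random AWTC'' comparison suggested by the reference to Theorem~\ref{thm:rand-cap}: against an adversary who reads a uniformly random $\rho_r n$-subset and then, instead of adversarial errors, flips each of some $\rho_w n$ chosen coordinates according to an optimized noise distribution — specifically, independently flipping with probability $p$ inside the read set and with a complementary rate outside — any reliable and weakly-secret code must have rate bounded by the mutual information achievable on the \emph{worst} such memoryless degradation. Computing the capacity of this auxiliary wiretap channel, one gets the receiver's capacity $1 - h\bigl((2\rho_w-1)p + 1 - \rho_w\bigr)$ (the effective crossover probability after the adversary's $p$-biased additions) minus the secrecy cost $h(\rho_w) - \rho_r h(p)$ leaked through the read coordinates, and maximizing the adversary's advantage over $p$ gives $1 - h(\rho_w) - \rho_r - \min_p f(p)$. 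The steps are: (i) show any AWTC code works against this particular randomized adversary; (ii) write the standard Fano-plus-secrecy inequality $nR \le I(\mathbf{S};\mathbf{Y}) - I(\mathbf{S};\mathbf{V}(\mathscr{S})) + o(n)$ with $\mathbf{Y}$ the receiver's word; (iii) single-letterize and evaluate for the chosen noise model; (iv) optimize over $p \in [0,1]$. I would expect step (iii), the single-letterization in the presence of the adversary's chosen error \emph{positions} (as opposed to i.i.d.\ noise), to need the usual averaging-over-a-random-permutation trick to reduce to the i.i.d.\ case, after which the computation of $f$ is routine.
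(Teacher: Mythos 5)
Your overall strategy matches the paper's: a random stochastic code with binning for achievability, the myopic-adversary framework of~\cite{myopic} for reliability, soft-covering for secrecy, and a reduction to a random wiretap channel for the converse. But there are two genuine gaps in your sketch.

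\textbf{Reliability.} Your claim that with high probability ``every codeword is isolated,'' i.e., for every codeword $x$, every read-set $\mathscr{S}$, and every error $e$ the adversary might pick from $\mathbf{V}(\mathscr{S})$, the ball $B_{\rho_w n}(x+e)$ contains no other codeword, is false at the rates in question, and the simple union bound you propose cannot establish it. The total code rate is $1-h(\rho_w)-\Theta(\epsilon)$, which is well above the unique-decoding rate $1-h(2\rho_w)$: for a random code at this rate, the probability that a fixed $(x,e)$ produces a conflicting codeword is roughly $2^{-\Theta(\epsilon n)}$, while the number of $(x,e)$ pairs is $2^{(1-\Theta(\epsilon))n}$, so the union bound fails by an exponential margin (and in fact exponentially many conflicting pairs will exist). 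The correct statement, which is what the paper proves, is probabilistic over the transmitted codeword: for every view $V$ and every error $e$, only an exponentially small \emph{fraction} of codewords consistent with $V$ result in a decoding error. Showing this requires the list-decodability corollary and, crucially, the row-and-column array decomposition (Lemma~\ref{lem:internal}) to handle conflicts among codewords in the same consistent set $S_r$, because these are not independent of one another. You correctly name~\cite{myopic} as the source of the needed technique and acknowledge that ``the adversary's error vector is (close to) independent of the actual codeword'' is the right intuition, but the deterministic isolation claim in your sketch is the wrong instantiation of that idea. Relatedly, the $\rho_r$ loss in the final rate comes entirely from the binning needed for secrecy, not from reliability; the paper's reliability theorem (Theorem~\ref{thm:reliability}) holds at rate $1-h(\rho_w)-\epsilon$ with $\rho_r$ entering only as a constraint ensuring $C|_V$ is exponentially large, so your statement that the revealed symbols ``cost an extra $\rho_r$ in the exponent'' of the reliability analysis is a misattribution.

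\textbf{Converse.} Your reduction to a randomized adversary is the right move and matches the paper's Theorem~\ref{thm:rand-red}, but your account of where $f(p)$ comes from is incorrect. The paper takes the adversary to read positions through a $\mathrm{BEC}(1-\rho_r)$ and corrupt through a $\mathrm{BSC}(\rho_w)$ (with a small $\xi$ slack absorbed by Chernoff), and then simply cites~\cite{OU13} for the capacity of the resulting wiretap channel. In that formula, $p$ parameterizes the \emph{encoder's} channel-prefixing auxiliary variable $V$ in $\max_{V-X-Y}[I(V;Y)-\rho_r I(V;X)]^+$, and the $\min_p f(p)$ term appears because the encoder optimizes over $p$ to its advantage (indeed $\min_p f(p)\le f(0)=0$, so $-\min_p f(p)\ge 0$ raises the bound). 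Your proposal treats $p$ as an adversary-chosen flipping bias inside the read set and frames the optimization as ``maximizing the adversary's advantage,'' which is backwards. Your proposed adversary strategy (independent $p$-biased flips inside the read set, ``complementary rate'' outside) also does not obviously respect the hard weight constraint $\rho_w n$ without a concentration argument, and in any case is more complicated than what is needed; the paper's BSC/BEC adversary is cleaner and the heavy lifting is delegated to~\cite{OU13}.
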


As there is a positive gap between our upper and lower bounds, some comparisons are provided in Figures~\ref{fig:bds} and~\ref{fig:ratio}. Although the bounds become far apart as $\rho_r$ approaches $1-h(\rho_w)$, because our lower bound approaches $0$ even as the capacity remains positive, when $\rho_r$ is small compared to $\rho_w$, we see that the two quantities are very close. 

\begin{figure}
\begin{center}
\includegraphics[width=4in]{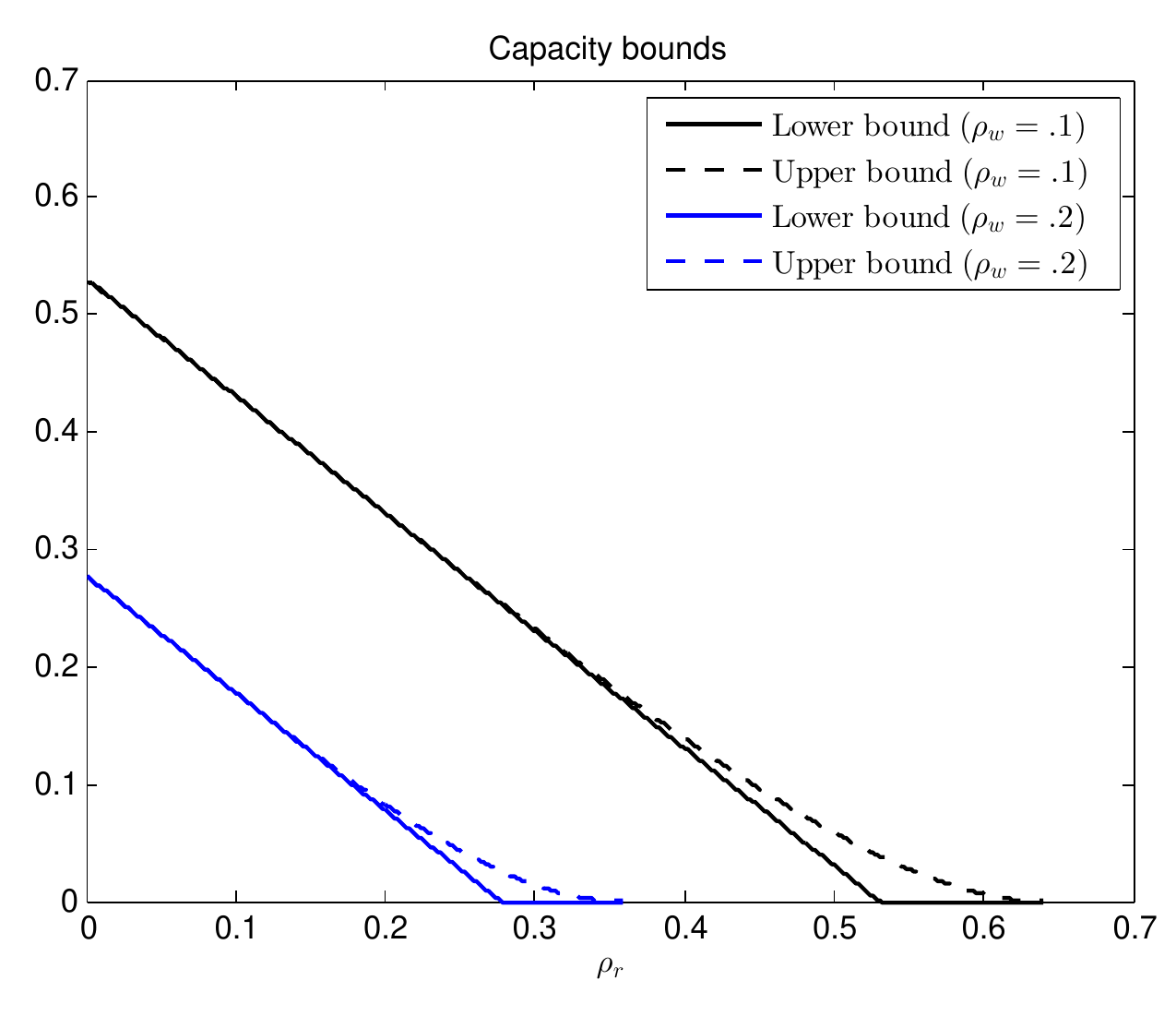}

\caption{\label{fig:bds}A comparison of our upper and lower bounds for fixed values of $\rho_w$. Note that the secrecy capacity is equal to zero when $\rho_r> 1 - 4\rho_w(1-\rho_w)$ (see~\cite{OU13}).}

\includegraphics[width=4in]{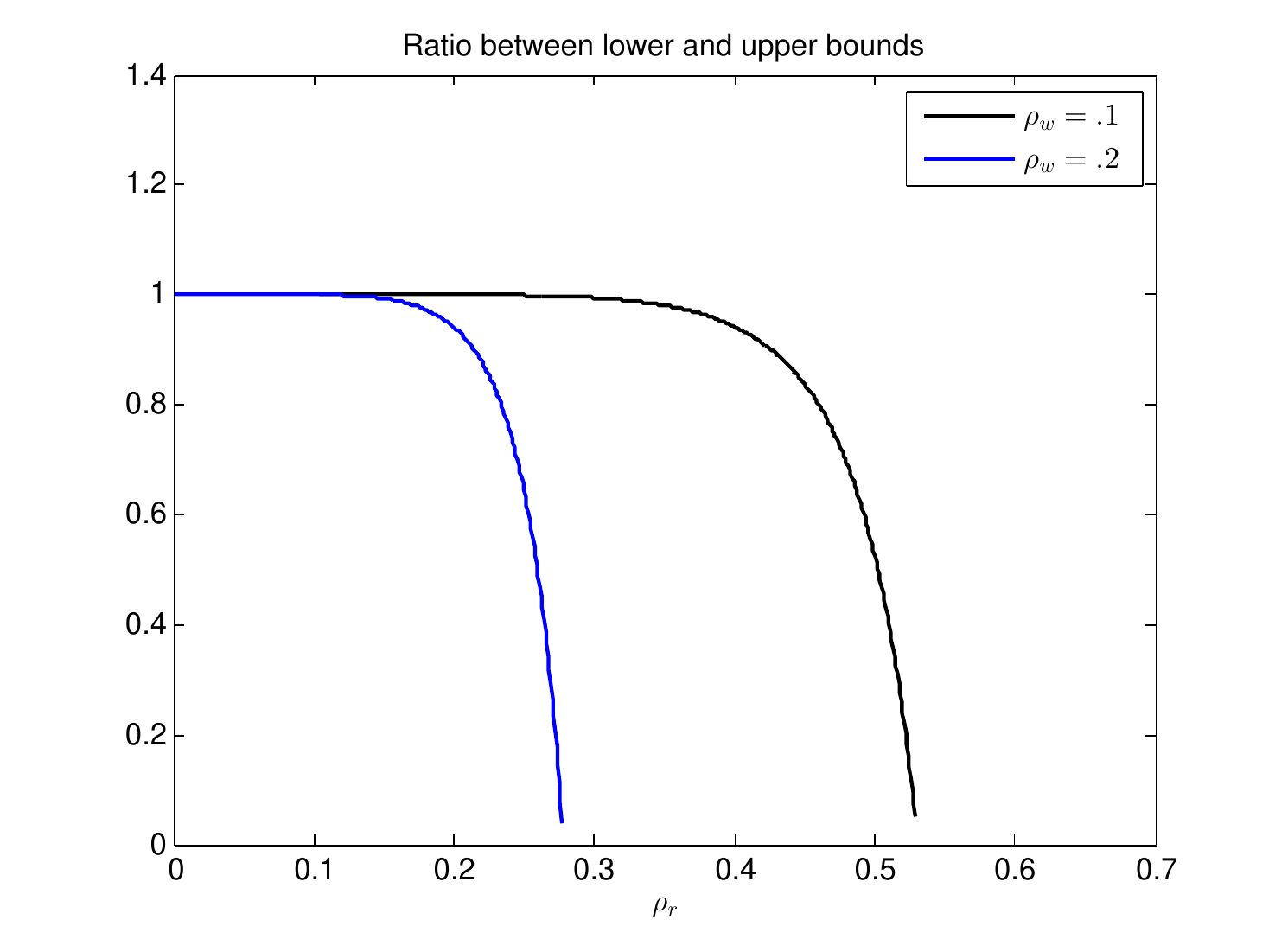}

\caption{\label{fig:ratio}The ratio between our lower and upper bounds, for fixed values of $\rho_w$. Note that our achievable lower bound is negative when $\rho_r > 1 - h(\rho_w)$. However, the two bounds are quite close when $\rho_r$ is small compared to $1-h(\rho_w)$. }
\end{center}
\end{figure}

\subsection{Previous work}
\label{sec:prev}

The (non-adversarial) wiretap channel has been the subject of a long line of work. Here we will focus only on variants which incorporate some sort of adversarial behavior. 
\medskip

\noindent {\bf Limited-view Adversaries and the AWTC.} The terminology ``adversarial wiretap channel'' for this model was introduced in~\cite{WSN}. The term ``limited-view adversary'' was also used for the same model in earlier work by the same authors. They give an upper bound on the capacity of the AWTP, and give an explicit construction of codes which meet this capacity over sufficiently large alphabets. 

\begin{rethm}[\cite{WSN}]  The semantic secrecy capacity of the $(\rho_r,\rho_w)$ adversarial wiretap channel is at most $1-\rho_r-\rho_w$. 

Moreover, when the alphabet size is $\abs{\Sigma}= \exp(\Omega(1/\epsilon^2))$, there is an explicit code of rate $1-\rho_r-\rho_w-\epsilon$ which achieves semantic secrecy and reliability over the $(\rho_r,\rho_w)$ AWTC. 
\end{rethm}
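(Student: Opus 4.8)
The statement has two essentially independent halves, which I would prove separately.

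\emph{The converse $C_s(\rho_r,\rho_w)\le 1-\rho_r-\rho_w$.} It suffices to prove this assuming only weak secrecy, since any semantically secret family is also weakly secret. Let $\mathcal C=(C_n)$ have rate $R$ and achieve reliability and weak secrecy, and assume first $\rho_r+\rho_w\le 1$. Fix disjoint $\mathscr T,\mathscr S\subseteq[n]$ with $\abs{\mathscr T}=\rho_w n$, $\abs{\mathscr S}=\rho_r n$, and consider the adversary that reads $\mathscr S$ and, ignoring what it read, adds a fresh uniformly random symbol on each coordinate of $\mathscr T$; this is a legal strategy of weight $\le\rho_w n$ whose error law depends only on the code. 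Convolving with a uniform symbol erases all information, so the receiver's output $\mathbf Y=(\mathbf X_{\bar{\mathscr T}},\mathbf Z)$ has $\mathbf Z$ independent of $(\mathbf S,\mathbf X_{\bar{\mathscr T}})$, whence $H(\mathbf S\mid\mathbf Y)=H(\mathbf S\mid\mathbf X_{\bar{\mathscr T}})$, and this is $o(n)$ by reliability and Fano. Writing $\bar{\mathscr T}=\mathscr S\sqcup\mathscr U$ with $\abs{\mathscr U}=(1-\rho_r-\rho_w)n$ and applying the chain rule, $o(n)=H(\mathbf S\mid\mathbf X_{\mathscr S},\mathbf X_{\mathscr U})\ge H(\mathbf S\mid\mathbf X_{\mathscr S})-H(\mathbf X_{\mathscr U})\ge (Rn-o(n))-(1-\rho_r-\rho_w)n\log\abs{\Sigma}$, the bound $H(\mathbf S\mid\mathbf X_{\mathscr S})\ge Rn-o(n)$ being exactly the weak-secrecy hypothesis (note $\mathbf V(\mathscr S)$ carries the same information as $\mathbf X_{\mathscr S}$). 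Rearranging yields $R\le(1-\rho_r-\rho_w)\log\abs{\Sigma}+o(1)$, i.e.\ $R\le 1-\rho_r-\rho_w+o(1)$ in $\abs{\Sigma}$-ary rate units, so the supremum of achievable rates is at most $1-\rho_r-\rho_w$. (If $\rho_r+\rho_w>1$ the same adversary gives $\bar{\mathscr T}\subseteq\mathscr S$ and forces $R=o(1)$.)

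\emph{The explicit construction.} I would use a three-layer scheme over an alphabet $\Sigma$ with $\abs{\Sigma}=\exp(\Theta(1/\epsilon^2))$. Layer one is a capacity-approaching \emph{list-decodable} code $C_{\mathrm{ld}}$ over $\Sigma$ of rate $1-\rho_w-\epsilon/2$ that list-decodes a $\rho_w$ fraction of errors with list size $L=n^{O(1)}$ — a folded Reed--Solomon code or a constant-alphabet variant (this is where the alphabet size comes from), chosen linear over a subfield so that coset-coding arguments apply, and carrying the Reed--Solomon-type (near-MDS, large dual distance) structure needed below. Layer two provides \emph{secrecy}: the information word fed to $C_{\mathrm{ld}}$ reserves a block of $(\rho_r+o(1))n$ symbols for fresh uniform randomness, placed among the coefficients so that for \emph{every} coordinate set $\mathscr S$ of size $\rho_r n$ the map "randomness $\mapsto$ $\mathscr S$-restriction of the codeword" is balanced; hence, given the message and the layer-three seed, the adversary's $\rho_r n$-symbol view is uniform and therefore statistically independent of the message. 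Layer three handles \emph{disambiguation}: sample an independent short seed $s$, append a keyed almost-universal hash $\sigma=\mathrm{hash}_s(m)$ of $\omega(\log n)$ symbols to the information word, and transmit $s$ separately, protected by an $o(n)$-length repetition code (revealing $s$ is harmless, since $s$ is independent of $m$). The message length is $(1-\rho_w-\rho_r-\epsilon)n$, giving rate $1-\rho_w-\rho_r-\epsilon$; the decoder recovers $s$, list-decodes $C_{\mathrm{ld}}$ on the received word, and outputs the unique list element whose message and hash are $s$-consistent.

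\emph{Analysis and the main obstacle.} Secrecy is immediate from Layer two: for any message distribution and any $\mathscr S$, the adversary's view $(s,\text{codeword}_{\mathscr S})$ has distribution independent of $\mathbf S$, so $\mathrm{Sem}(C_n)=e^{-\Omega(n)}$ (the slack coming only from almost-universality and statistical closeness; with an exactly balanced layer it is $0$). Reliability is the hard part: one must show the adversary cannot steer its $\rho_w n$ errors so that the decoding list contains a codeword carrying a \emph{wrong} message that still passes the hash check. The leverage is that the adversary's error pattern is a function only of its view, which Layer two made independent of the message and the hash; so, conditioned on that view and on the error, the true codeword is still (nearly) uniform over an affine subspace of dimension $\ge(1-\rho_r-\rho_w-\epsilon)n$, while the wrong-message, hash-consistent codewords form a union of cosets of a \emph{low-rate} subcode of $C_{\mathrm{ld}}$ — and, with the coefficient placement chosen correctly, that subcode is itself a (scaled, folded) Reed--Solomon code, hence list-decodable at radius $\rho_w$. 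Intersecting a Hamming ball of radius $\rho_w n$ with such a coset is therefore governed by that subcode's small list size, not by the useless ball-volume bound, so a union bound over the poly-size list and over wrong messages, using $\abs{\sigma}=\omega(\log n)$, makes a surviving bad candidate occur with probability $n^{-\omega(1)}$ and $\delta_n\to 0$. Getting this counting to close — arranging the randomness block and subcode structure so the relevant cosets are list-decodable at radius $\rho_w$, and controlling the statistical-independence slack from Layer two — is the technical heart of the argument; the rest is bookkeeping.
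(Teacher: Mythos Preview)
This theorem is not proved in the present paper; it is quoted from \cite{WSN} in the previous-work survey, and the only information given here about the argument is the remark that the \cite{WSN} construction makes the adversary's view \emph{exactly} uniform via MDS codes, together with the sentence in Section~\ref{sec:conc} that it ``pairs folded Reed--Solomon codes, which are optimally list-decodable, with explicit `algebraic manipulation detection' codes.'' So there is no proof here to compare against line by line; I can only compare your proposal to that sketch.

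Your converse is correct and is essentially the standard one: randomize the $\rho_w n$ writable coordinates to simulate erasures there, combine Fano on the receiver's side with the secrecy lower bound on $H(\mathbf S\mid\mathbf X_{\mathscr S})$, and absorb the remaining $(1-\rho_r-\rho_w)n$ coordinates via $H(\mathbf X_{\mathscr U})\le(1-\rho_r-\rho_w)n\log\abs{\Sigma}$.

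Your construction matches \cite{WSN} in Layers~1 and~2 (an optimally list-decodable folded RS code, plus MDS-style placement of fresh randomness so that every $\rho_r n$-view is exactly uniform). Layer~3 is where you diverge: \cite{WSN} uses an AMD code, whereas you use a keyed almost-universal hash whose seed $s$ is sent in the clear. This is not merely a stylistic difference. Because the adversary can read $s$ before choosing the error, the almost-universal collision bound no longer applies directly: the ``bad'' list elements the adversary is aiming for are now a deterministic function of $s$, and your reliability argument has to fall back on a structural claim --- that the hash-consistent wrong-message codewords form cosets of a subcode which is itself list-decodable at radius $\rho_w$ --- which you state but do not establish (which coefficient placement gives this, and why the adversary cannot align a target coset with the residual uncertainty in $c$, are exactly the points you label ``the technical heart''). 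An AMD code sidesteps all of this: it is keyless, and by definition any \emph{fixed} additive offset on the information symbols --- which is all a view-independent error can induce once Layer~2 makes the view independent of the payload --- is detected with high probability. Swapping your Layer~3 for an AMD code would bring your outline in line with \cite{WSN} and close the gap without the extra structural hypotheses.
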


\begin{rmk} The construction of~\cite{WSN} not only achieves semantic secrecy, but in fact the stronger condition that for every message and choice of $\rho_r n$ coordinates, the distribution of the adversary's view is exactly uniform. This requires maximum distance separable (MDS) codes, which do not exist over binary alphabets. 
\end{rmk}

\medskip 
\noindent {\bf Wiretapping and Active Adversaries.} The authors of~\cite{wt2adv} consider the wiretap channel II model in which the adversary may also modify the bits which have been read. They then show that a suitable random code can be used to achieve (weak) secrecy and reliability. In fact, their construction also works without the restriction that the adversary modifies the same bits which he reads, so we may conclude the following. 

\begin{rethm}[\cite{wt2adv}] For any $\epsilon>0$ and $\rho_w<1/4$, there exist codes of rate 
\[1 - h(2\rho_w) - \rho_r-\epsilon\]
which achieve weak secrecy and reliability over the $(\rho_r,\rho_w)$ AWTC. 
\end{rethm}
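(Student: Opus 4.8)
The plan is to use a random code equipped with a random binning, in the style of Wyner's wiretap coding, so that minimum-distance considerations handle reliability while a channel-resolvability (soft-covering) argument handles secrecy. Fix $\epsilon>0$ and $\rho_w<1/4$ (so that $2\rho_w<1/2$), and suppose $R:=1-h(2\rho_w)-\rho_r-\epsilon>0$, since otherwise there is nothing to prove. Set $\kappa:=\rho_r+\epsilon/3$ and $R':=R+\kappa=1-h(2\rho_w)-2\epsilon/3$, so that $R'<1-h(2\rho_w)$. Construct $C^+\subseteq\{0,1\}^n$ by choosing $2^{R'n}$ codewords independently and uniformly at random, and group them (by their indices) into $2^{Rn}$ bins $\mathcal B_1,\dots,\mathcal B_{2^{Rn}}$ of equal size $2^{\kappa n}$. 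To encode a message $m$, transmit a uniformly random element of $\mathcal B_m$; the decoder outputs the index of the bin containing the codeword of $C^+$ nearest in Hamming distance $d$ to the received word (ties broken arbitrarily). This is pessimistic for the $(\rho_r,\rho_w)$-adversary: the analysis below uses only that the error vector has weight at most $\rho_w n$, discarding the restriction on what the adversary may read.

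For reliability, call $x\in C^+$ \emph{lonely} if no other codeword of $C^+$ lies within distance $2\rho_w n$ of it. If $x$ is lonely and the received word is $y=x+e$ with $\mathrm{wt}(e)\le\rho_w n$, then $d(x',y)\ge d(x',x)-d(x,y)>2\rho_w n-\rho_w n\ge d(x,y)$ for every $x'\in C^+\setminus\{x\}$, so $x$ is the unique closest codeword and decoding succeeds --- for \emph{every} admissible $e$, regardless of what the adversary read or chose. Since the transmitted codeword is uniform on $C^+$, the average error probability is at most the fraction of non-lonely codewords, whose expectation over the random code is at most $|C^+|\cdot 2^{h(2\rho_w)n}/2^n\le 2^{(R'+h(2\rho_w)-1)n}=2^{-2\epsilon n/3}$ (a Hamming ball of radius $2\rho_w n\le n/2$ has volume at most $2^{h(2\rho_w)n}$). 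By Markov's inequality, with probability $1-2^{-\Omega(n)}$ over the code this fraction is $2^{-\Omega(n)}$, hence $\delta_n=2^{-\Omega(n)}$; note that no union bound over adversary strategies is needed, since loneliness alone certifies correct decoding.

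For secrecy, fix a read-set $\mathscr S$ with $|\mathscr S|=\rho_r n$ and a message $m$: conditioned on $\mathbf S=m$, the view $\mathbf V(\mathscr S)$ is distributed as $X|_{\mathscr S}$ for $X$ uniform in $\mathcal B_m$, i.e.\ as the empirical distribution of $N:=2^{\kappa n}$ i.i.d.\ uniform samples from $\{0,1\}^{\rho_r n}$. Because $N/2^{\rho_r n}=2^{(\kappa-\rho_r)n}=2^{\epsilon n/3}$, a multiplicative Chernoff bound shows that, for each value $w\in\{0,1\}^{\rho_r n}$, the number of samples equal to $w$ deviates from its mean $2^{\epsilon n/3}$ by more than a factor $2^{-\epsilon n/12}$ with probability at most $\exp(-2^{\Omega(n)})$. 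Union-bounding over the $2^{\rho_r n}$ values of $w$, the $2^{Rn}$ messages, and the at most $2^n$ choices of $\mathscr S$ --- all only singly exponential, hence dominated by the doubly-exponential Chernoff estimate --- we obtain that with probability $1-2^{-\Omega(n)}$ the code has the property that $P_{\mathbf V(\mathscr S)\mid\mathbf S=m}$ lies within $2^{-\Omega(n)}$ total variation distance of the uniform distribution on $\{0,1\}^{\rho_r n}$ for \emph{every} $\mathscr S$ and $m$. By continuity of entropy, $H(\mathbf V(\mathscr S)\mid\mathbf S=m)\ge\rho_r n-o(1)$ for every $m$, while $H(\mathbf V(\mathscr S))\le\rho_r n$; hence $I(\mathbf S;\mathbf V(\mathscr S))=o(1)$ for every $\mathscr S$, so with $\mathbf S$ uniform we get $\Delta(C_n)\ge H(\mathbf S)-o(1)=Rn-o(1)$ and $\eta_n\to R$, i.e.\ weak secrecy.

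Finally, the reliability-good and secrecy-good events each hold with probability $1-2^{-\Omega(n)}$, so for all large $n$ a single code satisfies both, and the resulting family has (message) rate $R=1-h(2\rho_w)-\rho_r-\epsilon$ while achieving reliability and weak secrecy over the $(\rho_r,\rho_w)$ AWTC. I expect the only delicate point to be the secrecy step: it is essential that the binning rate $\kappa$ be strictly larger than $\rho_r$, so that each atom of the conditional view distribution collects $2^{\Omega(n)}$ independent samples and the concentration failure probability is doubly-exponentially small --- small enough to survive the union bound over all $\binom{n}{\rho_r n}$ read-sets simultaneously. The reliability analysis, by contrast, is routine volume counting; the price paid for it is the weaker rate $1-h(2\rho_w)$ rather than $1-h(\rho_w)$, reflecting that this argument insists the code correct an arbitrary (rather than limited-view) $\rho_w$-fraction of errors.
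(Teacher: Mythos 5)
Your proof is correct. Both halves are sound: the ``lonely codeword'' volume-counting argument correctly establishes that a random code of rate $1-h(2\rho_w)-2\epsilon/3$ corrects \emph{every} error pattern of weight at most $\rho_w n$ with vanishing (in fact exponentially small) average error probability, and the observation that loneliness certifies correct decoding against any adversarial error obviates any union bound over adversary strategies. The Chernoff/union-bound argument correctly shows that, because each bin contains $2^{\kappa n}$ i.i.d.\ uniform codewords with $\kappa>\rho_r$, the conditional view distribution given each message is uniformly close to the uniform distribution on $\{0,1\}^{\rho_r n}$, and the passage from total-variation closeness to equivocation $\Delta\geq Rn-o(1)$ via continuity of entropy is fine.

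Where you diverge from the route described in the paper is in the \emph{secrecy} half. The paper, following~\cite{wt2} (and crediting the same approach to~\cite{wt2adv}), bounds the equivocation by the chain $\Delta\geq Rn-\rho_r n-H(\mathbf X|\mathbf S,\mathbf V)$ and then uses a counting argument (Lemma~\ref{lem:secrecy}) to show $H(\mathbf X|\mathbf S,\mathbf V)\leq\log L$ for a polynomially bounded $L$; that approach takes the bin-size parameter $\ell/n$ \emph{slightly below} $\rho_r$. You instead use a soft-covering/channel-resolvability argument --- concentration of the empirical conditional view distribution to uniform --- which requires the bin rate $\kappa$ to be \emph{strictly above} $\rho_r$ so that every atom accumulates exponentially many samples and the failure probability is doubly exponentially small, surviving the union bound over all read-sets. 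Both give the same asymptotic rate; the counting approach admits a marginally better finite-$n$ rate, while your soft-covering approach is cleaner, yields exponentially small information leakage, and (as the paper does separately in Section~\ref{sec:secrecy-proof} via Lemma~\ref{lem:soft}) in fact extends to semantic secrecy. Your final sentence correctly identifies why this route only gets $1-h(2\rho_w)$ rather than $1-h(\rho_w)$: the reliability analysis makes no use of the adversary's limited view, which is precisely the gap the paper's main theorem closes.
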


As stated in the introduction, the difference between the $1-h(2\rho_w)-\rho_r$ and our achievable rate arises because the construction of~\cite{wt2adv} uses a binary code which corrects {\em any} $\rho_w$ fraction of errors, for which $1-h(2\rho_w)$ represents the best known achievable rate. The crux of our work is to show that errors introduced by a ``limited-view'' adversary behave more like random errors, for which $1-h(\rho_w)$ is the optimal rate. 
\medskip

\noindent {\bf The Wiretap Channel II.} 
When the adversary may read any $\rho_r$ fraction of the transmitted codeword, but does not inject any errors ($\rho_w=0$), this model is known as the {\em wiretap channel II}, due to Ozarow and Wyner~(\cite{wt2}). In this setting, we have the following result (a matching code construction is also presented). 

\begin{rethm}[\cite{wt2}] For any code of rate $R$ which achieves weak secrecy on the $(\rho_r,0)$ adversarial wiretap channel, we have
\[\rho_r\leq (1-R) + R h(\delta),\]
where $\delta$ is the error of the decoder. 

In particular, a code family of rate $R$ which achieves weak secrecy {\em and} reliability has $R\leq 1-\rho_r$. 
\end{rethm}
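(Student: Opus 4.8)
The plan is the classical entropy converse for Wiretap Channel~II: combine Fano's inequality (to say the legitimate receiver's output pins down the message) with the weak-secrecy hypothesis (to say the eavesdropper's view does not), and count bits. Fix a code $C_n\subseteq\{0,1\}^n$ of rate $R$ in the family, with uniform source $\mathbf{S}$ (so $H(\mathbf{S})=\log\abs{C_n}=Rn$) and (possibly randomized) encoding $\mathbf{X}$. Since $\rho_w=0$ the channel to the receiver is noiseless, so the receiver obtains $\mathbf{X}$ itself and $\hat{\mathbf{S}}$ is a \emph{deterministic} function of $\mathbf{X}$. Writing $\delta=\delta_n$ for the decoding error, the fact that $\hat{\mathbf{S}}$ is a function of $\mathbf{X}$ gives $H(\mathbf{S}\mid\mathbf{X})\le H(\mathbf{S}\mid\hat{\mathbf{S}})$, and Fano's inequality then yields $H(\mathbf{S}\mid\mathbf{X})\le h(\delta)+\delta\log(\abs{C_n}-1)\le h(\delta)+\delta Rn$.

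Next I would upper-bound the eavesdropper's residual uncertainty. Let $\mathscr{S}$ be a worst-case set of $\rho_r n$ read coordinates, so that $H(\mathbf{S}\mid\mathbf{V}(\mathscr{S}))=\Delta(C_n)$; note $\mathbf{V}(\mathscr{S})$ and the restriction $\mathbf{X}_{\mathscr{S}}$ carry the same information. Adjoining a variable only increases entropy, and the chain rule together with $(\mathbf{X}_{\mathscr{S}},\mathbf{X}_{[n]\setminus\mathscr{S}})=\mathbf{X}$ and the bound $H(\mathbf{X}_{[n]\setminus\mathscr{S}}\mid\mathbf{X}_{\mathscr{S}})\le(1-\rho_r)n$ (a binary string of length $(1-\rho_r)n$) give
\[ \Delta(C_n)=H(\mathbf{S}\mid\mathbf{X}_{\mathscr{S}})\le H(\mathbf{S},\mathbf{X}_{[n]\setminus\mathscr{S}}\mid\mathbf{X}_{\mathscr{S}})=H(\mathbf{X}_{[n]\setminus\mathscr{S}}\mid\mathbf{X}_{\mathscr{S}})+H(\mathbf{S}\mid\mathbf{X})\le(1-\rho_r)n+H(\mathbf{S}\mid\mathbf{X}). \]
Combined with the Fano bound, $\Delta(C_n)\le(1-\rho_r)n+h(\delta)+\delta Rn$.

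To finish, I would feed in the secrecy hypothesis: weak secrecy means the normalized equivocation $\eta_n=\Delta(C_n)/n$ tends to $R$, so $\Delta(C_n)\ge(R-o(1))n$. Substituting and dividing by $n$ yields $R-o(1)\le(1-\rho_r)+\delta R+h(\delta)/n$, i.e.\ $\rho_r\le(1-R)+\delta R+o(1)$; since $\delta\le h(\delta)$ on $[0,1/2]$ this is the stated bound $\rho_r\le(1-R)+Rh(\delta)$ up to the vanishing term (harmless for a code family). The ``in particular'' clause is then immediate: if the family is also reliable, $\delta_n\to 0$, hence $h(\delta_n)\to 0$, and the bound collapses to $R\le 1-\rho_r$.

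I do not expect a deep obstacle here — the per-code inequalities above are exact — so the only care needed is bookkeeping with the asymptotic definitions: passing to the $n\to\infty$ behaviour of the family, and observing that a randomized choice of read-set by the adversary is dominated by the worst-case deterministic $\mathscr{S}$ appearing in $\Delta(C_n)$, so nothing is lost by fixing a single $\mathscr{S}$.
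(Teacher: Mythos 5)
The paper does not prove this statement; it is quoted in Section~\ref{sec:prev} as a prior result of Ozarow and Wyner (\cite{wt2}) without any argument. So there is no in-paper proof to compare against, but your reconstruction is the standard entropy converse for Wiretap Channel~II, and it is correct.

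Your chain $\Delta(C_n)=H(\mathbf{S}\mid\mathbf{X}_{\mathscr{S}})\le H(\mathbf{X}_{[n]\setminus\mathscr{S}}\mid\mathbf{X}_{\mathscr{S}})+H(\mathbf{S}\mid\mathbf{X})\le(1-\rho_r)n+H(\mathbf{S}\mid\mathbf{X})$ is exactly the Ozarow--Wyner decomposition, and combining it with the secrecy hypothesis $\Delta(C_n)\ge(R-o(1))n$ and a Fano bound on $H(\mathbf{S}\mid\mathbf{X})$ is the right structure. The only place you diverge from the source is the form of Fano's inequality. You apply block Fano, $H(\mathbf{S}\mid\hat{\mathbf{S}})\le h(\delta)+\delta\log(\abs{C_n}-1)$ with $\delta$ the block error probability, which gives $\rho_r\le(1-R)+R\delta+h(\delta)/n+o(1)$ and then you round up via $\delta\le h(\delta)$. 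The cleaner way to land on the literal bound $\rho_r\le(1-R)+Rh(\delta)$ with no $o(1)$ slack is to take $\delta$ to be the \emph{per-message-bit} error rate (as Ozarow and Wyner do): writing the $Rn$-bit message as $\mathbf{S}=(S_1,\dots,S_{Rn})$ and using $H(\mathbf{S}\mid\hat{\mathbf{S}})\le\sum_i H(S_i\mid\hat S_i)\le Rn\,h(\delta)$ by per-bit Fano and concavity of $h$ gives the displayed inequality directly, without invoking weak secrecy asymptotically. This is a cosmetic difference; your version actually yields a marginally tighter asymptotic bound, and both give the same ``in particular'' conclusion $R\le 1-\rho_r$ once $\delta_n\to 0$. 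Two smaller points worth stating explicitly: the equality $H(\mathbf{S}\mid\mathbf{V}(\mathscr{S}))=H(\mathbf{S}\mid\mathbf{X}_{\mathscr{S}})$ relies on $\mathscr{S}$ being fixed (which you are entitled to assume, since $\Delta$ is already a minimum over deterministic $\mathscr{S}$ and any randomized reading strategy is dominated by the worst deterministic one, as you remark); and $\hat{\mathbf{S}}$ is a deterministic function of $\mathbf{X}$ precisely because $\rho_w=0$ and the paper's decoder is deterministic, which you also note. No gaps.
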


A generalization of this model is the Wiretap Channel II with a noisy main channel, studied in~\cite{NY}. Although they look at general classes of channels, one particular case which is relevant for us is when the main channel is a binary symmetric channel. In this case, we have
\begin{rethm}[implicit in~\cite{NY}] The weak secrecy capacity when the eavesdropper reads an arbitrary $\rho_r$ fraction of the transmitted codeword and the main channel is a $\mathrm{BSC}(\rho_w)$ is at most 
\[(1-\rho_r) (1-h(\rho_w)) = 1 - h(\rho_w) - \rho_r + \rho_r \cdot h(\rho_w).\]
\end{rethm}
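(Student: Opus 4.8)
The plan is to run a standard converse, combining the reliability and (weak) secrecy hypotheses with the memoryless structure of $\mathrm{BSC}(\rho_w)$. Let $\mathcal{C}=(C_n)$ be a code family of rate $R=R(\mathcal{C})>0$ achieving reliability and weak secrecy over this channel; I want to deduce $R\le (1-\rho_r)(1-h(\rho_w))$. Write $\mathbf{X}$ for the (possibly randomized) encoding of the uniform message $\mathbf{S}$, $\mathbf{Z}\in\{0,1\}^n$ for the channel noise (i.i.d.\ $\mathrm{Ber}(\rho_w)$, independent of $\mathbf{X}$), and $\mathbf{Y}=\mathbf{X}\oplus\mathbf{Z}$ for the word the receiver sees, so $\hat{\mathbf{S}}$ is a function of $\mathbf{Y}$. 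Fix any $\mathscr{S}\subseteq[n]$ with $\abs{\mathscr{S}}=\rho_r n$, put $\bar{\mathscr{S}}=[n]\setminus\mathscr{S}$, and write $\mathbf{X}_{\mathscr{S}},\mathbf{Y}_{\mathscr{S}}$ for the restrictions to $\mathscr{S}$. It is convenient to pass to a subsequence of block lengths along which $R(C_n)\to R$, so that $H(\mathbf{S})=\log\abs{C_n}=nR-o(n)$.

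Next I extract the two hypotheses as entropy estimates. Reliability together with Fano's inequality applied to the decoder gives $H(\mathbf{S}\mid\mathbf{Y})\le 1+\delta_n\log\abs{C_n}=o(n)$, hence $I(\mathbf{S};\mathbf{Y})\ge nR-o(n)$. For secrecy, note that for a fixed index set $\mathscr{S}$ the view $\mathbf{V}(\mathscr{S})$ carries exactly the same information about $\mathbf{S}$ as $\mathbf{X}_{\mathscr{S}}$ does, so $I(\mathbf{S};\mathbf{X}_{\mathscr{S}})=H(\mathbf{S})-H(\mathbf{S}\mid\mathbf{V}(\mathscr{S}))\le H(\mathbf{S})-\Delta(C_n)$; since weak secrecy forces $\frac1n\Delta(C_n)\to R$, this is $o(n)$, and — because $\Delta$ is defined as a \emph{minimum} over $\mathscr{S}$ — it holds for every admissible $\mathscr{S}$.

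The core of the proof is the chain
\begin{align*}
nR-o(n)\le I(\mathbf{S};\mathbf{Y}) &\le I(\mathbf{S};\mathbf{X}_{\mathscr{S}},\mathbf{Y}) = I(\mathbf{S};\mathbf{X}_{\mathscr{S}}) + I(\mathbf{S};\mathbf{Y}\mid\mathbf{X}_{\mathscr{S}}) \\
&\le o(n) + I(\mathbf{X};\mathbf{Y}\mid\mathbf{X}_{\mathscr{S}}),
\end{align*}
where the last inequality is the data-processing inequality applied conditionally on $\mathbf{X}_{\mathscr{S}}$ (legitimate since $\mathbf{Y}\perp(\mathbf{S},\mathbf{X}_{\mathscr{S}})\mid\mathbf{X}$). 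Now I invoke the memoryless structure: $H(\mathbf{Y}\mid\mathbf{X})=nh(\rho_w)$, while
\[ H(\mathbf{Y}\mid\mathbf{X}_{\mathscr{S}}) \le H(\mathbf{Y}_{\mathscr{S}}\mid\mathbf{X}_{\mathscr{S}}) + H(\mathbf{Y}_{\bar{\mathscr{S}}}) \le \rho_r n\,h(\rho_w) + (1-\rho_r)n,\]
because on the $\rho_r n$ read coordinates the output given the input is exactly the noise (entropy $h(\rho_w)$ per symbol) and on the remaining $(1-\rho_r)n$ coordinates entropy is at most one bit per symbol. Therefore $I(\mathbf{X};\mathbf{Y}\mid\mathbf{X}_{\mathscr{S}})=H(\mathbf{Y}\mid\mathbf{X}_{\mathscr{S}})-nh(\rho_w)\le (1-\rho_r)(1-h(\rho_w))n$; feeding this into the chain and letting $n\to\infty$ yields $R\le(1-\rho_r)(1-h(\rho_w))$. (As in~\cite{NY}, replacing $\mathrm{BSC}(\rho_w)$ by a general memoryless main channel only changes $h(\rho_w)$ and the per-symbol capacity terms, which is why this statement is implicit there; the BSC case is all that is needed here.)

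The argument is essentially routine, and the only genuinely substantive choice is to condition on the \emph{input} symbols $\mathbf{X}_{\mathscr{S}}$ rather than on the output symbols $\mathbf{Y}_{\mathscr{S}}$: this is what makes the conditioning harmless (the correction term $I(\mathbf{S};\mathbf{X}_{\mathscr{S}})$ is $o(n)$ by secrecy) while simultaneously collapsing the output entropy on those $\rho_r n$ coordinates from one bit down to $h(\rho_w)$ per symbol, and after subtracting the full channel entropy $nh(\rho_w)$ this $\rho_r n(1-h(\rho_w))$ savings is precisely what produces the factor $(1-\rho_r)$. The remaining care is bookkeeping: one must distinguish $R(C_n)$ from $\liminf_n R(C_n)$ (handled by the subsequence) so that the Fano and secrecy estimates sit at the same $nR-o(n)$ scale.
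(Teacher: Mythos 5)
Your proof is correct, but note that the paper itself does \emph{not} prove this statement: it appears as a result attributed to being ``implicit in'' \cite{NY} and is stated without proof, so there is no in-paper argument to compare against. That said, the converse you give is the right one. The structural choice that makes it work is conditioning on the \emph{input} symbols $\mathbf{X}_{\mathscr{S}}$ rather than the corresponding outputs: weak secrecy directly absorbs the conditioning penalty (since $\mathbf{V}(\mathscr{S})$ is a bijective reparametrization of $\mathbf{X}_{\mathscr{S}}$ for fixed $\mathscr{S}$, we get $I(\mathbf{S};\mathbf{X}_{\mathscr{S}})\le H(\mathbf{S})-\Delta(C_n)=o(n)$), while the memoryless BSC with noise independent of the input gives $H(\mathbf{Y}_{\mathscr{S}}\mid\mathbf{X}_{\mathscr{S}})=\rho_r n\,h(\rho_w)$ exactly, so that subtracting $H(\mathbf{Y}\mid\mathbf{X})=nh(\rho_w)$ produces the factor $(1-\rho_r)$. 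The conditional data-processing step $I(\mathbf{S};\mathbf{Y}\mid\mathbf{X}_{\mathscr{S}})\le I(\mathbf{X};\mathbf{Y}\mid\mathbf{X}_{\mathscr{S}})$ is justified by the Markov chain $\mathbf{S}-\mathbf{X}-\mathbf{Y}$ holding conditionally on $\mathbf{X}_{\mathscr{S}}$ (a function of $\mathbf{X}$), and the subsequence extraction correctly reconciles $\log\abs{C_n}=nR(C_n)$ with the $\liminf$ defining $R(\mathcal{C})$ so that the Fano term and the secrecy term are both $o(n)$ on the same scale. One small observation worth adding is that the final bound is independent of the choice of $\mathscr{S}$, which is a consequence of the BSC's symmetry; for a non-symmetric main channel the minimizing $\mathscr{S}$ would matter, and that is what produces the more involved expressions in \cite{NY}.
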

\smallskip

The bounds of~\cite{NY} are tightened in~\cite{soft}, even under the semantic secrecy condition. The bound attained here is the same as that of the standard wiretap channel, stated in a modified form in Theorem~\ref{thm:rand-cap}. 

\begin{thm}[\cite{soft}] The semantic secrecy capacity when the eavesdropper reads an arbitrary $\rho_r$ fraction of the transmitted codeword and the main channel is a $\mathrm{BSC}(\rho_w)$ is equal to 
\[\max_{\substack{V-X-Y}}\bigl[I(V;Y) - \rho_r\cdot I(V;X)\bigr]^+,\]
where $X$ and $Y$ are the input and output, respectively, of the legitimate receiver, $[x]^+=\max(x,0)$, and the maximum is taken over all $(V,X,Y)$ such that $V-X-Y$ forms a Markov chain and the conditional distribution of $Y$ given $X$ is given by the $\mathrm{BSC}(\rho_w)$. 
\end{thm}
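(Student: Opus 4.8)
Since the theorem is attributed to \cite{soft} (the $\mathrm{BSC}$ case being already implicit in \cite{NY}), the task is to assemble the right known pieces, so I sketch both inequalities. For \textbf{achievability}, fix a Markov chain $V-X-Y$ with $P_{Y\mid X}$ the $\mathrm{BSC}(\rho_w)$ attaining the maximum, and set $R := I(V;Y)-\rho_r I(V;X)$, assumed positive; for the $\mathrm{BSC}$ the optimal such chain has $X$ uniform. Use a resolvability-type wiretap encoder: draw $2^{n(R+R')}$ sequences i.i.d.\ from $P_V^{\otimes n}$, partition them into $2^{nR}$ bins of size $2^{nR'}$ with $R' = \rho_r I(V;X)+\epsilon$, and to send $m$ pick a uniform $v^n$ in bin $m$ and transmit $x^n$ drawn from $\prod_i P_{X\mid V}(\cdot\mid v_i)$. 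Reliability is the easy half: since $R+R' = I(V;Y)-\epsilon' < I(V;Y) \le 1-h(\rho_w)$, the transmitted $x^n$ (which is, marginally, close to uniform) is recovered from $\mathbf{Y}=\mathbf{X}\oplus\mathbf{Z}$, $\mathbf{Z}\sim\mathrm{BSC}(\rho_w)^{\otimes n}$, by the nearest-neighbour decoder with error probability $2^{-\Omega(n)}$, and its bin gives $m$. For semantic secrecy, the point is a soft-covering (channel resolvability) estimate: for any message $m$ and any fixed read-set $\mathscr{S}$ with $\abs{\mathscr{S}}=\rho_r n$, the sub-codebook of bin $m$ induces on $\mathbf{X}_{\mathscr{S}}$ a law within relative entropy $2^{-\Omega(n)}$ of $P_X^{\otimes\rho_r n}$, except with doubly-exponentially small probability over the random codebook --- this is exactly where $R' > \rho_r I(V;X)$ enters, since the coordinates outside $\mathscr{S}$ do not affect $\mathbf{X}_{\mathscr{S}}$ and the relevant resolution rate is $\tfrac{1}{n} I(V^n;\mathbf{X}_{\mathscr{S}}) = \rho_r I(V;X)$. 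A union bound over the $2^{O(n)}$ read-sets and $2^{nR}$ messages fixes a single codebook for which every sub-codebook produces a near-identical observation law; since $I(\mathbf{S};\mathbf{V}(\mathscr{S})) \le \max_s D\bigl(P_{\mathbf{X}_{\mathscr{S}}\mid\mathbf{S}=s}\,\big\|\,P_X^{\otimes\rho_r n}\bigr)$ for every prior $P_{\mathbf{S}}$, this gives $\mathrm{Sem}(C_n)=2^{-\Omega(n)}$. Hence $R-2\epsilon$ is achievable; letting $\epsilon\to 0$ and optimizing over $V$ yields the lower bound.

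For the \textbf{converse}, the key observation is that a code weakly secret against \emph{every} read-set of size $\rho_r n$ is also weakly secret, up to an $o(n)$ loss in equivocation, against a memoryless erasure channel $\mathrm{EC}(1-\rho_r)$ that reveals each $X_i$ independently with probability $\rho_r$: the number of revealed symbols is $\mathrm{Bin}(n,\rho_r)=\rho_r n\pm O(\sqrt n)$ with overwhelming probability, and since $I(\mathbf{S};\mathbf{X}_{\mathscr{T}})$ is monotone in $\mathscr{T}$ it is sandwiched by the size-$(\rho_r n\pm O(\sqrt n))$ guarantees. Thus a rate-$R$ family reliable over $\mathrm{BSC}(\rho_w)$ and weakly secret against all $\rho_r n$-subsets is a reliable, weakly secret family for the \emph{classical} wiretap channel whose legitimate channel is $\mathrm{BSC}(\rho_w)$ and whose wiretapper's channel is $\mathrm{EC}(1-\rho_r)$. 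Its secrecy capacity, by Csisz\'ar--K\"orner \cite{bcc}, is $\max_{V-X-(Y,Y_E)}[I(V;Y)-I(V;Y_E)]^+$, and $I(V;Y_E)=\rho_r I(V;X)$ because the erasure channel passes $X$ with probability $\rho_r$ and otherwise emits a symbol independent of $V$. This gives $R\le\max_{V-X-Y}[I(V;Y)-\rho_r I(V;X)]^+$, matching the achievability.

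The \textbf{main obstacle} is the secrecy half of achievability under the combinatorial ``arbitrary $\rho_r n$ coordinates'' model. Classical wiretap achievability for the erasure eavesdropper only buys secrecy against a \emph{random} $\rho_r n$-subset; upgrading to the worst case requires the soft-covering guarantee to hold simultaneously for all $\binom{n}{\rho_r n}=2^{\Theta(n)}$ read-sets and, for \emph{semantic} (not merely weak) secrecy, for the worst-case message prior. The former needs a resolvability bound whose failure probability over the random codebook decays faster than $2^{-\Theta(n)}$ --- hence the doubly-exponential concentration form of the soft-covering lemma rather than its first-moment version --- and the latter follows once all sub-codebooks induce near-identical observation laws, an arbitrary prior then merely mixing among near-identical distributions. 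A secondary point, that $\mathrm{BSC}(\rho_w)$ occasionally flips more than $\rho_w n$ bits, is harmless here because the legitimate channel genuinely \emph{is} the BSC; it is exactly this step that forces a truncation/typicality argument on the error pattern when the analysis is transferred to the adversarial wiretap channel in the rest of the paper.

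Finally, although the theorem asks only for the $\max_{V-X-Y}$ characterization, it is worth recording for the sequel that over $\mathrm{BSC}(\rho_w)$ the maximum is attained with $X$ uniform and $V,X$ joined by a $\mathrm{BSC}(p)$: Mrs.\ Gerber's Lemma bounds $H(Y\mid V)$ from below in terms of $H(X\mid V)$, and the bound is met by the BSC link, whence $I(V;Y)-\rho_r I(V;X)=1-h(\rho_w)-\rho_r-f(p)$ with $f$ as in Theorem~\ref{thm:main}. The capacity therefore equals $\max\bigl(1-h(\rho_w)-\rho_r-\min_p f(p),\,0\bigr)$, which is precisely the expression appearing as the upper bound in Theorem~\ref{thm:main}.
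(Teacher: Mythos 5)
The paper does not prove this statement: it is quoted from \cite{soft} and used as a black box, so there is no internal proof to compare against. The one ingredient of \cite{soft} the paper actually imports is the stronger soft-covering lemma (Lemma~\ref{lem:soft}), which is exactly the doubly-exponential concentration estimate your achievability argument hinges on. With that caveat, your reconstruction is sound and lines up with the two places where the paper does deploy related machinery: the semantic-secrecy proof in Section~\ref{sec:secrecy-proof} is the $V=X$, $P_V$ uniform special case of your resolvability codebook (which is precisely why the paper only reaches $1-h(\rho_w)-\rho_r$ rather than the $-\min_p f(p)$-improved rate this theorem gives), and the converse reduction in Theorem~\ref{thm:rand-red} is your ``worst-case subsets imply random erasure channel'' observation, stated there for the adversarial rather than BSC main channel. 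Two details worth tightening. In the converse, monotonicity of $I(\mathbf{S};\mathbf{X}_{\mathscr{T}})$ alone does not cap the equivocation loss when the binomial number of revealed symbols overshoots $\rho_r n$; you also need that appending the extra $O(\sqrt n)$ coordinates raises the mutual information by at most $O(\sqrt n)=o(n)$ bits (the paper sidesteps this by running the erasure channel at rate $\rho_r-\xi$). In the achievability, ``nearest-neighbour decoder'' should be joint-typicality or MAP decoding of $v^n$ in the $V$-codebook, since for a genuine auxiliary $V$ there is no Hamming-ball structure to exploit; nearest-neighbour is only the right description in the $V=X$ degenerate case. Your closing remark that the optimizing $V$ is a $\mathrm{BSC}(p)$ link into a uniform $X$ via Mrs.\ Gerber's Lemma, recovering $1-h(\rho_w)-\rho_r-\min_p f(p)$, is correct and is exactly the bridge the paper takes from this theorem to Theorem~\ref{thm:rand-cap}.
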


\medskip

\noindent {\bf Oblivious adversaries.} On the other hand, if $\rho_r=0$, the adversary is ``1-oblivious,'' and the capacity in this case has been established by Langberg in~\cite{obliv}. 

\begin{redefn}[\cite{obliv}] A binary channel $W$ is $\gamma$-{\bf oblivious} if $W$ imposes at most $2^{(1-\gamma)n}$ different error distributions $W(\cdot | \mathbf{x})$ over all $\mathbf{x}\in \{0,1\}^n$. 
\end{redefn}

In other words, an oblivious channel cannot use full information about the transmitted codeword $\mathbf{x}$, and only knows enough to determine {\em which} of the $2^{(1-\gamma) n} < 2^n$ error distributions to apply. For example, if the adversary reads an arbitrary $\rho_r n$ fraction of the codeword, then the number of possible error distributions is at most the number of such views, or at most $\binom{n}{\rho_r n}\cdot 2^{\rho_r n}<2^{(h(\rho_r) + \rho_r) n}$. 

\begin{rethm}[\cite{obliv}] For $\rho_w\in[0,1/2)$ and $\epsilon>0$, with high probability, a random code of rate 
\[R= \gamma - h(\rho_w) - \epsilon\] 
corrects a $\rho_w$ fraction of errors imposed by a $1$-oblivious channel with probability $1-2^{-\epsilon n+1}$ when codewords are chosen uniformly at random. 
\smallskip

It follows that the secrecy capacity of the $(0,\rho_w)$ AWTC is $1-h(\rho_w)$. 
\end{rethm}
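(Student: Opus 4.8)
The plan is a random-coding argument of the kind used to analyze oblivious channels. Fix $\gamma\in(0,1]$ and $\epsilon>0$, set $R=\gamma-h(\rho_w)-\epsilon$ and $K=2^{Rn}$, and draw the code $C=\{\mathbf{x}_1,\dots,\mathbf{x}_K\}$ by choosing each codeword independently and uniformly from $\{0,1\}^n$; decode by nearest neighbor. The elementary observation driving everything is that if the channel injects an error vector $\mathbf{e}$ with $\mathrm{wt}(\mathbf{e})\le\rho_w n$ while $\mathbf{x}_m$ is transmitted, then the nearest-neighbor decoder can fail on message $m$ only if some $\mathbf{x}_{m'}$ with $m'\neq m$ lies in the Hamming ball $B(\mathbf{x}_m+\mathbf{e},\mathrm{wt}(\mathbf{e}))$ centered at the received word; call such an $m$ \emph{$\mathbf{e}$-bad}. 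So it suffices to control, for every low-weight $\mathbf{e}$, the number $N_{\mathbf{e}}$ of $\mathbf{e}$-bad messages.

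First I would record the first-moment bound. Since a Hamming ball of radius at most $\rho_w n$ in $\{0,1\}^n$ has at most $2^{h(\rho_w)n}$ points, for a fixed $m$ and a fixed $\mathbf{e}$ with $\mathrm{wt}(\mathbf{e})\le\rho_w n$ we get
\[ \Pr_C[\,m\text{ is }\mathbf{e}\text{-bad}\,] \;<\; K\cdot 2^{(h(\rho_w)-1)n} \;=\; 2^{-(1-\gamma)n-\epsilon n}, \]
hence $\mathbb{E}_C[N_{\mathbf{e}}]<K\cdot 2^{-(1-\gamma)n-\epsilon n}$. Next I would convert control of the $N_{\mathbf{e}}$'s into a statement valid against \emph{every} $\gamma$-oblivious channel $W$: when $\mathbf{x}_m$ is transmitted, $W$ applies one of at most $L:=2^{(1-\gamma)n}$ fixed distributions $Q_1,\dots,Q_L$ supported on weight-$\le\rho_w n$ vectors, so averaging the indicator of $\mathbf{e}$-badness over messages and over the channel's distribution gives
\[ P_{\mathrm{err}}(C,W)\;\le\;\frac{1}{K}\sum_{\ell=1}^{L}\sum_{\mathbf{e}}Q_\ell(\mathbf{e})\,N_{\mathbf{e}}\;\le\; L\cdot\max_{\mathbf{e}\,:\,\mathrm{wt}(\mathbf{e})\le\rho_w n}\frac{N_{\mathbf{e}}}{K}. \]
Thus it is enough to show that with probability at least $1-2^{-\epsilon n+1}$ over $C$ one has $N_{\mathbf{e}}\le 2^{-(1-\gamma)n-\epsilon n/2}K$ simultaneously for all $\le 2^{h(\rho_w)n}$ vectors $\mathbf{e}$ of weight $\le\rho_w n$, for then $P_{\mathrm{err}}(C,W)\le 2^{-\epsilon n/2}$ for every $\gamma$-oblivious $W$, which is the claimed reliability.

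The main obstacle is exactly this last step. The indicators $\{\,m\text{ is }\mathbf{e}\text{-bad}\,\}_m$ are not independent --- each involves all codewords --- so applying Markov's inequality to $\mathbb{E}_C[N_{\mathbf{e}}]$ yields only a $2^{-\epsilon n/2}$ tail, which is too weak to survive a union bound over the $2^{h(\rho_w)n}$ choices of $\mathbf{e}$ once $h(\rho_w)>\epsilon/2$. I would handle this by bounding $N_{\mathbf{e}}$ above by the number of ordered pairs $(m,m')$, $m\neq m'$, with $d(\mathbf{x}_m+\mathbf{e},\mathbf{x}_{m'})\le\mathrm{wt}(\mathbf{e})$ --- a sum of pairwise-dependent indicators with the same expectation up to a factor $K$ --- and invoking a concentration estimate tailored to this limited-dependence structure (a higher-moment bound, or a conditional Chernoff/martingale argument after conditioning on each ``center'' codeword) to show $N_{\mathbf{e}}$ exceeds $2^{\epsilon n/2}$ times its expectation only with probability $\le 2^{-h(\rho_w)n-\epsilon n}$; a union bound over $\mathbf{e}$ then closes the argument, with constants tuned so the total failure probability is at most $2^{-\epsilon n+1}$.

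Finally, for the deduction about the $(0,\rho_w)$ AWTC: when $\rho_r=0$ the adversary reads nothing, so its error distribution depends only on the fixed, publicly known code --- that is, it is a $1$-oblivious channel (the $\gamma=1$ case above, with a single error distribution on weight-$\le\rho_w n$ vectors, chosen worst-case). The random code above of rate $1-h(\rho_w)-\epsilon$ is reliable against every such channel, so it achieves reliability. Secrecy is immediate: with $\mathscr{S}=\emptyset$ the view $\mathbf{V}(\mathscr{S})$ is the constant all-$?$ string, whence $I(\mathbf{S},\mathbf{V}(\mathscr{S}))=0$ and $\mathrm{Sem}(C_n)=0$, so semantic secrecy holds. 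Therefore every rate below $1-h(\rho_w)$ is achievable under semantic secrecy and $C_s(0,\rho_w)\ge 1-h(\rho_w)$. The matching converse is standard: the adversary can emulate a $\mathrm{BSC}(\rho_w-\delta)$ (truncated to weight $\le\rho_w n$, a low-probability modification), whose Shannon capacity is $1-h(\rho_w-\delta)\to 1-h(\rho_w)$ as $\delta\to0$; equivalently it is the $\rho_r=0$ specialization of Theorem~\ref{thm:main}, where $\min_p f(p)=0$.
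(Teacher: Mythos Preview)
The paper does not itself prove this statement: it appears in the survey of prior work (Section~\ref{sec:prev}) and is attributed to~\cite{obliv} without proof, so there is no in-paper argument to compare against directly. The closest the paper comes is its own reliability analysis in Section~\ref{sec:reliability-proof}, which when specialized to $\rho_r=0$ (so that $C|_V=C$) recovers the $1$-oblivious case; that argument proceeds quite differently from yours, via the block decomposition $\{S_i\}$ of size $2^{\epsilon n/4}$ and the Chernoff-type Lemma~\ref{lem:conc}, separating conflicts external to a block (Lemma~\ref{lem:external}) from conflicts internal to a block via the row/column array trick of Lemma~\ref{lem:internal}.

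On your proposal itself: the reduction $P_{\mathrm{err}}(C,W)\le L\cdot\max_{\mathbf e}N_{\mathbf e}/K$ is correct, and you correctly identify the crux as concentration of $N_{\mathbf e}$ strong enough to survive a union bound over $\le 2^{h(\rho_w)n}$ error vectors. But you do not actually supply that concentration: ``a higher-moment bound, or a conditional Chernoff/martingale argument after conditioning on each center codeword'' is a placeholder, not an argument. The indicators in your pair count $N'_{\mathbf e}=\sum_{m\neq m'}\mathbf 1[d(\mathbf x_m+\mathbf e,\mathbf x_{m'})\le\rho_w n]$ share codewords across pairs; changing a single $\mathbf x_m$ can move $N'_{\mathbf e}$ by as much as $2(K-1)$, so bounded-differences (McDiarmid) gives a useless tail, and a second-moment/Chebyshev bound is too weak by an exponential factor to beat the union bound. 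A $k$th-moment computation with $k$ of order $n$ can be made to work, but the combinatorics of $k$-tuples of pairs with repeated codewords must actually be carried out, and you have not done so. As written, the proof has a genuine gap precisely at the step you yourself flag as ``the main obstacle''; one concrete way to close it is to borrow the block decomposition of Section~\ref{sec:reliability-proof} at $\rho_r=0$.

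Your final paragraph, deducing $C_s(0,\rho_w)=1-h(\rho_w)$ from the reliability statement together with the trivial secrecy at $\rho_r=0$, is correct.
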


The definition of a $\gamma$-oblivious channel is more general than that of the AWTC, as the distributions of the channel $W$ can depend on any aspect of the codeword $\mathbf{x}$, rather than just some fixed number of symbols. An adversary who reads only a $\rho_r$ fraction of symbols is $(1-\rho_r-h(\rho_r))$-oblivious; however, our result shows that we can obtain tighter rate bounds under our more stringent analogue of obliviousness. 

\medskip

\noindent{\bf Myopic Adversaries.} The work of~\cite{myopic} is closest in spirit to the current work. In their model, rather than reading a $\rho_r$ fraction of codeword symbols, the adversary receives the output of a $\mathrm{BSC}$ on the transmitted codeword. 

\begin{rethm}[\cite{myopic}, informal] When $\rho_r>\rho_w$, a random code of rate $1-h(\rho_w)-\epsilon$ can correct a $\rho_w$ fraction of errors with high probability when the error distribution depends only on $C$ and the output of a binary symmetric channel $\mathrm{BSC}(\rho_r)$ on the transmitted codeword.
\end{rethm}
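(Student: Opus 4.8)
The plan is to use the probabilistic method. Let $C$ be a random binary code of rate $R = 1-h(\rho_w)-\epsilon$ whose $2^{Rn}$ codewords are drawn independently and uniformly from $\{0,1\}^n$, decoded by nearest-neighbour decoding, and work in the relevant myopic regime $\rho_w < \rho_r \le 1/2$ (so that $h(\rho_r) > h(\rho_w)$). Fix slack $\delta \ll \epsilon$ and call $(x,z)$ \emph{typical} when $d(x,z) \in [(\rho_r-\delta)n,(\rho_r+\delta)n]$; a Chernoff bound makes $\mathbf{Z}=\mathbf{X}\oplus\mathbf{B}$ typical relative to $\mathbf{X}$ except with probability $2^{-\Omega(n)}$, where $\mathbf{B}\sim\mathrm{BSC}(\rho_r)$. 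The first step is to isolate structural ``goodness'' properties of $C$ that do not mention the adversary, so that once $C$ is fixed good we may argue deterministically over all adversaries. I would use, for all $z\in\{0,1\}^n$ and all $w^*\approx\rho_r n$ simultaneously (each holding with probability $1-2^{-\Omega(n)}$ by a first/second-moment estimate for a single $z$ followed by a union bound over the $2^n$ choices of $z$ — the exponents being comfortable precisely because $\rho_r>\rho_w$): (i) the ``consistency list'' $L_{z,w^*} := \{x\in C : d(x,z)=w^*\}$ has size $2^{(h(\rho_r)-h(\rho_w)-\epsilon\pm o(1))n}$, hence is exponentially large; and (ii) $L_{z,w^*}$ is well-spread, i.e.\ its intersection with any Hamming ball $B$ is not much larger than the uniformity prediction $|L_{z,w^*}|\cdot|B|/\binom{n}{w^*}$, up to a $\mathrm{poly}(n)$ additive floor.

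Property (i) is the quantitative form of ``the adversary is blind.'' Fixing a good $C$ and a (WLOG deterministic) adversary strategy $\mathbf{e}(\cdot)$, conditioning on $\mathbf{Z}=z$ \emph{and} on $d(\mathbf{X},z)=w^*$ makes the posterior of $\mathbf{X}$ exactly uniform over $L_{z,w^*}$, since the $\mathrm{BSC}(\rho_r)$ likelihood $\rho_r^{d(x',z)}(1-\rho_r)^{n-d(x',z)}$ is constant on that set. Thus from the adversary's viewpoint the transmitted codeword is uniform over an exponentially large list, and the error vector $\mathbf{e}(z)$ it commits to is chosen \emph{before} the realization of $\mathbf{X}$ among the consistent codewords.

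For the decoding analysis, write $\Pr[\mathrm{err}] \le \sum_{z,w^*}\Pr[\mathbf{Z}=z,\,d(\mathbf{X},z)=w^*]\cdot\Pr[\mathrm{err}\mid \mathbf{Z}=z,\,d(\mathbf{X},z)=w^*]$ and discard the $2^{-\Omega(n)}$ contribution of atypical $(z,w^*)$. For typical $z,w^*$ the error vector $\mathbf{e}:=\mathbf{e}(z)$ is fixed and $\mathbf{X}$ is uniform on $L_{z,w^*}$; a nearest-neighbour error needs some $x'\in C\setminus\{\mathbf{X}\}$ with $d(x',\mathbf{X}\oplus\mathbf{e})\le\mathrm{wt}(\mathbf{e})\le\rho_w n$, so by a first-moment bound the conditional error probability is at most $\mathbb{E}_{\mathbf{X}\sim\mathrm{Unif}(L_{z,w^*})}\bigl[\,|C\cap B(\mathbf{X}\oplus\mathbf{e},\rho_w n)|\,\bigr] - 1 = \tfrac{1}{|L_{z,w^*}|}\sum_{x'\in C}\bigl|L_{z,w^*}\cap B(x'\oplus\mathbf{e},\rho_w n)\bigr| - 1$. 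The crude estimate $|L_{z,w^*}\cap B(\cdot,\rho_w n)|\le|B(\cdot,\rho_w n)|$ is far too lossy; instead I would classify the $x'$ by the joint type of $(\mathbf{X}, x', z, \mathbf{X}\oplus\mathbf{e})$, bound $|L_{z,w^*}\cap B(x'\oplus\mathbf{e},\rho_w n)|$ on each class using property (ii) (so that the factor $|L_{z,w^*}|$ cancels), count the number of $x'\in C$ in each class via the random-coding exponent, and optimize the resulting exponent over all $\mathrm{poly}(n)$ types; the arithmetic is arranged so that every type-class contributes $2^{-\Omega(\epsilon n)}$, crucially using $\rho_r>\rho_w$ to keep $|L_{z,w^*}|$ exponentially large and to control the geometry of consistent triples. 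Summing over types, then over $z,w^*$, and over the random message gives $\Pr[\mathrm{err}]=2^{-\Omega(\epsilon n)}$, and a final $o(1)$ loss absorbs the probability that $C$ fails to be good.

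I expect the type-based decoding analysis to be the main obstacle, because the obvious shortcuts break: the naive union bound over confusing codewords $x'$ yields a bound $\gg 1$, and the naive minimum-distance argument also fails, since a rate-$(1-h(\rho_w)-\epsilon)$ random code genuinely contains pairs of codewords within $2\rho_w n$. One really has to exploit the blindness — that $\mathbf{e}(z)$ is committed before $\mathbf{X}$ is fixed among the exponentially many codewords consistent with $z$ — together with the well-spreadness of $L_{z,w^*}$, and it is exactly in forcing the exponent of the resulting sum to be negative that the hypothesis $\rho_r>\rho_w$ enters. A secondary technical point is establishing properties (i)--(ii) uniformly over all $2^n$ views $z$, which requires the per-$z$ deviation probabilities to beat $2^{-n}$ and again relies on $h(\rho_r)>h(\rho_w)$.
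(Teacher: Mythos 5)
The high-level framework you set up is the right one and matches the beginning of the argument in \cite{myopic} (and of the paper's adaptation in Theorem~\ref{thm:reliability}): isolate deterministic ``goodness'' properties of $C$, observe that conditioning on $\mathbf{Z}=z$ and on the distance-to-$z$ shell makes the posterior of $\mathbf{X}$ uniform over an exponentially large consistency list (your property~(i), which is what makes the adversary ``blind''), and then bound the per-$z$ decoding error deterministically. The gap is in the decoding step, and it is not a routine technicality.

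The object you propose to bound, $\frac{1}{|L_{z,w^*}|}\sum_{x'\in C}\bigl|L_{z,w^*}\cap B(x'+e,\rho_w n)\bigr|-1$, is a first-moment (union) bound over all of $C$, and it does not become small just by feeding in well-spreadness of $L_{z,w^*}$. Two concrete difficulties: (a)~a worst-case ``concentration floor'' of $\mathrm{poly}(n)$ per ball is unavoidable (you cannot hope that every deviation is multiplicative once the per-ball expectation drops below $1$, which it does for radius-$\rho_w n$ balls at this rate), and $\mathrm{poly}(n)\cdot |C|/|L_{z,w^*}| = \mathrm{poly}(n)\cdot 2^{(1-h(\rho_r))n}$ is exponentially large; (b)~even replacing the inner count by an $O(n^2)$ list-decoding bound (Corollary~\ref{cor:ld}) and swapping the sums gives a conditional error bound of order $O(n^2)$, not $2^{-\Omega(n)}$. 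So the ``arithmetic that makes every type-class contribute $2^{-\Omega(\epsilon n)}$'' is precisely the step that is not there yet, and the dependence between the indicator that $x\in L_{z,w^*}$ and the indicator that some other codeword lies in $B(x+e,\rho_w n)$ (both functions of the same random codebook) is nontrivial, so the required concentration is not a straightforward Chernoff bound over joint types.

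What the paper (following \cite{myopic}) does instead is structurally different and sidesteps this: after fixing the consistency list, it partitions it into polynomially many chunks $S_i$ of size $2^{\epsilon n/4}$. For a fixed chunk $S_r$ and error $e$, the union of balls $\bigcup_{x\in S_r} B(x+e,\rho_w n)$ has size at most $2^{\epsilon n/4 + h(\rho_w)n}$, small enough that the concentration lemma (Lemma~\ref{lem:conc}) shows only $O(n^2)$ codewords of $C$ outside $S_r$ land in it; combined with list-decodability this gives $O(n^4)$ ``externally confusable'' elements of $S_r$. The remaining, and genuinely delicate, part is the \emph{internal} conflicts inside $S_r$, which the paper handles by arranging $S_r$ into a $2^{\epsilon n/8}\times 2^{\epsilon n/8}$ array and applying the concentration lemma row-by-row and column-by-column using the independence of distinct rows (Lemmas~\ref{lem:subset} and~\ref{lem:internal}). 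Your sketch has no analogue of the chunking or of the internal-conflict argument, and these are where the real work happens; without them, I do not see how to make the exponents close. I would recommend either adopting the chunk-and-array decomposition, or else substantially strengthening (ii) into a genuinely two-codeword concentration statement and verifying that the heavy-tail contribution of near-threshold types is controlled.
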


This result is similar to what we want to prove, but the $\mathrm{BSC}$ gives less information to the adversary than a comparable erasure channel, and we are allowing the erasure channel to be {\em adversarial}. 

\section{Capacity upper bound}
\label{sec:ub}

To upper-bound the capacity of the adversarial wiretap channel, we reduce to the case of the standard, random wiretap channel~(\cite{wyner}). This channel, which we will refer to as the $(\rho_r,\rho_w)$ {\em random} wiretap channel, consists of a $\mathrm{BEC}(1-\rho_r)$ to the eavesdropper, and a $\mathrm{BSC}(\rho_w)$ to the receiver. The secrecy capacity of this channel is defined analogously to the previous section, and has been established in~\cite{OU13}. 

\begin{thm}[implicit in~\cite{OU13}] \label{thm:rand-cap}
When $\rho_r\leq 1- h(\rho_w)$, the secrecy capacity of the  $(\rho_r, \rho_w)$ random wiretap channel under the weak secrecy condition is equal to 
\[1 - h(\rho_w) - \rho_r -\min_p f(p),\]
where $f(p) = h\bigl( (2\rho_w-1)p + 1 - \rho_w\bigr) - h(\rho_w) - \rho_r h(p)$. 
\end{thm}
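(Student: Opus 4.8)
The plan is to obtain the stated formula from the single-letter characterization of the secrecy capacity of a (general, not necessarily degraded) wiretap channel and then specialize it to this pair of channels. Write $X$ for the transmitted symbol, $Y$ for the receiver's output ($X$ passed through $\mathrm{BSC}(\rho_w)$), and $Z$ for the eavesdropper's output ($X$ passed through $\mathrm{BEC}(1-\rho_r)$). By the Csisz\'ar--K\"orner characterization, as used in~\cite{OU13}, the weak secrecy capacity equals $\max_{V-X-(Y,Z)}\bigl[I(V;Y) - I(V;Z)\bigr]$, the maximum being over all auxiliary $V$ forming a Markov chain $V - X - (Y,Z)$. In the regime $\rho_r\le 1-h(\rho_w)$ considered here, neither of $Y,Z$ is a stochastically degraded version of the other, so the auxiliary variable genuinely matters, and the term $\min_p f(p)$ will turn out to be exactly the penalty incurred by this ``prefixing.''

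Two simplifications reduce the maximization to a one-parameter optimization. First, since $Z$ equals $X$ with probability $\rho_r$ and equals a fixed erasure symbol (independent of everything) otherwise, conditioning on the erasure indicator gives $I(V;Z) = \rho_r\,I(V;X)$ for every $V$. Second, both $X\mapsto Y$ and $X\mapsto Z$ are symmetric under the simultaneous relabeling $0\leftrightarrow 1$ of the input and the corresponding output symbols, so a standard symmetrization argument---together with the cardinality bound $\abs{V}\le\abs{X}=2$ that accompanies the single-letter formula---shows that it is enough to take $X$ uniform and the link $V\to X$ to be a $\mathrm{BSC}(p)$ for some $p\in[0,\tfrac12]$.

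With this parametrization everything is explicit: $I(V;X) = 1-h(p)$, hence $I(V;Z) = \rho_r\bigl(1-h(p)\bigr)$; and $V\to Y$ is the concatenation $\mathrm{BSC}(p)\circ\mathrm{BSC}(\rho_w) = \mathrm{BSC}\bigl(p(1-\rho_w)+(1-p)\rho_w\bigr)$, so $I(V;Y) = 1 - h\bigl(p(1-\rho_w)+(1-p)\rho_w\bigr)$. Since $p(1-\rho_w)+(1-p)\rho_w = 1 - \bigl((2\rho_w-1)p + 1 - \rho_w\bigr)$ and $h$ is symmetric about $\tfrac12$, this entropy equals $h\bigl((2\rho_w-1)p+1-\rho_w\bigr)$, whence
\[
I(V;Y) - I(V;Z) = 1 - h\bigl((2\rho_w-1)p+1-\rho_w\bigr) - \rho_r\bigl(1-h(p)\bigr) = 1 - h(\rho_w) - \rho_r - f(p),
\]
and maximizing over $p$ gives $1-h(\rho_w)-\rho_r-\min_p f(p)$. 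One then checks that this dominates the value $0$ achieved by a constant $V$ (so that it really is the Csisz\'ar--K\"orner maximum): at $p=0$, $f(0) = h(1-\rho_w) - h(\rho_w) - \rho_r h(0) = 0$, so the expression is at least $1-h(\rho_w)-\rho_r$, which is nonnegative precisely under the hypothesis $\rho_r\le 1-h(\rho_w)$.

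The main obstacle is the second simplification: one must carry out the symmetrization jointly for $Y$ and $Z$ (the same input relabeling has to act consistently on both output alphabets, including fixing the erasure symbol) and then argue that, once the cardinality bound restricts attention to binary $V$, the objective $I(V;Y)-I(V;Z)$ is maximized by a uniform input together with a symmetric $V\to X$ link, i.e.\ a $\mathrm{BSC}(p)$; this is where one uses that both $X\mapsto Y$ and $X\mapsto Z$ are symmetric channels, via a convexity argument, after which only the scalar $p$ survives. Everything else---the erasure identity $I(V;Z)=\rho_r I(V;X)$ and the entropy bookkeeping---is routine, and if one is content to quote $\max_{V-X-(Y,Z)}[I(V;Y)-I(V;Z)]$ directly from~\cite{OU13}, the argument collapses to the two displayed simplifications.
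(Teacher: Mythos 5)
The paper does not actually prove Theorem~\ref{thm:rand-cap}: it is stated as ``implicit in [OU13],'' and the remark following it explains that the $\mathrm{BSC}(\rho_w)$-to-receiver / $\mathrm{BEC}(1-\rho_r)$-to-eavesdropper pair is \emph{not} comparable under the usual degraded or less-noisy orderings, which is precisely why [OU13] has to invoke its ``cyclic shift symmetric'' machinery. Your calculation \emph{after} restricting to ``$X$ uniform, $V\to X$ a $\mathrm{BSC}(p)$'' is correct and does reproduce the stated formula: the erasure identity $I(V;Z)=\rho_r I(V;X)$, the composition $\mathrm{BSC}(p)\circ\mathrm{BSC}(\rho_w)=\mathrm{BSC}(p*\rho_w)$ with $p*\rho_w=p(1-\rho_w)+(1-p)\rho_w$, the symmetry $h(p*\rho_w)=h\bigl((2\rho_w-1)p+1-\rho_w\bigr)$, and the resulting algebra all check, as does the endpoint check $f(0)=0$ showing the value dominates $0$ when $\rho_r\le 1-h(\rho_w)$.

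The gap is the step you dispose of as ``a standard symmetrization argument.'' Because neither $Y$ nor $Z$ is more capable than the other, the Csisz\'ar--K\"orner objective $I(V;Y)-I(V;Z)=I(V;Y)-\rho_r I(V;X)$ is a difference of two mutual informations with the \emph{same} convexity type as a function of the joint law of $(V,X)$, hence it is neither concave nor convex, and the usual ``symmetrize, then invoke concavity to land on the fixed point'' route that works for degraded (or less-noisy) wiretap channels does not go through. Invariance of the objective under the joint relabeling $0\leftrightarrow 1$ of $(X,Y)$ (with the erasure symbol fixed) only shows that the set of maximizers is closed under that involution, not that a fixed point of the involution---i.e.\ uniform $X$ with a symmetric prefix channel---lies in it. Establishing that reduction is exactly what the cyclic-shift-symmetry analysis of [OU13] does, and it is the theorem's real technical content; your proposal quotes it as routine rather than proving it. So the route you take is the intended one and agrees in substance with [OU13], but the key structural claim on which everything hinges is asserted, not established.
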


\begin{rmk} Readers familiar with the standard wiretap channel model may recognize the term $1 - h(\rho_w) - \rho_r$ (the difference of the capacities of the main and eavesdropper channels) as the correct capacity when the eavesdropper channel is noisier than the main channel in a formal sense. This condition holds when the channels are flipped; i.e.\ when the eavesdropper sees the output of a symmetric channel, and the main channel is a sufficiently good erasure channel, but for our case they are not comparable under this definition (see discussion in~\cite{OU13}). The work of~\cite{OU13} instead uses the notion of ``cyclic shift symmetric wiretap channels'' to obtain Theorem~\ref{thm:rand-cap}. Recall that $\min_p f(p)< 0$ for nonzero $\rho_w$ and $\rho_r$ (Figure~\ref{fig:bds}). 
\end{rmk}
\medskip

The reduction from the adversarial case to the random case is standard: the adversary can choose to read and write at random, subject only to the bound on the total number of errors. Thus any code for the adversarial wiretap channel must also be resilient to random errors. To the best of our knowledge, however, this reduction has not been given explicitly in the literature, so we provide a proof below. 

\begin{thm} \label{thm:rand-red}
Let $C_n\subseteq \{0,1\}^n$ be a code of rate $R$ which achieves decoding error $\delta_n$ and average equivocation $\eta_n$ over the $(\rho_r,\rho_w)$ adversarial wiretap channel. Then for sufficiently small $\xi>0$, $C_n$ achieves decoding error $\delta_n + \exp(-\Omega(n))$ and normalized equivocation 
$\eta_n(1 - \exp(-\Omega(n))$ over the $(\rho_r-\xi, \rho_w-\xi)$ random wiretap channel. 
\end{thm}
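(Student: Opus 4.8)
The plan is to argue that an adversary on the random wiretap channel can be simulated, with overwhelming probability, by an adversary on the adversarial wiretap channel, so that any code resilient on the latter is resilient on the former. Concretely, fix the code $C_n$ and consider its behavior on the $(\rho_r-\xi,\rho_w-\xi)$ random wiretap channel, i.e.\ a $\mathrm{BEC}(1-\rho_r+\xi)$ to the eavesdropper and a $\mathrm{BSC}(\rho_w-\xi)$ to the receiver. I will couple this with the following strategy for the AWTC adversary: the adversary reads a set $\mathscr{S}$ of exactly $\rho_r n$ coordinates chosen uniformly at random, and independently flips each coordinate of the transmitted codeword with probability $\rho_w-\xi$, \emph{conditioned} on the total number of flips being at most $\rho_w n$. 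By a Chernoff bound, the number of coordinates the $\mathrm{BEC}$ leaves \emph{unerased} is at most $\rho_r n$ with probability $1-\exp(-\Omega(n))$ for any $\xi>0$; on that event the eavesdropper's view in the random model is a (possibly strict) restriction of $\mathbf{V}(\mathscr{S})$ for a uniformly random $\mathscr{S}$ of size $\rho_r n$, so it reveals no more about $\mathbf{S}$ than $\mathbf{V}(\mathscr{S})$ does. Similarly, a $\mathrm{BSC}(\rho_w-\xi)$ flips at most $\rho_w n$ coordinates with probability $1-\exp(-\Omega(n))$, so the conditioning above changes the error distribution by only $\exp(-\Omega(n))$ in total variation, and the conditioned distribution is a legal AWTC error distribution (it depends only on the code, not even on the read symbols).

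For the reliability claim I would carry out the following steps. First, note that the AWTC decoding-error guarantee $\delta_n$ holds for \emph{every} admissible error distribution, in particular for the conditioned-$\mathrm{BSC}$ distribution described above; this gives decoding error at most $\delta_n$ against that distribution. Second, pass from the conditioned distribution back to the true $\mathrm{BSC}(\rho_w-\xi)$: since the two distributions are $\exp(-\Omega(n))$-close in total variation, the decoding error under the true $\mathrm{BSC}$ is at most $\delta_n+\exp(-\Omega(n))$. (One must also absorb the randomness of the encoder, but the bound is an average over exactly the same encoder randomness in both models, so this is immediate.) That yields the stated $\delta_n+\exp(-\Omega(n))$.

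For the equivocation claim, let $\mathscr{S}^\star$ be a minimizer of $H(\mathbf{S}\mid\mathbf{V}(\mathscr{S}))$ over size-$\rho_r n$ sets, so $H(\mathbf{S}\mid\mathbf{V}(\mathscr{S}^\star))=n\eta_n$. On the random wiretap channel, condition on the erasure pattern $E$: with probability $1-\exp(-\Omega(n))$ the set of revealed coordinates $\bar E$ has $|\bar E|\le \rho_r n$, and then the eavesdropper's view is a function of $\mathbf{V}(\mathscr{S})$ for any $\mathscr{S}\supseteq \bar E$ with $|\mathscr{S}|=\rho_r n$, so $H(\mathbf{S}\mid\text{view},E)\ge H(\mathbf{S}\mid\mathbf{V}(\mathscr{S}))\ge n\eta_n$ by the data-processing/monotonicity of conditional entropy. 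On the bad event, $H(\mathbf{S}\mid\text{view},E)\ge 0$. Averaging over $E$ and using that $\mathbf{S}$ is uniform (so $H(\mathbf{S})\le n$), the normalized equivocation on the random channel is at least $\eta_n(1-\exp(-\Omega(n)))$; the small multiplicative loss comes from discarding the contribution of the bad erasure patterns.

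The main obstacle I anticipate is the bookkeeping around \emph{exactly} versus \emph{at most} $\rho_r n$ read coordinates: the AWTC definition fixes the read fraction, while the $\mathrm{BEC}$ reveals a random number of coordinates, which is why the $\xi$ slack and the Chernoff estimate are essential — we need $|\bar E|\le\rho_r n$ with overwhelming probability, and then pad $\bar E$ up to size $\rho_r n$ arbitrarily so that the relevant $\mathbf{V}(\mathscr{S})$ is defined. A secondary subtlety is making sure the conditioned-$\mathrm{BSC}$ error distribution genuinely qualifies as an AWTC error distribution (it does, since it depends on nothing beyond the code), and that the total-variation transfer is applied to the decoder's error event, which is a fixed event once the encoder randomness and channel noise are fixed. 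Everything else is a routine Chernoff-bound and entropy-monotonicity computation.
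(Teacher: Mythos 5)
Your proposal is correct and follows essentially the same route as the paper: partition on the exponentially-likely event that the $\mathrm{BEC}$ reveals at most $\rho_r n$ coordinates and the $\mathrm{BSC}$ flips at most $\rho_w n$ coordinates, invoke the AWTC guarantee (plus monotonicity of conditional entropy to pad $\bar E$ up to a size-$\rho_r n$ set) on that event, and absorb the bad event into the $\exp(-\Omega(n))$ slack. The ``coupling''/total-variation framing in your opening paragraph is a cosmetic repackaging of the paper's direct conditional-probability split and is not strictly needed, but your detailed steps match the paper's proof.
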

\begin{proof} 

{\bf Reliability}. By assumption, $C_n$ achieves decoding error $\delta_n$ over any distribution of errors with weight $\leq \rho_w n$. In particular, the decoding error of $C_n$ over the $\mathrm{BSC}(\rho_w-\xi)$ is at most 
\begin{align*}
    \delta_n + \Pr[\text{error has wt $>\rho_w$}] &= \delta_n + \sum_{i=\rho_wn+1}^n \binom{n}{i} (\rho_w-\xi)^{i} (1-\rho_w+\xi)^{n-i}\\
    &\leq \delta_n + 2^{-\Omega_{\rho_w,\xi}(n)}.
\end{align*}
\medskip

{\bf Weak secrecy}. We follow the notation of Section~\ref{sec:ntn}. Let $\mathbf{Z}\in \{0,1,?\}^n$ be the output of the eavesdropper's channel $\mathrm{BEC}(1-\rho_r + \xi)$ (with ``?'' denoting erasure). 

Let $\mathscr{S}$ be the set of coordinates which are not erased by the eavesdropper's channel (in earlier terminology, the support of $\mathbf{Z}$). For fixed $\mathscr{S}$, let $\mathbf{V}(\mathscr{S})$ be the output of the eavesdropper's channel conditioned on $\mathscr{S}$ being the support of $\mathbf{Z}$. Then the equivocation at the encoder is
\[\Delta(C_n) = H(\mathbf{S}|\mathbf{Z}) = \sum_{\mathscr{S}\subseteq[n]}\Pr[\supp(\mathbf{Z}) = \mathscr{S}]\cdot H(\mathbf{S}|\mathbf{V}(\mathscr{S})).\]

Let $E$ denote the event that $\abs{\mathscr{S}}\leq \rho_r n$. For $\xi>0$, we have $\Pr[E] \geq 1 - 2^{-\Omega_{\rho_r, \xi}(n)}$. By our assumption that $C_n$ achieves secrecy over the AWTC, whenever $\abs{\mathscr{S}}=\rho_r n$, the equivocation is $H(\mathbf{S} | \mathbf{V}(\mathscr{S}))\geq \eta_nn$, where $\eta_n\to R$. When $\mathscr{S}$ has size less than $\rho_r n$, we have $H(\mathbf{S} | \mathbf{V}(\mathscr{S}))\geq H(\mathbf{S} | \mathbf{V}(\mathscr{S}'))$, for arbitrary $\mathscr{S}'\supseteq \mathscr{S}$ of size $\rho_r n$ (in other words, conditioning does not increase entropy). In particular, for any $\mathscr{S}$ of size $\leq \rho_r n$, $H(\mathbf{S} | \mathbf{V}(\mathscr{S}))\geq \eta_nn$. 

Thus, the normalized equivocation of the decoder is at least 
\begin{align*}
\frac{\Delta}{n} &\geq \Pr[E]\cdot \eta_n + \frac{1}{n}\Pr[\neg E]\cdot H(\mathbf{S} | \neg E,\mathbf{Z})\\
    &\geq \eta_n\cdot \bigl(1-\exp(-\Omega_{\rho_r, \xi}(n))\bigr).
\end{align*}

As $\eta_n\to R$ as $n\to\infty$, we have $\Delta/n = R - o(1)$. 
\smallskip

Thus, we have shown that $C$ achieves reliability and secrecy over the $(\rho_r-\xi,\rho_w-\xi)$ random wiretap channel, as desired. 

\end{proof}

\begin{cor} \label{cor:upper}
The weak secrecy capacity of the $(\rho_r,\rho_w)$ adversarial wiretap channel is at most $1 - h(\rho_w) - \rho_r -\min_p f(p),$
where $f(p)$ is defined as in Theorem~\ref{thm:rand-cap}. 
\end{cor}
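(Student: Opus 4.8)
The plan is to obtain the bound by composing the reduction of Theorem~\ref{thm:rand-red} with the random wiretap capacity formula of Theorem~\ref{thm:rand-cap}, and then removing the auxiliary slack parameter $\xi$ by a continuity argument.

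Concretely, let $\mathcal{C} = (C_n)_{n\to\infty}$ be any family of rate $R>0$ achieving reliability and weak secrecy over the $(\rho_r,\rho_w)$ AWTC; the goal is to show $R \le 1 - h(\rho_w) - \rho_r - \min_p f(p)$. First I would fix a small $\xi>0$ and apply Theorem~\ref{thm:rand-red}. Since $\delta_n\to 0$ and, by the definition of weak secrecy, the normalized equivocation $\eta_n\to R(\mathcal{C}) = R$, the perturbed quantities $\delta_n + \exp(-\Omega(n))$ and $\eta_n\bigl(1 - \exp(-\Omega(n))\bigr)$ still tend to $0$ and to $R$ respectively. Hence $\mathcal{C}$ achieves reliability and weak secrecy over the $(\rho_r - \xi,\rho_w - \xi)$ random wiretap channel, so $R$ is an achievable rate for that channel.

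Next I would invoke Theorem~\ref{thm:rand-cap} with parameters $(\rho_r - \xi,\rho_w - \xi)$. Assuming for now that $\rho_r < 1 - h(\rho_w)$, then for all sufficiently small $\xi$ we have $\rho_r - \xi < 1 - h(\rho_w) < 1 - h(\rho_w - \xi)$, using that $h$ is increasing on $[0,1/2]$ and $\rho_w<1/2$; so the hypothesis of Theorem~\ref{thm:rand-cap} holds and it yields
\[ R \;\le\; 1 - h(\rho_w - \xi) - (\rho_r - \xi) - \min_p f_\xi(p), \]
where $f_\xi(p) = h\bigl((2(\rho_w-\xi)-1)p + 1 - (\rho_w-\xi)\bigr) - h(\rho_w-\xi) - (\rho_r-\xi)h(p)$. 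Because $(\xi,p)\mapsto f_\xi(p)$ is jointly continuous on a compact box $[0,\xi_0]\times[0,1]$ for small $\xi_0$, the map $\xi\mapsto \min_p f_\xi(p)$ is continuous at $\xi=0$; letting $\xi\to 0^+$ therefore gives $R\le 1 - h(\rho_w) - \rho_r - \min_p f(p)$. Taking the supremum over all such families $\mathcal{C}$ proves the corollary in this regime.

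The only real obstacle is bookkeeping at the boundary: when $\rho_r\ge 1 - h(\rho_w)$, Theorem~\ref{thm:rand-cap} as stated no longer applies, and one must argue separately, appealing to the general characterization of the random wiretap secrecy capacity in~\cite{OU13}. There the quantity $1 - h(\rho_w) - \rho_r - \min_p f(p)$ is still a valid upper bound; note that $f(1/2) = 1 - h(\rho_w) - \rho_r$, so it is always nonnegative, and it is consistent with the vanishing of the capacity for $\rho_r > 1 - 4\rho_w(1-\rho_w)$. In short, all the substantive work is carried by Theorem~\ref{thm:rand-red} --- the content being that an AWTC-secure code is automatically secure against reading and writing \emph{at random} --- and the corollary is essentially a continuity wrapper around Theorem~\ref{thm:rand-cap}.
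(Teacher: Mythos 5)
Your proof is correct and follows essentially the same route as the paper: apply Theorem~\ref{thm:rand-red} to pass from the AWTC to the $(\rho_r-\xi,\rho_w-\xi)$ random wiretap channel, invoke Theorem~\ref{thm:rand-cap}, and let $\xi\to 0^+$. You supply two details the paper leaves implicit --- the joint continuity of $(\xi,p)\mapsto f_\xi(p)$ justifying the limit, and the observation that $f(1/2)=1-h(\rho_w)-\rho_r$ guarantees the bound is nonnegative so the $\rho_r\ge 1-h(\rho_w)$ regime is handled --- but the underlying argument is the same.
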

\begin{proof} Let $\mathcal{C}$ be a family of codes achieving a rate $R$ over the $(\rho_r,\rho_w)$ AWTC. By Theorem~\ref{thm:rand-red}, for sufficiently small $\xi>0$, $C$ also achieves the rate $R$ over the $(\rho_r-\xi, \rho_w-\xi)$ random wiretap channel. Thus, by Theorem~\ref{thm:rand-cap}, $R\leq 1 - h(\rho_w-\xi) - \rho_r-\min_p f(p)+\xi$. 

As $\xi$ can be arbitrarily small, we must have $R\leq 1 - h(\rho_w) - \rho_r - \min_p f(p)$. 
\end{proof}

\section{Capacity lower bound}
\label{sec:lb}

We will show that the semantic secrecy capacity of the $(\rho_r,\rho_w)$ AWTC is at least $1-h(\rho_w) - \rho_r$ by giving a code construction which achieves this rate. Unsurprisingly, we will show that a random stochastic code works. (Recall that a stochastic code uses a probabilistic encoding function, allowing multiple codewords to correspond to the same message.) 
\medskip

For $\epsilon>0$, let $C\subseteq\{0,1\}^n$ be a random code of rate $R:=1-h(\rho_w) - \epsilon$ constructed by selecting $2^{Rn}$ i.i.d.~vectors uniformly from $\{0,1\}^n$. Abusing notation, we also refer to the encoding map $C\colon \{0,1\}^{Rn}\to \{0,1\}^n$ which maps a binary string of length $Rn$ to its corresponding codeword. 

Define $R'$ such that $R'n = Rn - \ell$, for some $\ell$ to be set later. In order to ensure {\em secrecy} for our construction, we define the following partition $\Pi=\{A_i\}_{i\in\{0,1\}^{R'n}}$ of $C$, which splits $C$ into sets of size $2^\ell$. 

We may then consider $C$ as a stochastic code
\[C_\Pi\colon \{0,1\}^{R'n} \times \{0,1\}^{\ell}\to\{0,1\}^n\]
of rate $R' = 1 - h(\rho_w) - \ell/n - \epsilon$, which maps the $i$th message $i\in \{0,1\}^{R'n}$ and a random index (seed) $r\in\{0,1\}^\ell$ to $C(m,r)$, the $r$th codeword in $A_i$. 

\begin{rmk} We will show below that we can choose $\ell= \rho_r n - \Theta(1)$ in order to achieve weak secrecy, so that the final rate of $C_\Pi$ can be taken to be $1 - h(\rho_w) - \rho_r - \epsilon$. 

For semantic secrecy, we will show that any $\ell/n > \rho_r$ suffices, so that the final rate of $C_\Pi$ can be taken to be, for example, $1 - h(\rho_w) - \rho_r - \epsilon/2$. 

This rate bound, combined with the results of Corollary~\ref{cor:secrecy} and Theorem~\ref{thm:reliability}, which prove secrecy and reliability for this code, show that the capacity of the $(\rho_r,\rho_w)$ AWTC is at least $1 - h(\rho_w) - \rho_r$, as desired. 
\end{rmk}
\medskip

In the following sections, we will show that $C_\Pi$ achieves both secrecy and reliability over the $(\rho_r,\rho_w)$ adversarial wiretap channel (precise statements below).

\subsection{Secrecy}
\label{sec:secrecy-proof}

In this section, we give two proofs that our random code construction achieves secrecy. The first, based on a simple counting argument, shows that weak secrecy is achieved. The second, based on a more sophisticated probabilistic argument from~\cite{soft}, shows that semantic secrecy is achieved. Although the first result is weaker, we believe it is more transparent. Furthermore, although the asymptotic rates achieved for both are the same, the weak secrecy proof allows for a slightly higher rate at fixed block lengths than we are able to show in the semantic secrecy case. 

\subsubsection{Weak secrecy via counting}

We show that the random code $C_\Pi$ achieves weak secrecy with high probability. Following the approach of~\cite{wt2} (which is also used in~\cite{wt2adv}), we bound the equivocation of the encoder. Recall that the rate-$R'$ code $C_\Pi\subseteq\{0,1\}^n$ encodes a message $m_i$ by a random vector $x\in A_i$, where $\Pi=\{A_i\}$ is some partition of $C$. 

As before, we denote by $\mathbf{S}$ the random variable corresponding to the (uniformly chosen) source message, and $\mathbf{X}$ the random variable corresponding to $C_\Pi$'s (randomized) encoding of the message. For a subset $\mathscr{S}\subseteq[n]$ of size $\rho_r n$ corresponding to the coordinates chosen by the adversary, denote by $\mathbf{V}(\mathscr{S})$ the view of the adversary on $\mathscr{S}$. 

Recall that the equivocation of the encoder is $\Delta := \min_{\mathscr{S}:\abs{\mathscr{S}}=\rho_r n} H(\mathbf{S}| \mathbf{V}(\mathscr{S}))$. In the best case, $\Delta = R'n$, the dimension of the code, and nothing is learned. 

We will show that with high probability, the normalized equivocation $\Delta/n$ approaches $R'$ as $n\to\infty$. 
\medskip

In what follows, we write $\mathbf{V}$ for $\mathbf{V}(\mathscr{S})$, where $\mathscr{S}$ minimizes the equivocation of the encoder. 

\begin{lemma}[\cite{wt2}] The equivocation $\Delta$ of the encoder corresponding to $C_\Pi$ is 
\begin{align*}
\Delta & = H(\mathbf{S} | \mathbf{X},\mathbf{V}) + H(\mathbf{X}|\mathbf{V}) - H(\mathbf{X} | \mathbf{S}, \mathbf{V})\\
&=H(\mathbf{X}|\mathbf{V}) - H(\mathbf{X} | \mathbf{S}, \mathbf{V})\\
&\geq  Rn - \rho_r n - H(\mathbf{X}|\mathbf{S}, \mathbf{V}).
\end{align*}
\end{lemma}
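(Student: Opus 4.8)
The lemma is a standard equivocation decomposition, so the plan is to verify the chain of (in)equalities step by step using basic information-theoretic identities.

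\medskip

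First I would establish the decomposition $\Delta = H(\mathbf{S}|\mathbf{X},\mathbf{V}) + H(\mathbf{X}|\mathbf{V}) - H(\mathbf{X}|\mathbf{S},\mathbf{V})$. This follows from the chain rule for joint entropy applied in two ways: expanding $H(\mathbf{S},\mathbf{X}|\mathbf{V})$ as $H(\mathbf{S}|\mathbf{V}) + H(\mathbf{X}|\mathbf{S},\mathbf{V})$ and also as $H(\mathbf{X}|\mathbf{V}) + H(\mathbf{S}|\mathbf{X},\mathbf{V})$, then solving for $\Delta = H(\mathbf{S}|\mathbf{V}) = H(\mathbf{S}|\mathbf{X},\mathbf{V}) + H(\mathbf{X}|\mathbf{V}) - H(\mathbf{X}|\mathbf{S},\mathbf{V})$.

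\medskip

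Next I would argue $H(\mathbf{S}|\mathbf{X},\mathbf{V}) = 0$. Since $\mathbf{V} = \mathbf{V}(\mathscr{S})$ is a deterministic function of $\mathbf{X}$ (it just reveals the coordinates of $\mathbf{X}$ in $\mathscr{S}$), conditioning on $\mathbf{X}$ already determines $\mathbf{V}$; and the encoding map $C_\Pi$ sends each codeword $x$ to a unique message $m_i$ (the index $i$ of the part $A_i$ containing $x$), so $\mathbf{S}$ is a deterministic function of $\mathbf{X}$. Hence $H(\mathbf{S}|\mathbf{X},\mathbf{V}) = H(\mathbf{S}|\mathbf{X}) = 0$, giving the second line.

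\medskip

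Finally, for the inequality I would bound $H(\mathbf{X}|\mathbf{V}) \geq Rn - \rho_r n$. Write $H(\mathbf{X}|\mathbf{V}) = H(\mathbf{X}) - I(\mathbf{X};\mathbf{V})$. Since $\mathbf{X}$ is uniform over the $2^{Rn}$ codewords of $C$, $H(\mathbf{X}) = Rn$. And $I(\mathbf{X};\mathbf{V}) \leq H(\mathbf{V})$; since $\mathbf{V}(\mathscr{S})$ is supported on strings that are fixed ($?$) outside the $\rho_r n$ coordinates of $\mathscr{S}$ and binary on $\mathscr{S}$, we have $H(\mathbf{V}) \leq \rho_r n$. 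Combining, $H(\mathbf{X}|\mathbf{V}) \geq Rn - \rho_r n$, and dropping the nonnegative term $-H(\mathbf{X}|\mathbf{S},\mathbf{V})$ appropriately (it stays as the subtracted term) yields $\Delta \geq Rn - \rho_r n - H(\mathbf{X}|\mathbf{S},\mathbf{V})$. I do not expect any genuine obstacle here — the lemma is purely formal manipulation; the real work is deferred to bounding $H(\mathbf{X}|\mathbf{S},\mathbf{V})$, which presumably is where the random-coding argument enters in the subsequent lemma.
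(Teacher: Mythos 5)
Your proof is correct and is the standard Ozarow--Wyner equivocation decomposition; the paper itself cites this lemma from [wt2] without reproducing a proof, and your three steps (chain-rule decomposition of $H(\mathbf{S},\mathbf{X}\mid\mathbf{V})$, deterministic encoding giving $H(\mathbf{S}\mid\mathbf{X},\mathbf{V})=0$, and the bound $H(\mathbf{X}\mid\mathbf{V})\ge H(\mathbf{X})-H(\mathbf{V})\ge Rn-\rho_r n$ for the fixed minimizing $\mathscr{S}$) are exactly the expected derivation. The only cosmetic point is your final sentence about ``dropping'' $-H(\mathbf{X}\mid\mathbf{S},\mathbf{V})$: that term is not dropped at all, it is carried through unchanged; the inequality comes entirely from lower-bounding $H(\mathbf{X}\mid\mathbf{V})$.
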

\medskip

Given the above, we are interested in ensuring that $H(\mathbf{X}|\mathbf{S}, \mathbf{V})$ is small. In other words, given the message $m_i$, there should be few codewords in the codeword set $A_i$ which are consistent with any fixed view. 

\begin{defn} Let $(v_1,\dotsc, v_n)\in \{0,1,?\}^n$ be the view of the adversary. A codeword $x\in C$ is {\bf consistent} with $V$ if $x_i=v_i$ whenever $v_i\neq ?$. 
\end{defn} 

\begin{cor} \label{cor:equiv} Let $L\geq 1$. If for every view $V$ and every message index $i$, we have
\begin{equation}
\label{eq:cons}
\#\{x\in A_i\mid x~\text{ is consistent with }V\}<L,
\end{equation}
then 
\[\Delta \geq Rn - \rho_r n - \log L.\]
\end{cor}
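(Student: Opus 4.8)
The plan is to derive the corollary from the preceding lemma of~\cite{wt2} together with a one-line entropy bound. That lemma already gives $\Delta \geq Rn - \rho_r n - H(\mathbf{X}\mid\mathbf{S},\mathbf{V})$, so it suffices to show that hypothesis~\eqref{eq:cons} forces $H(\mathbf{X}\mid\mathbf{S},\mathbf{V})\leq \log L$; substituting this into the lemma then finishes the proof.

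To bound the conditional entropy, I would fix a message index $i$ and a view $V\in\{0,1,?\}^n$ with $\Pr[\mathbf{S}=m_i,\mathbf{V}=V]>0$. By the definition of the stochastic code $C_\Pi$ via the partition $\Pi=\{A_i\}$, conditioning on $\mathbf{S}=m_i$ pins the encoding $\mathbf{X}$ to the set $A_i$, and conditioning additionally on $\mathbf{V}=V$ forces $\mathbf{X}$ to be consistent with $V$ in the sense of the definition above. Hence, conditioned on $\{\mathbf{S}=m_i,\mathbf{V}=V\}$, the variable $\mathbf{X}$ is supported on $\{x\in A_i : x\text{ consistent with }V\}$, a set of size strictly less than $L$ by~\eqref{eq:cons}. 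Since entropy is at most the logarithm of the support size, $H(\mathbf{X}\mid\mathbf{S}=m_i,\mathbf{V}=V)\leq \log L$.

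Averaging over the joint law of $(\mathbf{S},\mathbf{V})$ gives
\[H(\mathbf{X}\mid\mathbf{S},\mathbf{V})=\sum_{i,V}\Pr[\mathbf{S}=m_i,\mathbf{V}=V]\cdot H(\mathbf{X}\mid\mathbf{S}=m_i,\mathbf{V}=V)\leq \log L,\]
the sum being over positive-probability pairs, which completes the argument.

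I do not expect a genuine obstacle here: the substantive content is already in the earlier lemma, and what remains is routine. The only points requiring care are bookkeeping ones — confirming that conditioning on the message really does confine $\mathbf{X}$ to the correct block $A_i$ (immediate from the construction of $C_\Pi$), and restricting attention to conditioning events of positive probability so that the conditional entropies are well-defined. The hypothesis $L\geq 1$ is used only to ensure $\log L\geq 0$, so that the resulting lower bound on $\Delta$ is meaningful.
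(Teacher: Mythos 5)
Your proof is correct and follows exactly the route the paper intends: invoke the preceding lemma's bound $\Delta \geq Rn - \rho_r n - H(\mathbf{X}\mid\mathbf{S},\mathbf{V})$, observe that conditioning on $\mathbf{S}=m_i$ and $\mathbf{V}=V$ confines $\mathbf{X}$ to the set $\{x\in A_i : x\text{ consistent with }V\}$ of size less than $L$, so each conditional entropy is at most $\log L$, and average. The paper leaves this step implicit, and your write-up supplies the standard argument.
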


We will show sufficient conditions on $L$ and $\ell$ for the condition of Equation~\eqref{eq:cons} to hold, and use these to show that a normalized equivocation $\Delta/n$ going to $R'$, the rate of the code, is achievable. More specifically, 

\begin{lemma}\label{lem:secrecy}
Assume $R'>0$. If $\ell/n < \rho _r - 2(R'+1)/L$, then with high probability over the choice of the code $C$, 
\[\#\{x\in A_i\mid x~\text{is consistent with }V\}<L\]
for all $V$, $i$. 
\end{lemma}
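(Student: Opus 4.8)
The plan is to fix a message index $i$, a view $V$ with support $\mathscr{S}$ of size $\rho_r n$, and an ``$L$-element candidate'' consisting of $L$ distinct codewords in the block $A_i$, and to bound the probability (over the random choice of $C$) that all $L$ of these codewords are consistent with $V$. Then I would union-bound over all choices of $i$, $V$, and the $L$-tuples. The key point is that consistency with $V$ on the $\rho_r n$ revealed coordinates is a condition on $\rho_r n$ bits, so a single uniformly random vector is consistent with probability $2^{-\rho_r n}$; since the $2^{Rn}$ codewords are i.i.d., a fixed set of $L$ of them are all consistent with probability $2^{-\rho_r n L}$. The number of ways to pick the bad data is at most (number of messages $2^{R'n}$) $\times$ (number of views, which is $\binom{n}{\rho_r n}2^{\rho_r n} \le 2^{(h(\rho_r)+\rho_r)n}$) $\times$ (number of $L$-subsets of a fixed $A_i$, at most $\binom{2^\ell}{L}\le 2^{\ell L}$). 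Actually it is cleaner to pick the $L$ codewords as an ordered tuple of distinct codewords among the $2^{Rn}$ total, bounded by $2^{RnL}$, but then one must also specify that they lie in the same block $A_i$; I will instead keep the ``$i$ then $L$-subset of $A_i$'' bookkeeping, giving a factor $2^{R'n}\cdot 2^{\ell L}$.

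Putting the pieces together, the failure probability is at most
\[
2^{R'n}\cdot 2^{(h(\rho_r)+\rho_r)n}\cdot 2^{\ell L}\cdot 2^{-\rho_r n L}
= 2^{\,(R' + h(\rho_r)+\rho_r)n \;-\; (\rho_r n - \ell)L}.
\]
For this to be $2^{-\Omega(n)}$ it suffices that $(\rho_r n-\ell)L$ exceed $(R'+h(\rho_r)+\rho_r)n$ by a linear amount, i.e. that $\ell/n < \rho_r - \tfrac{R'+h(\rho_r)+\rho_r+c}{L}$ for some constant $c>0$. To match the clean statement $\ell/n < \rho_r - 2(R'+1)/L$ claimed in the lemma, I would be a little more careful with the counting of views: rather than $\binom{n}{\rho_r n}2^{\rho_r n}$ I can note that the union over $V$ of the events ``all $L$ chosen codewords consistent with $V$'' is really an event about the $\rho_r n$ coordinates in $\mathscr{S}$ only, so it is enough to union-bound over the $\binom{n}{\rho_r n}\le 2^n$ choices of $\mathscr{S}$ and then observe that, for a fixed $\mathscr{S}$ and fixed $L$-subset, the probability that the $L$ codewords all agree on $\mathscr{S}$ (in \emph{any} common pattern) is $2^{-\rho_r n(L-1)}$. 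This replaces the exponent $\rho_r n L$ by $\rho_r n(L-1)$ and removes the $2^{\rho_r n}$ factor, and the bookkeeping factor becomes $2^n\cdot 2^{R'n}\cdot 2^{\ell L}$; the net condition is then $\ell L < \rho_r n(L-1) - (R'+1)n$, i.e. $\ell/n < \rho_r - (\rho_r + R' + 1)/L$, which (using $\rho_r<1$, so $\rho_r+R'+1 < R'+2 \le 2(R'+1)$ when $R'\ge 0$) is implied by $\ell/n < \rho_r - 2(R'+1)/L$. I would present whichever of these two counting variants gives the stated constant most transparently.

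I expect the main obstacle to be purely the bookkeeping: making sure the union bound is over the right objects (in particular, handling ``the $L$ codewords lie in the same block $A_i$'' without either double counting or losing the intended rate), and squeezing the constants down to exactly $2(R'+1)/L$ rather than some larger universal constant. There is no real analytic difficulty — the i.i.d. uniform structure of $C$ makes each individual consistency probability exactly $2^{-\rho_r n}$, and the only subtlety is that the partition $\Pi$ into blocks $A_i$ is chosen \emph{after} (or independently of) the codewords, so conditioning on which codewords fall in which block does not affect the consistency probabilities; I would state this independence explicitly. Once the clean inequality above is in hand, ``with high probability'' follows since the bound is $2^{-\Omega(n)}$, and combining with Corollary~\ref{cor:equiv} will give $\Delta/n \ge R - \rho_r - (\log L)/n \to R - \rho_r$, and then taking $\ell/n \to \rho_r$ (possible by choosing $L$ a large enough constant) yields $\Delta/n \to R'$, as needed.
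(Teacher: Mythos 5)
Your argument is correct, and it is essentially the dual of the paper's computation. The paper fixes $(i,V)$ and bounds the bad event by summing over $L$-subsets $T$ of $S|_V$ (the strings consistent with $V$, of which there are $2^{(1-\rho_r)n}$) the probability $2^{-(n-\ell)L}$ that $T\subseteq A_i$; you instead sum over $L$-subsets of $A_i$ (at most $2^{\ell L}$) the probability $2^{-\rho_r n L}$ that all $L$ are consistent with $V$. Both yield the identical per-$(i,V)$ bound $2^{(\ell-\rho_r n)L}$, and the union bound over $i$ and $V$ is the same up to the cosmetic difference of writing $h(\rho_r)$ versus $h(\ell/n)$. You worry that your ``first'' bookkeeping misses the stated constant $2(R'+1)/L$, but in fact it does not: substituting $\ell L < \rho_r n L - 2(R'+1)n$ into the exponent gives $n\bigl(-R'-2+h(\rho_r)+\rho_r\bigr) < -R'n$ using only $h(\rho_r)+\rho_r < 2$, which is exactly the slack the paper also uses (the paper bounds its exponent by $\rho_r - R' - 2 + h(\ell/n) < -R'$). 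So your first variant already suffices; no refinement is needed.

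Your ``second'' variant --- union-bounding only over the $\binom{n}{\rho_r n}$ supports $\mathscr{S}$ rather than over full views, and replacing the exponent $\rho_r n L$ by $\rho_r n(L-1)$ because the $L$ codewords need only agree \emph{among themselves} on $\mathscr{S}$ --- is a genuine (if minor) sharpening that does not appear in the paper. It shaves a factor of roughly $2^{\rho_r n}$ off the union bound, which would let one tolerate a slightly smaller $L$ for a fixed threshold on $\ell/n$; since $L$ is anyway taken to be $n$ in Corollary~\ref{cor:secrecy}, this has no effect on the final rate, but it is a cleaner way to see where the constant comes from. Finally, your remark that the partition $\Pi$ is into index blocks fixed independently of the sampled codeword values, so that the $2^\ell$ elements of each $A_i$ remain i.i.d.\ uniform, is exactly the (implicit) justification the paper relies on when it writes ``$A_i$ consists of $2^\ell$ i.i.d.\ vectors''; making it explicit is a good instinct.
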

\begin{proof} Fix $V$ and the index $i$. The codeword set $A_i$ consists of $2^\ell$ i.i.d.~vectors chosen uniformly from $\{0,1\}^n$. 

Let $S|_V\subseteq\{0,1\}^n$ be the set of strings which is consistent with $V$. Then $S|_V$ has size $2^{(1-\rho_r) n}$. 
The number of subsets of $S_V$ of size $L$ is 
\[\binom{2^{n-\rho_r n}}{L}\leq 2^{n(1-\rho_r) L}.\]

The probability that any fixed subset of size $L$ is contained in $A_i$ is at most $2^{-(n-\ell) L}$. 

Thus the probability that 
\[\#\{x\in A_i\mid x~\text{is consistent with $V$}\}\geq L\]
is at most 
\[2^{n(1-\rho_r) L}\cdot 2^{-(n-\ell) L}.\]
Union bounding over the $2^{nR'}$ choices of $i$ and the $2^{\rho_r n}\cdot \binom{n}{\ell}\leq 2^{\rho_r n}2^{n h(\ell/n)}$ choices of $V$, the probability that the code $C_\Pi$ fails (i.e.\ that $\#\{x\in A_i\mid x~\text{is consistent with $V$}\}\geq L$ for some $i$ and some $V$) is 
\[\Pr[C_\Pi~\text{ fails}]\leq 2^{n(1-\rho_r) L - (n-\ell) L + \rho_r n + nR' + n h(\ell/n)}.\]

Rearranging, we have
\[\frac{\log(\Pr[C_\Pi~\text{ fails}])}{n} \leq - \rho_r(L-1) + \frac{\ell L}{n} + R' + h(\ell/n).\]
If we set
\[\frac{\ell}{n} < \rho_r - \frac{2(R' + 1)}{L},\]
then the probability that $C_\Pi$ fails is at most $2^{-R' n}=2^{-\Omega(n)}$, as desired. 
\end{proof}

\begin{cor} \label{cor:secrecy}
With high probability over the choice of the code $C$, the normalized equivocation of the encoder approaches $R'$, the rate of the code $C_\Pi$, as the block length $n$ goes to infinity. 
\end{cor}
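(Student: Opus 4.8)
The plan is to combine Lemma~\ref{lem:secrecy} with Corollary~\ref{cor:equiv}, choosing the free parameters $L$ and $\ell$ so that the resulting lower and upper bounds on $\Delta$ both converge to $R'n$. Throughout I assume $\rho_r < 1 - h(\rho_w) - \epsilon$, so that the quantity $R' = R - \ell/n$ below will be bounded away from $0$; when $\rho_r \ge 1-h(\rho_w)$ the target rate $1-h(\rho_w)-\rho_r$ is nonpositive and there is nothing to prove (the code $C_\Pi$ is empty).

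First I would take $L = cn$ for a constant $c>0$ to be fixed later, and set $\ell = \lceil \rho_r n - 2(R'+1)/c - 1\rceil$, so that the hypothesis $\ell/n < \rho_r - 2(R'+1)/L$ of Lemma~\ref{lem:secrecy} holds; since $R'\le 1$, taking $c$ large makes $2(R'+1)/c$ an arbitrarily small constant, so $\ell = \rho_r n - \Theta(1)$, as anticipated in the remark following the construction. Then $R' = R - \ell/n = R - \rho_r + \Theta(1/n)$, which is bounded away from $0$ for large $n$ by our assumption on $\rho_r$, so the hypothesis $R'>0$ of Lemma~\ref{lem:secrecy} is met. The lemma then gives that with probability $1 - 2^{-\Omega(n)}$ over the choice of $C$, the bound $\#\{x\in A_i \mid x \text{ consistent with } V\} < L$ holds simultaneously for every view $V$ and every message index $i$.

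On this high-probability event I would invoke Corollary~\ref{cor:equiv} with this value of $L$ to obtain
\[\Delta(C_\Pi) \;\ge\; Rn - \rho_r n - \log L \;=\; Rn - \rho_r n - \log(cn) \;=\; Rn - \rho_r n - O(\log n).\]
On the other hand, since the source $\mathbf{S}$ is uniform over $\{0,1\}^{R'n}$, trivially $\Delta(C_\Pi) = H(\mathbf{S}\mid\mathbf{V}) \le H(\mathbf{S}) = R'n$. Dividing by $n$ and substituting $R' = R - \rho_r + \Theta(1/n)$ gives
\[R - \rho_r - O\!\left(\frac{\log n}{n}\right) \;\le\; \frac{\Delta(C_\Pi)}{n} \;\le\; R' \;=\; R - \rho_r + \Theta\!\left(\frac{1}{n}\right),\]
so $0 \le R' - \Delta(C_\Pi)/n = O(\log n / n) \to 0$. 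Hence the normalized equivocation approaches $R'$ as $n\to\infty$, which is the claim.

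There is no substantive obstacle beyond choosing the parameters correctly; the one point to watch is the tension in the choice of $L$. We need $L$ to grow with $n$ (here $L=\Theta(n)$) so that $2(R'+1)/L \to 0$ and hence $\ell/n$ can be pushed all the way up to $\rho_r - o(1)$ without sacrificing rate; but we also need $\log L = o(n)$ so that the additive $\log L$ loss in Corollary~\ref{cor:equiv} is negligible after normalization. Any $L$ with $L=\omega(1)$ and $\log L = o(n)$ works, and $L = \Theta(n)$ (equivalently $\ell = \rho_r n - \Theta(1)$) is one convenient choice.
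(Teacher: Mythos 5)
Your proposal is correct and follows essentially the same route as the paper: the paper's own proof also combines Lemma~\ref{lem:secrecy} with Corollary~\ref{cor:equiv}, taking $L=n$ (a special case of your $L=cn$) and $\ell = \rho_r n - 2(R'+1)$, and then deduces $\Delta/n \ge R' - O(\log n / n)$. The only cosmetic differences are that you keep the constant $c$ general and you spell out the trivial upper bound $\Delta \le H(\mathbf{S}) = R'n$ and the positivity caveat $\rho_r < 1-h(\rho_w)-\epsilon$, both of which the paper leaves implicit.
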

\begin{proof} In Lemma~\ref{lem:secrecy}, set $L=n$, and set $\ell = \rho_r n -2(R'+1)$. 

Then by Lemma~\ref{lem:secrecy}, $C$ satisfies the condition of Corollary~\ref{cor:equiv} with $L=n$. Thus, the equivocation of the code is
\begin{align*} 
\Delta&\geq Rn - \rho_rn - \log n -O(1)\\
    & = R'n +\ell - \rho_r n - \log n-O(1)\\
    &\geq R' n - 2(R'+1) - \log n - O(1).
\end{align*}

Thus, the normalized equivocation is 
\[\frac{\Delta}{n} \geq R' - O(\log n/n),\]
which approaches $R'$ as $n\to\infty$, as desired. 
\end{proof}

\subsubsection{Semantic secrecy via soft-covering}

Semantic secrecy for the Wiretap Channel II with noisy main channel is established in~\cite{soft}. Although their reliability proof does not apply to our setting with an adversarial main channel, the adversarial nature of the Wiretap Channel II means that the secrecy analysis applies to the AWTC. We outline the proof below; for more details, see the original derivation in~\cite{soft}. 
\medskip

At the core of the analysis is the following ``stronger soft-covering lemma,'' which shows that the distribution of the output of the channel on a large random subset of codewords is unlikely to be far from the distribution of the output of the channel on a truly random codeword. In other words, if each message is associated to $\approx 2^{(\rho_r + \epsilon) n}$ codewords, then the adversary's view will be statistically close to uniform for every message, and secrecy is achieved. 
\smallskip

In the following, $Q_U$ is the uniform distribution on $\{0,1\}$; $Q_{V|U}$ is a memoryless channel with output alphabet $\mathcal{V}$, and $S_n$ is a set of $2^{\hat{R}n}$ independent elements of $\{0,1\}^n$ chosen uniformly at random. For fixed $S_n$, $P_{\mathbf{V}|S_n}$ denotes the distribution on $\mathcal{V}^n$ induced by $Q_{V|U}^n$ on a uniform element of $S_n$; that is, 
\[P(\mathbf{v} | S_n) = 2^{-\hat{R} n} \sum_{w\in S_n} Q^n_{V|U} (\mathbf{v} | w).\]

\begin{lemma}[\cite{soft}] \label{lem:soft}
For any $Q_{V|U}$, and $\hat{R}>I(U;V)$, where $\abs{\mathcal{V}}<\infty$, there exist $\gamma_1,\gamma_2>0$ such that for large enough $n$, 
\[\Pr\bigl[D(P_{\mathbf{V}|S_n} \| Q_V^n)> e^{-n\gamma_1}\bigr] \leq e^{-e^{n\gamma_2}},\]
where $D(\cdot\|\cdot)$ is the relative entropy. 
\end{lemma}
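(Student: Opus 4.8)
The plan is to control $D(P_{\mathbf{V}|S_n}\|Q_V^n)$ by a typicality decomposition of the output space $\mathcal{V}^n$ together with a sharp concentration estimate for $P_{\mathbf{V}|S_n}(\mathbf{v})$ at each fixed output $\mathbf{v}$; equivalently, one may first pass to the $\chi^2$-divergence via $D(P\|Q)\le\log\bigl(1+\sum_{\mathbf{v}}P(\mathbf{v})^2/Q(\mathbf{v})\bigr)$ (Jensen) and analyze that. Write $Q_V=\sum_u Q_U(u)Q_{V|U}(\cdot\mid u)$; since $Q_U$ and hence $Q_V$ have full support, $P_{\mathbf{V}|S_n}\ll Q_V^n$, so $D$ is always finite and deterministically $O(n)$, and moreover $\mathbb{E}_{S_n}[P_{\mathbf{V}|S_n}(\mathbf{v})]=Q_V^n(\mathbf{v})$ for every $\mathbf{v}$. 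Fix a small typicality parameter; let $\mathcal{T}\subseteq\mathcal{V}^n$ be the typical set for $Q_V$ and, for each $w$, let $\mathcal{T}(w)$ be the set of $\mathbf{v}$ jointly typical with $w$ under $Q_U\times Q_{V|U}$. The usual estimates give $Q_{V|U}^n(\mathbf{v}\mid w)\le 2^{-(H(V\mid U)-\delta)n}$ on $\mathcal{T}(w)$, $Q_V^n(\mathbf{v})\ge 2^{-(H(V)+\delta)n}$ on $\mathcal{T}$ with $\delta\to0$ as the parameter shrinks, and (by the conditional typicality lemma) for $\mathbf{v}\in\mathcal{T}$ the conditional mass $\Pr_{w\sim Q_{U\mid V}^n(\cdot\mid\mathbf{v})}[\mathbf{v}\notin\mathcal{T}(w)]\le 2^{-\Omega(n)}$, whence the truncated mean $\nu_{\mathbf{v}}:=\mathbb{E}\bigl[Q_{V|U}^n(\mathbf{v}\mid U(j))\mathbf{1}[\mathbf{v}\in\mathcal{T}(U(j))]\bigr]$ satisfies $(1-2^{-\Omega(n)})Q_V^n(\mathbf{v})\le\nu_{\mathbf{v}}\le Q_V^n(\mathbf{v})$.

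The key step is concentration at a single output. Fix $\mathbf{v}\in\mathcal{T}$ and set $X_j:=Q_{V|U}^n(\mathbf{v}\mid U(j))\mathbf{1}[\mathbf{v}\in\mathcal{T}(U(j))]$, so $0\le X_j\le b:=2^{-(H(V\mid U)-\delta)n}$ with mean $\nu_{\mathbf{v}}$. The truncated average $\widehat{P}(\mathbf{v}):=2^{-\hat{R}n}\sum_{j}X_j$ is an average of $N:=2^{\hat{R}n}$ i.i.d.\ variables in $[0,b]$, so a multiplicative Chernoff bound gives
\[\Pr\bigl[\,\abs{\widehat{P}(\mathbf{v})-\nu_{\mathbf{v}}}>\epsilon\,\nu_{\mathbf{v}}\,\bigr]\le 2\exp\bigl(-c\,\epsilon^2\,N\,\nu_{\mathbf{v}}/b\bigr).\]
For $\mathbf{v}\in\mathcal{T}$ one has $\nu_{\mathbf{v}}/b\ge 2^{-(I(U;V)+2\delta)n}$, so $N\nu_{\mathbf{v}}/b\ge 2^{(\hat{R}-I(U;V)-2\delta)n}$, which is exponentially large in $n$ as soon as the typicality parameter is chosen with $\hat{R}-I(U;V)-2\delta>0$ --- possible precisely because $\hat{R}>I(U;V)$. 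Thus the failure probability for a fixed $\mathbf{v}$ is $\exp(-e^{n\gamma_2'})$, and a union bound over the at most $\abs{\mathcal{V}}^n=2^{n\log\abs{\mathcal{V}}}$ values of $\mathbf{v}$, which is only \emph{singly} exponential, still leaves an overall failure probability $\exp(-e^{n\gamma_2})$ for a slightly smaller $\gamma_2>0$. The same Chernoff estimate also pins down $\sum_{\mathbf{v}\notin\mathcal{T}}P_{\mathbf{V}|S_n}(\mathbf{v})\le 2^{-\Omega(n)}$ and $\sum_{\mathbf{v}}(P_{\mathbf{V}|S_n}(\mathbf{v})-\widehat{P}(\mathbf{v}))\le 2^{-\Omega(n)}$ with doubly-exponential confidence.

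On the resulting good event $G$ we assemble the bound on $D=\sum_{\mathbf{v}}P_{\mathbf{V}|S_n}(\mathbf{v})\log\frac{P_{\mathbf{V}|S_n}(\mathbf{v})}{Q_V^n(\mathbf{v})}$, split over $\mathcal{T}$ and its complement. For $\mathbf{v}\notin\mathcal{T}$, bound $\log\frac{P_{\mathbf{V}|S_n}(\mathbf{v})}{Q_V^n(\mathbf{v})}\le\log\frac1{Q_V^n(\mathbf{v})}\le n\log\frac1{q_{\min}}$ with $q_{\min}=\min_v Q_V(v)>0$, times the $2^{-\Omega(n)}$ total mass there, contributing $e^{-\Omega(n)}$. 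For $\mathbf{v}\in\mathcal{T}$, combine $\widehat{P}(\mathbf{v})\le P_{\mathbf{V}|S_n}(\mathbf{v})$, $\nu_{\mathbf{v}}\le Q_V^n(\mathbf{v})\le(1+2^{-\Omega(n)})\nu_{\mathbf{v}}$, and $\widehat{P}(\mathbf{v})\in[(1-\epsilon)\nu_{\mathbf{v}},(1+\epsilon)\nu_{\mathbf{v}}]$ together with the aggregate bound on $\sum(P-\widehat{P})$ to see that $\log\frac{P_{\mathbf{V}|S_n}(\mathbf{v})}{Q_V^n(\mathbf{v})}\le O(\epsilon)$ except on a set of $\mathbf{v}$ of small $P_{\mathbf{V}|S_n}$-mass, handled as in the atypical case. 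Since the Chernoff exponent loses only the factor $\epsilon^2$ --- itself only singly exponentially small --- we may take $\epsilon=e^{-n\gamma_1}$ and still retain the doubly-exponential confidence, yielding $D(P_{\mathbf{V}|S_n}\|Q_V^n)\le e^{-n\gamma_1}$ on $G$, as claimed.

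The step I expect to be the main obstacle is exactly this last assembly: relative entropy, unlike total variation, weights each output by the unbounded quantity $\log\frac1{Q_V^n(\mathbf{v})}=O(n)$, so the atypical outputs and the probability mass trimmed off by joint typicality cannot merely be discarded --- they must be shown to contribute only $e^{-\Omega(n)}$ \emph{after} multiplication by an $O(n)$ log-ratio, which forces the per-output estimate to be exponentially sharp rather than merely vanishing. What makes this affordable, and is the crux of the argument, is that $P_{\mathbf{V}|S_n}(\mathbf{v})$ is an average of $2^{\hat{R}n}$ i.i.d.\ terms, so its large-deviation probability is doubly exponential in $n$ essentially for free and comfortably beats the singly-exponential union bound over $\mathbf{v}\in\mathcal{V}^n$; and the hypothesis $\hat{R}>I(U;V)$ is precisely what makes the effective sample size $N\nu_{\mathbf{v}}/b$ grow exponentially, i.e.\ guarantees there are enough codewords per output to average over.
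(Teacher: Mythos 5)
The paper does not reprove Lemma~\ref{lem:soft}; it cites it verbatim from~\cite{soft} (Cuff's ``stronger soft-covering lemma'') and uses it as a black box, so there is no in-paper proof to compare against. Your sketch does, however, correctly reconstruct the technique of the cited source: a per-output Chernoff bound on the typically-truncated average $\widehat P(\mathbf v)$ of the i.i.d.\ contributions $Q_{V|U}^n(\mathbf v\mid U(j))$ (bounded above by $b\approx 2^{-(H(V|U)-\delta)n}$ via joint typicality), whose effective sample size $N\nu_{\mathbf v}/b \gtrsim 2^{(\hat R - I(U;V)-2\delta)n}$ is exponentially large precisely because $\hat R > I(U;V)$, giving doubly-exponential concentration that comfortably survives the singly-exponential union bound over $\mathbf v\in\mathcal V^n$; and a separate argument that the atypical contributions (both $\mathbf v\notin\mathcal T$ and the per-codeword residual $P-\widehat P$) carry only $e^{-\Omega(n)}$ mass, so they contribute $e^{-\Omega(n)}$ even after multiplication by the $O(n)$ worst-case log-ratio. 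Your choice $\epsilon=e^{-n\gamma_1}$ is the right way to upgrade an $O(\epsilon)$ divergence bound to the stated $e^{-n\gamma_1}$ while keeping the Chernoff exponent doubly exponential. The one spot where your write-up is looser than the source is the final ``assembly'' (which you yourself flag): rather than arguing that $\log\bigl(P(\mathbf v)/Q_V^n(\mathbf v)\bigr)\le O(\epsilon)$ except on a set of small $P$-mass, the cleaner route taken in~\cite{soft} is to split $P=\widehat P+(P-\widehat P)$ and bound the divergence contribution of the truncated part and the residual separately via a log-sum/convexity argument, which makes the correction terms explicit without per-output bad sets; but this is bookkeeping, not a gap in the idea.
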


In what follows, we will apply this lemma to the set of codewords associated to any fixed message, showing that the adversary's view is always nearly indistinguishable from uniform. 
\medskip

Recall that our goal is to bound the semantic secrecy metric
\[\mathrm{Sem}(C_n) = \max_{P_{\mathbf{S}},\mathscr{S}} D\bigl[P_{\mathbf{V}(\mathscr{S}) | \mathbf{S},C_n} \| P_{\mathbf{V}(\mathscr{S})| C_n} | P_{\mathbf{S}}\bigr]\]
where $P_{\mathbf{S}}$ is any distribution over the source message $\mathbf{S}$, and $\mathscr{S}$ is a subset of $[n]$ of size $\rho_r n$. 

First we fix $\mathscr{S}$ and $P_{\mathbf{S}}$, in which case (\cite{soft})
\[D\bigl[P_{\mathbf{V}(\mathscr{S}) | \mathbf{S},C_n} \| P_{\mathbf{V}(\mathscr{S})| C_n} | P_{\mathbf{S}}\bigr]\leq D\bigl[P_{\mathbf{V}(\mathscr{S}) | \mathbf{S},C_n} \| P_{\mathbf{Z}(\mathscr{S})}|P_{\mathbf{S}}\bigr],\]
where $P_{\mathbf{Z}(\mathscr{S})}$ is the uniform distribution over all strings in $\{0,1,?\}^n$ with support equal to $\mathscr{S}$. 
\smallskip

Maximizing over $P_{\mathbf{S}}$ and recalling that the source message is taken from $\{0,1\}^{R'n}$, we see that 
\[\max_{P_{\mathbf{S}}}D\bigl[P_{\mathbf{V}(\mathscr{S}) | \mathbf{S},C_n} \| P_{\mathbf{Z}(\mathscr{S})}|P_{\mathbf{S}}\bigr] \leq \max_{s\in \{0,1\}^{R'n}} D\bigl[P_{\mathbf{V}(\mathscr{S}) | \mathbf{S}=s,C_n} \| P_{\mathbf{Z}(\mathscr{S})}\bigr],\]
so in particular
\begin{equation}\label{eq:semsec}
\max_{P_{\mathbf{S}},\mathscr{S}} D\bigl[P_{\mathbf{V}(\mathscr{S}) | \mathbf{S},C_n} \| P_{\mathbf{V}(\mathscr{S})| C_n} | P_{\mathbf{S}}\bigr]\leq \max_{s\in S, \mathscr{S}} D\bigl[P_{\mathbf{V}(\mathscr{S}) | \mathbf{S}=s,C_n} \| P_{\mathbf{Z}(\mathscr{S})}\bigr]
\end{equation}
\medskip

Let us fix $\delta>0$, to be set later. We would like to bound the probability that $C_n$ is not semantically secure, which we do by considering the probability of the following event:
\[\mathrm{Sem}(C_n) = \max_{P_{\mathbf{S}},\mathscr{S}} D\bigl[P_{\mathbf{V}(\mathscr{S}) | \mathbf{S},C_n} \| P_{\mathbf{V}(\mathscr{S})| C_n} | P_{\mathbf{S}}\bigr]> e^{-n\delta}.\]

By Equation~\eqref{eq:semsec}, 
\begin{equation}\label{eq:sembds}
\Pr\bigl[\mathrm{Sem}(C_n)>e^{-n\delta}\bigr]\leq \Pr\left[\max_{s\in S, \mathscr{S}} D\bigl[P_{\mathbf{V}(\mathscr{S}) | \mathbf{S}=s,C_n} \| P_{\mathbf{Z}(\mathscr{S})}\bigr] > e^{-n\delta}\right].
\end{equation}

In order to apply the stronger soft-covering lemma, we fix $s$ and $\mathscr{S}$. Note that $\mathbf{V}(\mathscr{S})$ and $\mathbf{Z}(\mathscr{S}$ are supported only on strings with ``?'' in coordinates not in $\mathscr{S}$. Denoting by $P^{\mathscr{S}}_{\mathbf{V}(\mathscr{S}) | \mathbf{S}=s,C_n}$ and  $P^{\mathscr{S}}_{\mathbf{Z}(\mathscr{S})}$ the respective distributions restricted to coordinates in $\mathscr{S}$, the relative entropy chain rule implies that 
\begin{equation}\label{eq:rest}
D\bigl[P_{\mathbf{V}(\mathscr{S}) | \mathbf{S}=s,C_n} \| P_{\mathbf{Z}(\mathscr{S})}\bigr] = D\bigl[P^{\mathscr{S}}_{\mathbf{V}(\mathscr{S}) | \mathbf{S}=s,C_n} \| P^{\mathscr{S}}_{\mathbf{Z}(\mathscr{S})}\bigr].
\end{equation}

(In other words, the ``?'' coordinates outside of $\mathscr{S}$ do not affect the relative entropy on the left-hand side of Equation~\eqref{eq:rest}.)

Recall that $P^{\mathscr{S}}_{\mathbf{Z}(\mathscr{S})}$ is just a uniform distribution over strings in $\{0,1\}^{\abs{\mathscr{S}}}$. In the statement of Lemma~\ref{lem:soft}, we may set $Q_{V|U}$ to be the binary identity channel (so $\mathcal{V}=\{0,1\}$). 

$P^{\mathscr{S}}_{\mathbf{V}(\mathscr{S}) | \mathbf{S}=s,C_n}$ is a distribution on $\{0,1\}^{\abs{\mathscr{S}}}$ induced by applying $Q_{V|U}$ to the $2^{\ell}$ uniformly random codewords which encode $s$ (restricted to the coordinates in $\mathscr{S}$). Thus, if $\ell/(\rho_r n)>H(U) = 1$, then by Lemma~\ref{lem:soft}, there exist $\gamma_1,\gamma_2>0$ such that for large enough $n$, 

\[\Pr\biggl[D\bigl[P^{\mathscr{S}}_{\mathbf{V}(\mathscr{S}) | \mathbf{S}=s,C_n} \| P^{\mathscr{S}}_{\mathbf{Z}(\mathscr{S})}\bigr]> e^{-\gamma_1}\biggl]\leq e^{-e^{n\gamma_2}}.\]
\medskip

Setting $\delta = \gamma_1$ in Equation~\eqref{eq:sembds}, we have that 
\begin{align*}
\Pr\bigl[\mathrm{Sem}(C_n)>e^{-n\delta}\bigr]&\leq \Pr\left[\max_{s\in S, \mathscr{S}} D\bigl[P_{\mathbf{V}(\mathscr{S}) | \mathbf{S}=s,C_n} \| P_{\mathbf{Z}(\mathscr{S})}\bigr] > e^{-n\delta}\right]\\
    &\leq \sum_{s,\mathscr{S}}\Pr\biggl[D\bigl[P^{\mathscr{S}}_{\mathbf{V}(\mathscr{S}) | \mathbf{S}=s,C_n} \| P^{\mathscr{S}}_{\mathbf{Z}(\mathscr{S})}\bigr]> e^{-\gamma_1}\biggl]\tag{union bound}\\
    &\leq 2^{n}\cdot 2^n\cdot e^{-e^{\Omega(n)}}\\
    &=e^{-\Omega(n)}.
\end{align*}

Thus, $C_n$ achieves semantic secrecy with high probability when $\ell/n>\rho_r$. 

\subsection{Reliability}
\label{sec:reliability-proof}

In this section, we show that a random code $C$ of rate $1-h(\rho_w)-\epsilon$ can correct a $\rho_w$ fraction of errors with high probability (over the choice of the code and the uniform choice of the transmitted codeword) when the error distribution depends only on $C$ and a $\rho_r$ fraction of the transmitted codeword. This result is similar in spirit to those of~\cite{obliv} and~\cite{myopic}, described in Section~\ref{sec:prev}. 

More formally, we will show the following.  
\begin{thm} \label{thm:reliability} Fix $\rho_r,\rho_w>0$. Let $C\subseteq\{0,1\}^n$ consist of $2^{Rn}$ i.i.d.~uniform vectors, where $R=1 - h(\rho_w) - \epsilon$ for some $\epsilon>0$. Then if $\rho_r  < 1 - h(\rho_w)-2\epsilon$, $C$ achieves decoding error $o(1)$ over the adversarial $(\rho_r,\rho_w)$ wiretap channel with high probability. 
\end{thm}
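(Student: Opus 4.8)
The plan is to run a random-coding argument: with high probability over the choice of $C$, \emph{every} admissible adversary leaves the nearest-neighbor decoder with error $o(1)$, the guiding intuition (as in \cite{myopic}) being that the adversary's error vector is a function of only the $\rho_r n$ read symbols, so relative to the \emph{unread} coordinates it acts obliviously; this is what lets us pay only $h(\rho_w)$, rather than $h(2\rho_w)$, for error correction. First I would make the usual reductions. The average decoding error is affine in the adversary's internal randomness, so a worst-case strategy may be taken deterministic, and we may fix the read set $\mathscr{S}\subseteq[n]$ with $\abs{\mathscr{S}}=\rho_r n$, union-bounding over the $\binom{n}{\rho_r n}\le 2^{h(\rho_r)n}$ choices at the very end. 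An admissible adversary is then a map $v\mapsto e(v)$, possibly depending on the public code $C$, sending a view $v\in\{0,1\}^{\mathscr{S}}$ to an error vector of weight $\le\rho_w n$. With the message uniform, the transmitted codeword $\mathbf{X}$ is uniform over $C$, and nearest-neighbor decoding fails only if some codeword other than $\mathbf{X}$ lies in the Hamming ball $B(\mathbf{X}+e(\mathbf{V}),\rho_w n)$ of radius $\rho_w n$. Writing $C_v:=\{x\in C: x|_{\mathscr{S}}=v\}$ and $N(v,e):=\#\{x\in C_v: B(x+e,\rho_w n)\cap(C\setminus\{x\})\ne\emptyset\}$, the error of the best adversary is at most $\tfrac1{\abs C}\sum_v\max_{e:\,\mathrm{wt}(e)\le\rho_w n}N(v,e)$, so it suffices to bound this sum.

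Because the codewords are i.i.d.\ uniform, conditioning on their $\mathscr{S}$-restrictions leaves the remaining coordinates uniform and independent, and two facts follow from Chernoff bounds with a union bound over the (at most $2^{O(n)}$) views: since the hypothesis $\rho_r<R-\epsilon$ makes $\abs C 2^{-\rho_r n}=2^{(R-\rho_r)n}$ exponentially large, the per-view deviation probabilities are doubly-exponentially small, so with high probability (i) $\abs{C_v}\le 2\cdot 2^{(R-\rho_r)n}$ for every $v$, and (ii) only $o(\abs C)$ codewords $x$ have a ``light'' view, meaning $\abs{C_{x|_{\mathscr{S}}}}<\tfrac12 2^{(R-\rho_r)n}$. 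Using the trivial bound $N(v,e)\le\abs{C_v}$, the light views contribute only $o(\abs C)$ to $\sum_v\max_e N(v,e)$, and it remains to show that with high probability $N(v,e)\le\abs{C_v}/n$ for \emph{every} heavy $v$ and \emph{every} $e$ of weight $\le\rho_w n$. Now $\mathbb{E}\,N(v,e)\le\mathbb{E}\,\abs{C_v}\cdot(\abs C-1)\abs{B(\rho_w n)}/2^n\le 2^{(R-\rho_r)n}\cdot 2^{-\epsilon n}$, the factor $2^{-\epsilon n}=\abs C\abs{B(\rho_w n)}/2^n$ being exactly what the rate budget $R=1-h(\rho_w)-\epsilon$ buys; so the target merely asks that $N(v,e)$ not exceed its mean by a factor $\approx 2^{\epsilon n}/n$, and an exponential (indeed doubly-exponential) concentration bound will survive the union bound over all $2^{O(n)}$ triples $(\mathscr{S},v,e)$.

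To obtain that concentration I would first record the high-probability event that every ball of radius $\rho_w n$ contains at most $n$ codewords (Chernoff over the $2^n$ centers, each count having mean $\le 2^{-\epsilon n}\le 1$), and then for fixed $v,e$ split a confusion of $x\in C_v$ into a confusion by a codeword outside $C_v$ and one by a codeword of $C_v$. Revealing $C\setminus C_v$ first makes the former contributions independent across $x\in C_v$, each bounded by $n$ with mean $\le 2^{-\epsilon n}$, so Bernstein's inequality controls them. The latter contributions are (twice) the number of pairs $\{x,x'\}\subseteq C_v$ with $\mathrm{wt}(x+x'+e)\le\rho_w n$; revealing the codewords of $C_v$ one at a time gives a martingale whose increments are at most $n$ (the overcrowding bound) and whose differences are clean because revealing one codeword of $C_v$ does not change the conditional probability that two \emph{other} codewords of $C_v$ are confusable --- their $\mathscr{S}$-parts are all equal to $v$, so $\mathrm{wt}(x+x'+e)$ is $\mathrm{wt}(e|_{\mathscr{S}})$ plus an independent $\mathrm{Bin}((1-\rho_r)n,\tfrac12)$, and $\rho_w<\tfrac12(1-\rho_r)$ since $\rho_r<1-h(\rho_w)\le 1-2\rho_w$ --- so Freedman's martingale Bernstein inequality applies with summed conditional variance $O(n\cdot\mathbb{E}\,N(v,e))$ and yields a tail of $\exp(-\Omega(2^{(R-\rho_r)n}/n^2))$.

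Putting the pieces together, with high probability over $C$ one has $\sum_{v\text{ heavy}}\max_e N(v,e)\le\abs C/n$ while the light views add $o(\abs C)$, so the error against any admissible adversary is $O(1/n)$ for the fixed $\mathscr{S}$, and a final union bound over the $\le 2^{h(\rho_r)n}$ sets $\mathscr{S}$ completes the argument. The main obstacle is the concentration step: the events ``$x\in C_v$ is confusable'' are correlated through the shared code, so one cannot simply Chernoff, and pushing the martingale/Bernstein bound down far enough to beat the union bound over all $(\mathscr{S},v,e)$ is the technical heart --- precisely where the myopic-adversary machinery of \cite{myopic} is needed. A secondary point requiring care is the passage from the adaptive model to a fixed worst-case read set $\mathscr{S}$ and the harmlessness of the final union bound over $\mathscr{S}$, which rests on the same doubly-exponential tails.
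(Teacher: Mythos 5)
Your proposal follows the same random-coding strategy as the paper --- reduce the adversary to obliviousness within each consistency class $C_v = C|_V$, then show that for each fixed error vector $e$ the decoder is rarely confused --- but the concentration machinery you propose is genuinely different from the paper's, and that is exactly where a gap appears. The paper (following \cite{myopic}) first conditions on the event $E_0$ that every $|C|_V|$ is within a factor $2^{\pm\epsilon n/2}$ of its mean, then \emph{partitions} $C|_V$ into blocks $S_i$ of size $2^{\epsilon n/4}$; the point of the small block size is to make $\bigl|\bigcup_{x\in S_i}B_{\rho_w}(x+e)\bigr|$ small enough for the concentration Lemma~\ref{lem:conc} to apply, giving a deterministic $O(n^4)$ cap on external conflicts, and internal conflicts within $S_i$ are handled by the array trick of Lemma~\ref{lem:internal}: arrange $S_i$ into a $2^{\epsilon n/8}\times 2^{\epsilon n/8}$ grid so that each row is \emph{independent} of its complement and then reuse Lemma~\ref{lem:conc} rowwise and columnwise, avoiding any martingale argument. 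You instead work with all of $C_v$ at once and propose Bernstein for the external count (after revealing $C\setminus C_v$) and Freedman's inequality for the internal count. The external half is fine and arguably cleaner than the paper's route. The internal half, however, has a real hole as stated: the Doob increment $Z_k-Z_{k-1}=\sum_{j<k}\bigl(\mathbf{1}[\mathrm{wt}(X_j+X_k+e)\le\rho_w n]-p'\bigr)$ is \emph{not} almost-surely bounded by $n$ --- it is bounded by $n$ only on the overcrowding event that every radius-$\rho_w n$ ball holds at most $n$ codewords, and conditioning on that event destroys the martingale structure that Freedman requires. To repair this you would need a truncation / typical-bounded-differences device (e.g.\ a stopping time that freezes the process once a ball becomes crowded), which is extra work you do not address and is precisely what the paper's array decomposition is engineered to avoid, since each row's conflicts with the rest of the grid are a count over \emph{independent} codewords. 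Two smaller remarks: the light/heavy-view split is superfluous, since under $\rho_r<R-\epsilon$ a Chernoff bound plus a union bound over the $\binom{n}{\rho_r n}2^{\rho_r n}$ views already shows that \emph{every} view is heavy with probability $1-\exp(-2^{\Omega(n)})$ (this is the paper's $E_0$); and your citation of \cite{myopic} as ``the machinery needed'' is a bit misleading, since the specific tool you propose (Freedman) is not what \cite{myopic} or the paper actually use.
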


We follow the proof outline of~\cite{myopic} in order to show that the random code $C$ achieves reliability over the adversarial $(\rho_r,\rho_w)$ wiretap channel. The primary difference is in the definition of the set $C|_V$, as our adversarial model is different. 

In this section, we will not use the association of codewords to messages induced by the partitioning $\Pi$, and will only consider $C$ as an (ordered) codebook of $2^{Rn}$ uniform, equally likely strings. 

Once the adversary has read a $\rho_r n$ fraction of the transmitted codeword, he knows that this codeword lies in the subset of $C$ which is {\em consistent} with the read symbols, but each element of that subset is equally likely from his perspective. We will show that this is sufficient to ensure that he cannot cause a decoding error with non-vanishing probability. 
\medskip

The main tool in our analysis is the following lemma, from~\cite{myopic}. At a high level, it states that a set of random vectors (in our case, codewords) cannot be too concentrated in a small volume $B$. 
\begin{lemma} \label{lem:conc}
Let $A\subseteq\{0,1\}^n$ be a set with $2^{\alpha n}$ elements for some $\alpha>0$, $\nu, \beta>0$, and $B\subset A$ with $\abs{B}\leq 2^{n(\alpha-\beta-\nu)}$. Let $X_1,\dotsc, X_N$ be chosen uniformly at random from $A$ with $N=2^{n\beta}$. Then for constant $c>0$, 
\[\Pr[\#\{ i\mid X_i\in B\}>cn^2]\leq \exp\bigl(-\Omega(n^2)\bigr).\]
\end{lemma}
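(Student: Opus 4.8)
The plan is to bound the relevant probability by a direct union bound over the ways the event $\#\{i \mid X_i \in B\} > cn^2$ can occur, combined with the observation that each individual variable lands in $B$ with small probability $p := \abs{B}/\abs{A} \leq 2^{-n(\beta + \nu)}$. Concretely, I would first note that if more than $cn^2$ of the $N = 2^{n\beta}$ samples lie in $B$, then in particular there is some subset $T \subseteq [N]$ of size exactly $\lceil cn^2\rceil$ with $X_i \in B$ for all $i \in T$. Since the $X_i$ are i.i.d.\ uniform on $A$, the probability that a fixed such subset is entirely contained in $B$ is $p^{\abs{T}} \leq 2^{-n(\beta+\nu)\lceil cn^2\rceil}$.

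Next I would union-bound over the choice of $T$: there are $\binom{N}{\lceil cn^2\rceil} \leq N^{\lceil cn^2 \rceil} = 2^{n\beta\lceil cn^2\rceil}$ such subsets. Multiplying the two estimates, the failure probability is at most
\[
2^{n\beta\lceil cn^2\rceil}\cdot 2^{-n(\beta+\nu)\lceil cn^2\rceil} = 2^{-n\nu\lceil cn^2\rceil} \leq 2^{-c\nu n^3},
\]
which is certainly $\exp(-\Omega(n^2))$ (in fact much smaller). The key cancellation is that the $2^{n\beta}$ "counting" factor per element of $T$ is exactly killed by one factor of $2^{-n\beta}$ from the probability that element lands in $B$, leaving the slack $2^{-n\nu}$ per element of $T$, and there are $\Theta(n^2)$ elements of $T$.

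I do not expect a serious obstacle here; the argument is a standard "moment/subset" union bound. The only points requiring a little care are: (i) making sure the hypothesis $\abs{B} \leq 2^{n(\alpha - \beta - \nu)}$ is used correctly to get $p \leq 2^{-n(\beta+\nu)}$ from $\abs{A} = 2^{\alpha n}$; (ii) handling the ceiling $\lceil cn^2\rceil$ cleanly, which is harmless since we only need a lower bound on the exponent; and (iii) confirming that the bound $\binom{N}{k}\leq N^k$ is all we need — no sharper estimate on the binomial coefficient is required because the per-element slack $2^{-n\nu}$ already dominates. One could alternatively phrase this via a Chernoff bound on the binomial $\mathrm{Bin}(N,p)$, but the crude subset bound is simpler and gives a stronger (super-exponential in $n$, indeed $\exp(-\Omega(n^3))$) tail, comfortably inside the claimed $\exp(-\Omega(n^2))$.
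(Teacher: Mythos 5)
Your proof is correct, and your subset union bound in fact yields a stronger tail of $\exp(-\Omega(n^3))$, comfortably inside the claimed $\exp(-\Omega(n^2))$. The paper does not spell out a proof of this lemma (it attributes it to~\cite{myopic} and remarks only that it ``follows from a Chernoff bound''); you instead use the cruder first-moment / subset union bound, which is slightly more elementary and avoids the moment-generating-function machinery. In this extreme-tail regime the two are essentially equivalent: a multiplicative Chernoff bound with $\mu = Np \leq 2^{-n\nu}$ and threshold $cn^2$ gives $\Pr[\text{count} \geq cn^2] \leq (e\mu/(cn^2))^{cn^2}$, which is again $\exp(-\Omega(n^3))$, matching your estimate. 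One small point worth making explicit: you need the $X_i$ to be chosen \emph{independently} (not merely ``uniformly'') for the factorization $p^{\abs{T}}$; this is the right reading since the lemma is applied to i.i.d.\ random codewords, but since the lemma statement only says ``uniformly at random,'' it would be cleaner to state the independence assumption up front.
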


This lemma, which follows from a Chernoff bound, implies a standard result on the list-decodability of random codes.

\begin{cor} \label{cor:ld} Let $C\subseteq\{0,1\}^n$ be a random code of rate $1-h(\rho_w) - \epsilon$ for $\epsilon>0$. 
With probability $1-\exp(-\Omega(n^2))$ over the choice of $C$, every Hamming ball in $\{0,1\}^n$ of radius $\rho_w n$ contains at most $O(n^2)$ elements of $C$. 
\end{cor}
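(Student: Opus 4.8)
The plan is to apply Lemma~\ref{lem:conc} directly, with $A=\{0,1\}^n$, so that $\alpha=1$ and the vectors $X_1,\dotsc,X_N$ are exactly the $N=2^{Rn}$ i.i.d.\ uniform codewords of $C$, i.e.\ $\beta=R=1-h(\rho_w)-\epsilon$. For the ``small volume'' set $B$ I would take an arbitrary Hamming ball of radius $\rho_w n$ inside $\{0,1\}^n$; since $\rho_w<1/2$, the standard volume bound gives $\abs{B}\leq 2^{h(\rho_w)n}$. Now $\alpha-\beta-\nu = 1-R-\nu = h(\rho_w)+\epsilon-\nu$, so taking $\nu=\epsilon>0$ makes the hypothesis $\abs{B}\leq 2^{n(\alpha-\beta-\nu)}$ hold, and $\beta=R>0$ (the corollary is vacuous otherwise, since then $C$ is empty). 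Lemma~\ref{lem:conc} then gives, for this fixed ball $B$, that $\Pr[\,\#\{i : X_i\in B\}>cn^2\,]\leq\exp(-\Omega(n^2))$.

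To pass from a single fixed ball to \emph{every} ball of radius $\rho_w n$, I would union bound over the $2^n$ possible centers in $\{0,1\}^n$ (every such ball is centered at some point of $\{0,1\}^n$). Since $2^n\cdot\exp(-\Omega(n^2)) = \exp(-\Omega(n^2))$, the conclusion still holds with probability $1-\exp(-\Omega(n^2))$. On the complementary good event, every ball of radius $\rho_w n$ contains at most $cn^2=O(n^2)$ of the $X_i$, hence at most $O(n^2)$ codewords of $C$, which is precisely the statement of the corollary.

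The argument is essentially routine once Lemma~\ref{lem:conc} is available; the only points requiring a word of care are the elementary volume estimate $\abs{B}\leq 2^{h(\rho_w)n}$ and the observation that the $\exp(-\Omega(n^2))$ failure bound comfortably absorbs the $2^n$-fold union bound over centers. The main ``obstacle,'' such as it is, is the bookkeeping of choosing $\alpha,\beta,\nu$ so that the hypothesis $\abs{B}\leq 2^{n(\alpha-\beta-\nu)}$ is met, which the slack $\epsilon$ in the rate supplies with room to spare.
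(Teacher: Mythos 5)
Your proof is correct and is essentially the paper's own argument, just with the parameter bookkeeping ($A=\{0,1\}^n$, $\alpha=1$, $\beta=R$, $\nu=\epsilon$) made explicit where the paper leaves it implicit. The paper likewise invokes Lemma~\ref{lem:conc} with $B$ a Hamming ball of radius $\rho_w n$ and volume at most $2^{nh(\rho_w)}$, then union-bounds over the $2^n$ centers.
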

\begin{proof} In Lemma~\ref{lem:conc}, let $B\subseteq\{0,1\}^n$ be a Hamming ball of radius $\rho_w n$, which has size at most $2^{nh(\rho_w)}$. A union bound over all $2^n$ such Hamming balls proves the statement. 
\end{proof}
\medskip

As before, let us denote the adversary's view by $(v_1,\dotsc, v_n)\in \{0,1,?\}^n$. 

Denote by $C|_V$ the set of codewords which are consistent with $V$ (i.e., the codewords which have $v_i$ in the $i$th coordinate, whenever $v_i\neq ?$). Let $E_0$ be the event that $C|_V$ has size between $2^{(1-h(\rho_w) - \rho_r - 3\epsilon/2)n}$ and $2^{(1-h(\rho_w) - \rho_r - \epsilon/2)n}$ for all views $V$. Using a Chernoff bound and union-bounding over all $V$, we see that 

\[\Pr_C[E_0]\geq 1-\exp(-2^{\Omega(n)}).\]

Conditioned on $E_0$, $C|_V$ has size at least $2^{\epsilon n/4}$, by our assumption on $\rho_r$. We partition $C|_V$ into sets $S_i$ of size $2^{\epsilon n/4}$, with the exception of the last set, which may have size less than $2^{\epsilon n/4}$. This is done using the ordering on $C$: the first set $S_1$ consists of the first $2^{\epsilon n/4}$ elements of $C|_V$, and each subsequent block contains the next $\leq 2^{\epsilon n/4}$ elements of $C|_V$. 

Let $E_1$ be the event that the transmitted codeword does not lie in the last set of the partition. Over the uniform choice of the transmitted codeword, we have that 
\[\Pr[E_1|E_0]\geq 1-2^{\epsilon n/4}/\abs{C|_V}\geq 1 - \exp\bigl(-\Omega(n)\bigr).\]

Following the approach of~\cite{myopic}, we will show that (conditioned on $E_0$ and $E_1$) for any fixed error vector $e$ of weight $\rho_w n$, for every $i$, the probability over $x\in S_i$ that $x+e$ causes a decoding error is small. In other words, even if the adversary is only oblivious over the much smaller set $S_i\subseteq C|_V$, he still cannot reliably cause a decoding error. (We can think of the set $S_i$ which contains the true transmitted codeword as being side or oracle information given to the decoder.) 

In the analysis which follows, we may think of the set $S_r$ as one of the sets $S_i$ which have size $2^{\epsilon n/4}$. We will show that any such set $S_r$ causes few decoding errors. 

\begin{defn} Let $x\in \{0,1\}^n$, and let $e\in \{0,1\}^n$ be fixed of weight at most $\rho_w n$. Denote by $B_{\rho_w} ( x + e)$ the Hamming ball of radius $\rho_w n$ around $x+e$. 

If $x'\in B_{\rho_w}(x+e)$, we say that $x+e$ {\bf conflicts} with $x'$. 
\end{defn}

In other words, if $x+e$ conflicts with $x'$, then both $x$ and $x'$ are valid outputs of the decoder, and we will consider this to be a decoding error. In what follows, we will show that for any fixed error vector $e$, there are ``few'' conflicts in a random code. 

\begin{lemma} \label{lem:external}
Let $S_r\subseteq C|_V$ be of size $2^{\epsilon n/4}$, and let $e\in\{0,1\}^n$ have weight at most $\rho_w n$. Let $N_e$ denote the number of codewords $x\in S_r$ such that $x+e$ conflicts with some codeword {\em not} in $S_r$. Then, conditioned on $E_0$, $N_e\leq O(n^4)$ with probability $1-\exp(-\Omega(n^2))$ over the choice of $C$. 
\end{lemma}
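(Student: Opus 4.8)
The goal is to bound $N_e$, the number of $x \in S_r$ such that $x+e$ lands within Hamming distance $\rho_w n$ of some codeword $x' \in C \setminus S_r$. The key structural fact is that $S_r$ is a set of $2^{\epsilon n/4}$ codewords, each drawn uniformly from the set $S|_V$ of strings consistent with the view $V$ (a coset-like affine subspace of dimension $(1-\rho_r)n$), and—crucially—$S_r$ is \emph{independent} of the codewords outside $S_r$, since all codewords of $C$ are i.i.d. So I would first condition on the realization of $C \setminus S_r$, and on the event $E_0$ (which controls $|C|_V|$). Given that, the ``bad'' region for a single $x \in S_r$ is
\[
\mathrm{Bad}(e) := \{ x \in S|_V \mid B_{\rho_w}(x+e) \cap (C\setminus S_r) \neq \emptyset \},
\]
and $N_e$ is just the number of the $2^{\epsilon n/4}$ i.i.d.\ uniform samples (the elements of $S_r$) that land in $\mathrm{Bad}(e)$.

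**Bounding the volume of the bad region.** The next step is to show $|\mathrm{Bad}(e)|$ is small enough to apply Lemma~\ref{lem:conc} with $A = S|_V$. We have $x \in \mathrm{Bad}(e)$ iff $x \in \bigcup_{x' \in C\setminus S_r} \bigl( B_{\rho_w}(x') - e \bigr)$, so $|\mathrm{Bad}(e)| \le \sum_{x' \in C} |B_{\rho_w}(x') \cap (S|_V - e)|$. Now $B_{\rho_w}(x')$ is a single Hamming ball of volume $\le 2^{h(\rho_w)n}$, and it is being intersected with the affine subspace $S|_V - e$ of size $2^{(1-\rho_r)n}$; heuristically the intersection has size about $2^{h(\rho_w)n} \cdot 2^{-\rho_r n}$, and summing over the $2^{Rn} = 2^{(1-h(\rho_w)-\epsilon)n}$ codewords gives roughly $2^{(1-\rho_r-\epsilon)n}$. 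Comparing to $|A| = |S|_V| \ge 2^{(1-h(\rho_w)-\rho_r-3\epsilon/2)n}$ (from $E_0$), we get $|\mathrm{Bad}(e)|/|A| \le 2^{-\Omega(\epsilon n)}$—wait, that is actually \emph{larger} than $|A|$ when $h(\rho_w)$ is large, so I cannot bound things this crudely. The right fix: instead bound $|\mathrm{Bad}(e)|$ by $|A|$ times the \emph{expected fraction} of $A$ covered, but more carefully, apply list-decodability (Corollary~\ref{cor:ld}): every ball $B_{\rho_w}(y)$ contains $O(n^2)$ codewords, hence $|\mathrm{Bad}(e)| \le O(n^2) \cdot |S|_V|$ is vacuous. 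The cleaner route is to count \emph{pairs}: $x \in \mathrm{Bad}(e)$ is witnessed by some $x' \in C\setminus S_r$ with $x' \in B_{\rho_w}(x+e)$, i.e.\ with $d(x+e, x') \le \rho_w n$. For fixed $x' \in C\setminus S_r$ (which is now a fixed string, by our conditioning), the number of $x \in S|_V$ with $x+e \in B_{\rho_w}(x')$ is at most $|B_{\rho_w}(x'-e) \cap S|_V|$. Summing over $x' \in C$ and using that $C$ restricted outside $S_r$ is a random code—so by a Chernoff bound the ball $B_{\rho_w}(x'-e)$, which meets $S|_V$ in a set of size $\le 2^{h(\rho_w)n}$ strings, contains $\mathrm{poly}(n) + |S|_V| \cdot 2^{h(\rho_w)n}/2^n \cdot 2^{Rn}$ codewords with overwhelming probability—leads to $|\mathrm{Bad}(e)| \le |S|_V| \cdot 2^{-\Omega(\epsilon n)} + \mathrm{poly}(n)\cdot 2^{Rn}$. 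With $|S|_V| \ge 2^{(1-h(\rho_w)-\rho_r-3\epsilon/2)n}$ this gives $|\mathrm{Bad}(e)| \le |S|_V| \cdot 2^{-\beta n}$ for $\beta = \Omega(\epsilon)$, once one checks the exponent arithmetic using $\rho_r < 1 - h(\rho_w) - 2\epsilon$.

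**Applying the concentration lemma and union-bounding.** Once $|\mathrm{Bad}(e)| \le 2^{(\alpha - \beta - \nu)n}$ where $2^{\alpha n} = |S|_V|$, $2^{\beta' n} = |S_r| = 2^{\epsilon n/4}$ (note $\beta'$ and $\beta$ play the two roles in Lemma~\ref{lem:conc}), and $\nu$ small, Lemma~\ref{lem:conc} gives $\Pr[N_e > cn^2] \le \exp(-\Omega(n^2))$ over the choice of $S_r$, for this fixed $e$, fixed $V$, and fixed realization of $C \setminus S_r$. Then I take a union bound: over the $\le 2^n \cdot \mathrm{poly}(n)$ choices of view $V$, the $\le 2^{h(\rho_w)n+1}$ choices of error vector $e$ of weight $\le \rho_w n$, and the choice of which block $S_r = S_i$ of the partition of $C|_V$ we consider ($\le 2^n$ blocks). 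Since $2^{n} \cdot 2^{h(\rho_w)n} \cdot 2^n \cdot \exp(-\Omega(n^2))$ is still $\exp(-\Omega(n^2))$, we conclude that conditioned on $E_0$, simultaneously for all $V$, all $e$, and all $S_r$, we have $N_e \le O(n^2)$; the statement claims $O(n^4)$, which leaves comfortable slack (the extra powers absorb the two randomness stages). Also fold in the probability that the Chernoff-type bounds on $|B_{\rho_w}(\cdot) \cap (C\setminus S_r)|$ fail, which is again $\exp(-2^{\Omega(n)})$ after a union bound.

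**Main obstacle.** The delicate point is the volume bound on $\mathrm{Bad}(e)$: one must be careful that the error vector $e$, the view $V$, and the codewords outside $S_r$ can all be adversarially correlated in principle, but in the random-code analysis $e$ and $V$ are \emph{fixed} before the coin flips defining the code, and the independence of $S_r$ from $C \setminus S_r$ is what makes Lemma~\ref{lem:conc} applicable. The arithmetic showing $|\mathrm{Bad}(e)| \le |S|_V| \cdot 2^{-\Omega(\epsilon n)}$ relies essentially on the hypothesis $\rho_r < 1 - h(\rho_w) - 2\epsilon$, which ensures $|S|_V|$ is large enough (exponentially bigger than the number of codewords, after accounting for the ball volume) that the bad region is an exponentially small fraction. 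I expect checking this exponent inequality carefully, and correctly separating the two sources of randomness (the code outside $S_r$, then $S_r$ itself) in the probability bound, to be the only real work; everything else is a routine union bound.
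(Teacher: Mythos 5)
Your proposal takes a genuinely different route from the paper's, and it contains a real gap.

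The paper fixes $S_r$ and treats the \emph{remaining} codewords as the random objects. It applies Lemma~\ref{lem:conc} twice---once with the codewords of $C\setminus C|_V$, uniform in $A=\{0,1\}^n\setminus S|_V$, and once with the codewords of $C|_V\setminus S_r$, uniform in $A^c=S|_V$---using in both cases the \emph{same} fixed target region $B=\bigcup_{x\in S_r} B_{\rho_w}(x+e)$, whose size is trivially bounded by $2^{(h(\rho_w)+\epsilon/4)n}$. This shows only $O(n^2)$ codewords outside $S_r$ land in $B$; Corollary~\ref{cor:ld} then converts each such codeword into at most $O(n^2)$ conflicting elements of $S_r$, giving $N_e\leq O(n^4)$. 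You instead fix $C\setminus S_r$, treat $S_r$ as the random sample, and want to apply Lemma~\ref{lem:conc} with $A=S|_V$ and $B=\mathrm{Bad}(e)$. That direction avoids the two-case split and the list-decodability step, but it shifts all the difficulty into bounding $\lvert\mathrm{Bad}(e)\rvert$.

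That is where the gap is. You need $\lvert\mathrm{Bad}(e)\rvert\leq \lvert S|_V\rvert\cdot 2^{-\Omega(\epsilon n)}$ to hold with probability $1-\exp(-\Omega(n^2))$ over the draw of $C\setminus S_r$, and your sketch does not establish this. The union bound $\lvert\mathrm{Bad}(e)\rvert\leq \sum_{x'\in C\setminus S_r}\lvert B_{\rho_w}(x'-e)\cap S|_V\rvert$ has summands that can be as large as $\min\bigl(2^{h(\rho_w)n},\,2^{(1-\rho_r)n}\bigr)$ for a single unfortunate $x'$: unlike the heuristic, the intersection $B_{\rho_w}(x'-e)\cap S|_V$ is \emph{not} uniformly of size $2^{(h(\rho_w)-\rho_r)n}$, since $x'-e$ may sit inside $S|_V$. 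The expectation $\mathbb{E}\lvert\mathrm{Bad}(e)\rvert\lesssim \lvert S|_V\rvert\cdot 2^{-\epsilon n}$ is fine, but the concentration step you gesture at (``by a Chernoff bound'') is not available: a bounded-differences argument on the $2^{Rn}$ i.i.d.\ codewords has per-coordinate sensitivity up to $2^{h(\rho_w)n}$, which is far too large to give the needed tail bound across the full parameter range $\rho_r<1-h(\rho_w)-2\epsilon$. Nor does a union bound over codewords close the exponent arithmetic, as you yourself notice at one point before pivoting to a ``cleaner route'' that is not actually carried through. In short, your framing requires a covering/concentration statement about the random code that is both nontrivial and not supplied; the paper's reversed application of Lemma~\ref{lem:conc} plus Corollary~\ref{cor:ld} sidesteps it entirely.
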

\begin{proof} We first show that $S_r$ has few conflicts with the codewords {\em outside of} $C|_V$. Let $A\subseteq\{0,1\}^n$ be the set of all strings with are {\em not} consistent with $V$. In Lemma~\ref{lem:conc}, set $B=\bigcup_{x\in S_r} B_{\rho_w} ( x + e)$, and let $X_1,\dotsc, X_N$ be the codewords of $C\setminus C|_V$. By construction of $C$, the $X_i$ are chosen uniformly from the space $A$, and we have $N\leq 2^{nR}$. 

We have $\abs{A}=2^n-2^{(1-\rho_r) n}$, and $\abs{B} \leq 2^{h(\rho_w) n + \epsilon n/4}\leq 2^{(1-R-\epsilon/2)n}$. Thus, by Lemma~\ref{lem:conc}, with probability $1-2^{-\Omega(n^2)}$, there are at most $O(n^2)$ codewords $x'\in C\setminus C|_V$ such that $x'$ is at distance $\leq \rho_w n$ from $x+e$, for some $x\in S_r$. 

By Corollary~\ref{cor:ld}, with probability $1-\exp\bigl(-\Omega(n^2)\bigr)$, for each such $x'$, the number of codewords in $C$ in the Hamming ball of radius $\rho_w n$ around $x'-e$ is at most $O(n^2)$. 
\medskip

Now we show that $S_r$ has few conflicts with the codewords in $C|_V\setminus S_r$. Because we are conditioning on $E_0$, we have that $\abs{C|_V}\leq 2^{(1-h(\rho_w) - \rho_r - \epsilon/2)n}$. Set $B$ as before, and let $A^c\subseteq\{0,1\}^n$ be the set of strings which are consistent with $V$ (the complement of the set $A$ above). We have $\abs{A^c} = 2^{(1-\rho_r) n}$. 

As the elements of $C|_V\setminus S_r$ are uniformly distributed in the set $A^c$, we may apply Lemma~\ref{lem:conc} once more. As $\abs{B\cap A^c}\leq 2^{h(\rho_w) n + \epsilon n/4}\leq 2^{(1-\rho_r) n - (1-h(\rho_w) - \rho_r - \epsilon/2)n - \epsilon n/4}$, by Lemma~\ref{lem:conc} we conclude that with probability $1-\exp\bigl(-\Omega(n^2)\bigr)$, there are at most $O(n^2)$ codewords $x'\in C|_V\setminus S_r$ such that $x'$ is at distance $\rho_w n$ from $x+e$, for some $x\in S_r$. 

Again by Corollary~\ref{cor:ld}, with probability $1-\exp\bigl(-\Omega(n^2)\bigr)$, for each such $x'$, the number of codewords in $C$ in the Hamming ball of radius $\rho_w n$ around $x'-e$ is at most $O(n^2)$. 
\medskip

This shows that there are at most $O(n^2)$ codewords $x'\in C\setminus S_r$ such that $x'\in \bigcup_{x\in S_r} B_{\rho_w} ( x + e)$, and each such $x'$ is in at most $O(n^2)$ balls $B_{\rho_w} ( x + e)$. Thus conditioned on $E_0$, $N_e\leq O(n^4)$ with probability $1-\exp(-\Omega(n^2))$, as desired. 
\end{proof}

It remains to show that when the transmitted codeword is chosen uniformly from $S_r$, the adversary is unlikely to cause confusion with another codeword in $S_r$. This can be shown directly using the techniques of~\cite{myopic}, or by appealing to the results of~\cite{obliv}. Let us outline the argument of~\cite{myopic}. 

\begin{lemma}\label{lem:subset}
Assume that $E_0$ holds. Let $T\subseteq \{0,1\}^n$ have size $2^{\epsilon n/8}$, and fix $e\in\{0,1\}^n$ of weight at most $\rho_w n$. Let $S_r\subseteq C|_V$ have size $2^{\epsilon n/4}$. 

Then with probability $1-\exp(-\Omega(n^2))$ over the choice of $S_r$, the number $M_e$ of $x\in T$ such that $x+e$ conflicts with some codeword in $S_r$ is at most $O(n^4)$. 
\end{lemma}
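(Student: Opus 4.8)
The plan is to reduce the claim to the concentration estimate of Lemma~\ref{lem:conc} applied to the random set $S_r$, combined with the list-decoding bound of Corollary~\ref{cor:ld}. Write $A^c$ for the set of strings consistent with $V$, so $\abs{A^c}=2^{(1-\rho_r)n}$. Since the codewords of $C$ are i.i.d.\ uniform, those falling in $C|_V$ are i.i.d.\ uniform on $A^c$; assuming $E_0$ (which forces $\abs{C|_V}\ge 2^{\epsilon n/2}$), the $2^{\epsilon n/4}$ elements of $S_r$ can therefore be treated as i.i.d.\ uniform on $A^c$ and independent of $T$, up to an $\exp(-2^{\Omega(n)})$ error from the conditioning. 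I would also note that in the intended application $T$ is a subset of the codebook, so Corollary~\ref{cor:ld} gives an event $\mathcal{E}_{\mathrm{ld}}$, of probability $1-\exp(-\Omega(n^2))$ over $C$, on which every Hamming ball of radius $\rho_w n$ meets $T$ in at most $O(n^2)$ points. (Some such restriction on $T$ is genuinely needed: for an arbitrary $T$ the claim is false, because a single codeword landing in $S_r$ in the right place would force $M_e=\abs{T}$, an event of probability only $\exp(-\Omega(n))$.)

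First I would bound the number of ``relevant'' elements of $S_r$. Put $B:=\bigl(\bigcup_{x\in T}B_{\rho_w}(x+e)\bigr)\cap A^c$; this is a \emph{fixed} set (depending only on $T$ and $e$) of size at most $\abs{T}\cdot 2^{h(\rho_w)n}=2^{(h(\rho_w)+\epsilon/8)n}$, using $\rho_w<1/2$ to bound a ball's volume by $2^{h(\rho_w)n}$. Applying Lemma~\ref{lem:conc} with ambient set $A^c$ (so $\alpha=1-\rho_r$), with this $B$, and with the $N=2^{\epsilon n/4}$ i.i.d.\ uniform points of $S_r$ (so $\beta=\epsilon/4$), the size hypothesis $\abs{B}\le 2^{(\alpha-\beta-\nu)n}$ reduces to $\rho_r\le 1-h(\rho_w)-3\epsilon/8-\nu$, which holds for small $\nu>0$ since $\rho_r<1-h(\rho_w)-2\epsilon$. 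Thus with probability $1-\exp(-\Omega(n^2))$ over $S_r$ at most $O(n^2)$ elements of $S_r$ lie in $B$; call this event $\mathcal{E}_{\mathrm{conc}}$.

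Then I would deduce the bound on $M_e$ on the event $\mathcal{E}_{\mathrm{ld}}\cap\mathcal{E}_{\mathrm{conc}}$. If $x\in T$ is counted by $M_e$, then $x+e$ conflicts with some $x'\in S_r$, and every such witness $x'$ lies in $B_{\rho_w}(x+e)\cap A^c\subseteq B$, so there are at most $O(n^2)$ witnesses. Fixing a witness $x'$, the $x\in T$ for which $x+e$ conflicts with $x'$ are exactly those with $x\in B_{\rho_w}(x'+e)$ (over $\F_2$, $x'\in B_{\rho_w}(x+e)$ iff $x\in B_{\rho_w}(x'+e)$), i.e.\ they lie in a single radius-$\rho_w n$ Hamming ball, of which at most $O(n^2)$ lie in $T$ by $\mathcal{E}_{\mathrm{ld}}$. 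Multiplying, $M_e\le O(n^2)\cdot O(n^2)=O(n^4)$, and a union bound gives $\Pr[M_e>O(n^4)]\le \Pr[\neg\mathcal{E}_{\mathrm{ld}}]+\Pr[\neg\mathcal{E}_{\mathrm{conc}}]\le \exp(-\Omega(n^2))$.

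The main obstacle I anticipate is the probabilistic bookkeeping of the first step, not any single estimate: one must check that, after conditioning on $E_0$ (and handling $\mathcal{E}_{\mathrm{ld}}$ by a separate union bound rather than conditioning on it, since it refers to all of $C$), the points of $S_r$ are legitimately i.i.d.\ uniform on $A^c$ and independent of $T$, so that Lemma~\ref{lem:conc} applies to the fixed target set $B$; and one must be precise about the sense in which $T\subseteq C$ in the application, which, as noted, is exactly what makes an $\exp(-\Omega(n^2))$ failure probability possible. After that, the argument is a routine substitution of parameters into Lemma~\ref{lem:conc} and Corollary~\ref{cor:ld}.
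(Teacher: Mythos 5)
Your proposal is correct and follows essentially the same two-step route as the paper: apply Lemma~\ref{lem:conc} to the i.i.d.\ uniform points of $S_r$ against the target $B=\bigcup_{x\in T}B_{\rho_w}(x+e)$ to get $O(n^2)$ witnesses in $S_r$, then bound the multiplicity of each witness via Corollary~\ref{cor:ld} and multiply to get $O(n^4)$. Your explicit observation that $T$ must be a subcollection of codewords of $C$ for the Corollary~\ref{cor:ld} step (and that the lemma as literally stated would fail for an arbitrary $T\subseteq\{0,1\}^n$) is a genuine clarification of a hypothesis the paper leaves implicit in its ``as before'' reference, and it matches how the lemma is actually invoked in Lemma~\ref{lem:internal}, where $T$ is a row of the array partitioning $S_r\subseteq C$.
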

\begin{proof} Note that the elements of $S_r$ are chosen uniformly at random from a space of size $2^{(1-\rho_r) n}$, and the size of $S_r$ is $2^{\epsilon n/4}$. 

Similarly to the previous lemma, we apply Lemma~\ref{lem:conc} with $B=\bigcup_{x\in T} B_{\rho_w n}(x+e)$. We have $\abs{B}\leq 2^{h(\rho_w) n + \epsilon n/8} < 2^{(1-\rho_r) n - \epsilon n/4}$ (recall that $\rho_r < 1-h(\rho_w)-2\epsilon$). Then by Lemma~\ref{lem:conc}, the number of $x'\in S_r$ which intersect $B_{\rho_w n}(x+e)$ for some $x\in T$ is $O(n^2)$ with probability $1-\exp(-\Omega(n^2))$. 

As before, we then also have that each $x'$ only intersects $B_{\rho_w n}(x+e)$ for $O(n^2)$ different $x\in T$, with probability $1-\exp(-\Omega(n^2))$. Thus $M_e\leq O(n^4)$, as desired. 
\end{proof}

\begin{lemma} \label{lem:internal} Fix $e\in\{0,1\}^n$ of weight at most $\rho_w n$, and let $S_r\subseteq C|_V$ have size $2^{\epsilon n/4}$. Then with probability $1-\exp(-\Omega(n^2))$ over the choice of codewords in $S_r$, there are at most $O(n^4)\cdot 2^{\epsilon n/8}$ codewords $x\in S_r$ such that  $x+e$ contains another codeword in $S_r$. 
\end{lemma}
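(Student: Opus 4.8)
The plan is to partition $S_r$ into $K := 2^{\epsilon n/8}$ consecutive blocks $T_1,\dots,T_K$ (in the ordering on $C$), each of size $2^{\epsilon n/8}$, and to bound, for each block separately, the number of $x\in T_j$ for which $x+e$ conflicts with some codeword of $S_r$ other than $x$. I would write this number as $B_j^{\mathrm{out}}+B_j^{\mathrm{in}}$, where $B_j^{\mathrm{out}}$ counts the $x\in T_j$ that conflict with a codeword of $S_r\setminus T_j$ and $B_j^{\mathrm{in}}$ counts those that conflict with another codeword of $T_j$; it then suffices to show $B_j^{\mathrm{out}}=O(n^4)$ and $B_j^{\mathrm{in}}=O(n^2)$, each with probability $1-2^{-\Omega(n^2)}$. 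Summing over the $K$ blocks and union bounding gives $O(n^4)\cdot 2^{\epsilon n/8}$ bad codewords with probability $1-2^{\epsilon n/8}\cdot 2^{-\Omega(n^2)}=1-2^{-\Omega(n^2)}$.

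For $B_j^{\mathrm{out}}$ I would condition on the codewords in $T_j$. Conditioned on $T_j$, the codewords of $S_r\setminus T_j$ are still i.i.d.\ uniform on the set of strings consistent with $V$ (of size $2^{(1-\rho_r)n}$) and independent of $T_j$, so this is exactly the situation of Lemma~\ref{lem:subset} with ``$T$'' $=T_j$ and ``$S_r$'' $=S_r\setminus T_j$; the proof of that lemma uses only $|T|=2^{\epsilon n/8}$ and an upper bound of $2^{\epsilon n/4}$ on the size of the random set, so it applies verbatim to the slightly smaller set $S_r\setminus T_j$. Hence $B_j^{\mathrm{out}}=O(n^4)$ with probability $1-2^{-\Omega(n^2)}$.

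The main obstacle is the self-conflict term $B_j^{\mathrm{in}}$, where I expect the naive estimate to be too weak. Note $B_j^{\mathrm{in}}\le 2 Z_j$, where $Z_j$ is the number of \emph{close pairs} $\{x,x'\}\subseteq T_j$, i.e.\ pairs with $\norm{x\oplus x'\oplus e}\le\rho_w n$; a fixed pair of i.i.d.\ uniform codewords is close with probability at most $2^{-(1-\rho_r-h(\rho_w))n}\le 2^{-2\epsilon n}$ (using $\rho_r<1-h(\rho_w)-2\epsilon$), so $\mathbb{E}[Z_j]\le 2^{\epsilon n/4}\cdot 2^{-2\epsilon n}$ is exponentially small. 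However, Markov's inequality only gives $\Pr[Z_j\ge 1]\le 2^{-\Omega(n)}$ --- one accidental close pair is the dominant failure mode --- which is not enough after union bounding over all error vectors $e$ (there are $2^{\Theta(n)}$ of them). To upgrade to a $2^{-\Omega(n^2)}$ failure probability I would bound a high factorial moment: $\Pr[Z_j\ge m]\le \mathbb{E}\bigl[\binom{Z_j}{m}\bigr]$, which is a sum over all $m$-element sets $\mathcal{E}$ of candidate close-pairs of the probability that every pair in $\mathcal{E}$ is close. Grouping these sets by the rank $k$ of $\mathcal{E}$ in the graphic matroid --- so $\mathcal{E}$ lives on at most $2k$ of the $2^{\epsilon n/8}$ codewords, and, revealing codewords along a spanning forest of $\mathcal{E}$, the probability that all pairs in $\mathcal{E}$ are close is at most $\bigl(2^{-2\epsilon n}\bigr)^{k}$ --- yields a bound of the form $\sum_{k}\bigl(2^{\epsilon n/8}\bigr)^{2k}\binom{2k^2}{m}\,2^{-2\epsilon n k}$, which for $m=\Theta_\epsilon(n^2)$ is $2^{-\Omega(n^2)}$: the dominant $k\approx\sqrt{m}$ term is about $\bigl(2^{\epsilon n/4}\cdot 2^{-2\epsilon n}\bigr)^{\sqrt{m}}=2^{-\Omega(n^2)}$ once $m$ is chosen to be a small enough constant multiple of $n^2$. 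Thus $B_j^{\mathrm{in}}=O(n^2)$ with probability $1-2^{-\Omega(n^2)}$.

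Combining the two estimates and union bounding over $j$ completes the proof. I note in passing that running the close-pairs moment argument directly on all of $S_r$ (rather than block by block) already shows that the total number of $x\in S_r$ with a within-$S_r$ conflict is only $O(n^2)$, which is stronger than the stated bound; I would keep the block decomposition anyway, since it parallels the proof of Lemma~\ref{lem:external} and makes the role of Lemma~\ref{lem:subset} explicit.
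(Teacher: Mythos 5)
Your proof is correct but handles the within-block conflicts by a genuinely different route than the paper. Both arguments share the first layer: partition $S_r$ into $2^{\epsilon n/8}$ pieces of size $2^{\epsilon n/8}$ (the paper realizes these as the \emph{rows} of a $2^{\epsilon n/8}\times 2^{\epsilon n/8}$ array), fix one piece, and apply Lemma~\ref{lem:subset}, conditioned on that piece, to bound by $O(n^4)$ the number of its elements that conflict with codewords in the \emph{other} pieces --- your $B_j^{\mathrm{out}}$ step is exactly this. Where you diverge is the within-block term $B_j^{\mathrm{in}}$. The paper finesses it with a combinatorial symmetry: since a block is a row of a square array, two codewords in the same row sit in distinct columns, so every within-row conflict is automatically captured when Lemma~\ref{lem:subset} is applied a second time to the \emph{columns}; no new estimate is needed, and the $O(n^4)\cdot 2^{\epsilon n/8}$ bound follows from the union of row and column counts. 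You instead attack within-block pairs head on: you correctly observe that although the expected number of close pairs in a block is exponentially small, a first-moment (Markov) bound only gives failure probability $2^{-\Omega(n)}$, insufficient to survive the eventual union over the $2^{\Theta(n)}$ error vectors $e$, and you therefore bound the factorial moment $\mathbb{E}\bigl[\binom{Z_j}{m}\bigr]$ with $m=\Theta_\epsilon(n^2)$, organizing the sum by graphic-matroid rank and using a spanning-forest revelation to charge $(2^{-2\epsilon n})^k$ to each rank-$k$ edge set. This is heavier machinery than the paper's row/column trick, but it is self-contained and, as your closing remark correctly notes, it even overshoots: applied to all of $S_r$ at once it yields $O(n^2)$ bad codewords with probability $1-2^{-\Omega(n^2)}$, strictly stronger than the stated $O(n^4)\cdot 2^{\epsilon n/8}$, and it would let you dispense with Lemma~\ref{lem:subset} inside the proof of this lemma entirely.
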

\begin{proof} The proof of this fact appears in the full version of~\cite{myopic}, but we record it here for completeness. 

We consider arranging the elements of $S_r$ arbitrarily into a $2^{\epsilon n/8}\times 2^{\epsilon n/8}$-sized array $A$, indexed by $i,j\in [2^{\epsilon n/8}]$. 

For fixed $i$, let $r(i)=\{A(i,j)|j\in [2^{\epsilon n/8}]\}$ be the $i$th row of $A$. Similarly for fixed $j$, we can define the $j$th column of $A$ to be $c(j) = \{A(i,j)|i\in [2^{\epsilon n/8}]\}$. 
\smallskip

Define the event $E_2$ to be the event that each row $r(i)$ has $O(n^4)$ elements $x$ such that $x+e$ conflicts with some $x'\in r(i')$, for $i'\neq i$, and each column $c(j)$ similarly has $O(n^4)$ elements $x$ such that $x+e$ conflicts with some element in $c(j')$, $j'\neq j$. We claim that $E_2$ holds with probability $1-\exp(-\Omega(n^2))$ over the uniform choice of codewords in $C|_V$. 

Let us fix a row $r(i)$ (the argument for columns is identical). Note that the elements of $r(i)$ are {\em independent} of the rest of the array. Thus, by Lemma~\ref{lem:subset}, with probability $1-\exp(-\Omega(n^2))$, for any $i$, there are $O(n^4)$ elements $x\in r(i)$ such that $x+e$ conflicts with some $x'\in r(i')$, for $i'\neq i$. 

As there are $2\cdot 2^{\epsilon n/8}$ total rows and columns, we apply a union bound to conclude $E_2$ holds with probability $1-\exp(-\Omega(n^2))\cdot \exp(O(n))=1-\exp(-\Omega(n^2))$. 
\medskip

Whenever $E_2$ holds, by considering all $2^{\epsilon n/8}$ rows, we have that the number of array elements $A(i,j)$ in $A$ such that $A(i,j)+e$ conflicts with another codeword $A(i',j')$ for $i'\neq i$ is at most $O(n^4)\cdot 2^{\epsilon n/8}$. Similarly, the number of array elements $A(i,j)$ such that $A(i,j)+e$ conflicts with $A(i',j')$ for $j'\neq j$ is at most $O(n^4)\cdot 2^{\epsilon n/8}$. 

Thus, the number of codewords $A(i,j)$ such that $A(i,j)+e$ conflicts with a different codeword $A(i',j')$ is at most $O(n^4)\cdot 2^{\epsilon n/8}$ with probability $1-\exp(-\Omega(n^2))$. 
\end{proof}
\bigskip

For a fixed set $S_r\subseteq C|_V$ of size $2^{\epsilon n/4}$ and a fixed error vector $e\in \{0,1\}^n$ of weight $\rho_w n$, let $E_2(S_r,e)$ be the event that the number of codewords in $S_r$ which result in a decoding error for a fixed error vector $e$ is at most 
\[2^{3\epsilon n/16} + O(n^4)\]
Combining Lemmas~\ref{lem:external} and~\ref{lem:internal}, we see that $\Pr[E_2(S_r,e) | E_0] \geq 1-\exp(-\Omega(n^2))$. 

Recall that $\{S_i\}$ was a partition of $C|_V$ into sets of size $2^{\epsilon n/4}$, with the possible exception of the last set. Conditioned on the event $E_1$, the set $S_i$ which contains the transmitted codeword has size $2^{\epsilon n/4}$. 
In particular, we can apply a union bound over the exponentially many partition sets $S_i$ of size $2^{\epsilon n/4}$, at most $2^n$ error vectors $e$, and adversary views $V$ to conclude that $E_2(S_i,e)$ holds for all such $S_i, e$ with probability $1-\exp(-\Omega(n^2))$ over the choice of the code $C$. 

Thus, conditioned on $E_0$, the decoding error of $C$ is at most
\[\Pr[\neg E_1] + \Pr[\text{error} | E_1]\leq \Pr[\neg E_1] + \frac{O(n^4)\cdot2^{\epsilon n/8} + O(n^4)}{2^{\epsilon n/4}} = \exp\bigl(-\Omega(n)\bigr)\]
with probability $1-\exp\bigl(-\Omega(n^2)\bigr)$. 

As $E_0$ holds with probability $1-\exp(-2^{\Omega(n)})$, $C$ achieves decoding error $o(1)$ with high probability over the choice of $C$. 
\medskip

Combined with the results of Section~\ref{sec:secrecy-proof}, we have shown the following: 

\begin{thm}\label{thm:main-constr}

Let $C\subseteq\{0,1\}^n$ be a random code of rate $R=1 - h(\rho_w)-\epsilon$, and let $\Pi=\{A_i\}$ be a random partition of $C$ into subsets of size $\ell = \rho_r n - O(1)$. 

Then, with high probability over the choices of $C$ and $\Pi$, the stochastic code $C_\Pi$ which maps the $i$th message $m_i\in \{0,1\}^{Rn-\ell}$ to a random element of $A_i$ has rate
\[R(C_\Pi) \geq  1- h(\rho_r) - \rho_r-\epsilon\]
and achieves reliability and secrecy over the $(\rho_r,\rho_w)$ adversarial wiretap channel. 
\end{thm}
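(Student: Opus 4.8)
The plan is to assemble Theorem~\ref{thm:main-constr} from the three pieces already in hand: the rate bookkeeping, the secrecy analysis of Section~\ref{sec:secrecy-proof}, and the reliability analysis of Section~\ref{sec:reliability-proof}. First I would compute the rate. Since $C_\Pi$ encodes messages from $\{0,1\}^{Rn-\ell}$, its rate is $R' = R - \ell/n = (1-h(\rho_w)-\epsilon) - (\rho_r - O(1/n)) = 1 - h(\rho_w) - \rho_r - \epsilon + O(1/n)$, which is at least $1 - h(\rho_w) - \rho_r - \epsilon$ once $n$ is large. Note this forces $R' > 0$ — so that the partition into blocks of size $2^\ell$ is well-defined and Lemma~\ref{lem:secrecy} applies — exactly when $\rho_r < 1 - h(\rho_w) - \epsilon$; the reliability step will in fact require the slightly stronger $\rho_r < 1 - h(\rho_w) - 2\epsilon$, which I would carry as the standing hypothesis. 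Since $\epsilon$ may be taken arbitrarily small, this still lets the construction approach the claimed rate $1 - h(\rho_w) - \rho_r$.

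Next, secrecy. With $\ell = \rho_r n - O(1)$ — more precisely $\ell = \rho_r n - 2(R'+1)$ and $L = n$ as in Corollary~\ref{cor:secrecy} — Lemma~\ref{lem:secrecy} applies verbatim: for a random code $C$ and any partition $\Pi$ independent of the codeword values (a uniformly random partition qualifies, since the $2^{Rn}$ codewords are i.i.d.\ uniform regardless of which block they land in, so each block $A_i$ is distributed as $2^\ell$ i.i.d.\ uniform vectors), with probability $1 - 2^{-\Omega(n)}$ over the choice of $C$ and $\Pi$ every block contains fewer than $n$ codewords consistent with any view $V$. Corollary~\ref{cor:equiv} then gives normalized equivocation $\Delta/n \geq R' - O(\log n / n) \to R'$, i.e.\ weak secrecy. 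If one instead wants semantic secrecy, take $\ell$ slightly larger so that $\ell/n > \rho_r$ and invoke the soft-covering argument following Lemma~\ref{lem:soft} in place of the counting argument; this costs only an extra additive $o(1)$ in the rate.

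For reliability I would invoke Theorem~\ref{thm:reliability}: under $\rho_r < 1 - h(\rho_w) - 2\epsilon$, the codebook $C$ itself achieves decoding error $o(1)$ over the $(\rho_r,\rho_w)$ AWTC with high probability over $C$, using the nearest-neighbor decoder. It remains to transfer this to the stochastic code $C_\Pi$, which is the one small point not already established. I would define the decoder for $C_\Pi$ by first running the nearest-neighbor decoder for $C$ to recover a codeword $x'$, then outputting the index $i'$ with $x' \in A_{i'}$. Because the blocks of $\Pi$ are disjoint, a $C_\Pi$-decoding error (recovered message $\neq$ transmitted message) forces $x' \neq \mathbf{X}$, hence is a sub-event of a $C$-decoding error; so the average error of $C_\Pi$ is at most that of $C$. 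Moreover the distributions line up: when the message is uniform on $\{0,1\}^{R'n}$ and $C_\Pi$ picks a uniform element of the corresponding block, the transmitted codeword $\mathbf{X}$ is uniform over all of $C$ — exactly the input distribution under which Theorem~\ref{thm:reliability} is proved — and the adversary's channel in the two settings is identical, its error distribution depending only on $C$ and a $\rho_r$ fraction of $\mathbf{X}$. So $C_\Pi$ inherits decoding error $o(1)$.

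Finally I would take a union bound over the high-probability events of the secrecy and reliability steps — both holding with probability $1 - o(1)$ over the joint choice of $C$ and $\Pi$ — to conclude that a single pair $(C,\Pi)$ simultaneously realizes all three properties. The main obstacle, such as it is, is the reliability-transfer bookkeeping in the previous paragraph: one must check that the stochastic encoder leaves the relevant input distribution and the adversary's information unchanged, and that nearest-neighbor decoding composed with "look up the block" is a legitimate decoder for $C_\Pi$; everything else is substitution into results proved earlier.
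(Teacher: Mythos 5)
Your proof assembles Corollary~\ref{cor:secrecy} (or the semantic soft-covering argument) with Theorem~\ref{thm:reliability} and the rate bookkeeping, which is exactly the paper's route; the one point you make explicit that the paper leaves implicit --- that nearest-neighbor decoding for $C$ followed by block lookup is a valid decoder for $C_\Pi$, with error dominated by that of $C$ because the stochastic encoder induces the uniform codeword distribution used in Theorem~\ref{thm:reliability} --- is a worthwhile clarification, as is carrying the standing hypothesis $\rho_r < 1 - h(\rho_w) - 2\epsilon$. Note also that the theorem as printed contains a typo: the rate should read $1 - h(\rho_w) - \rho_r - \epsilon$, not $1 - h(\rho_r) - \rho_r - \epsilon$, which is the quantity both you and the paper's preceding analysis actually establish.
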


\section{Conclusion}
\label{sec:conc}

We have shown that the secrecy capacity of the $(\rho_r,\rho_w)$ adversarial wiretap channel is at least $1-h(\rho_w)-\rho_r$. Below we outline what we believe to be the most interesting directions for future research. 
\medskip

\noindent {\bf Exact capacity.}  The most natural open question remaining is to close the gap between Corollary~\ref{cor:upper} and Theorem~\ref{thm:main-constr}. It seems plausible that the lower bound can be improved to match the upper bound, as in the case of the Wiretap Channel II with noisy (rather than adversarial) main channel. Doing so would require a refinement of the reliability analysis to handle different input distributions.

\medskip
\noindent {\bf Explicit constructions}. The question of efficiently constructing binary codes for wiretap channels is a challenging one, and here we would be interested in any improvement over the fully random construction, including constructions which use fewer random bits. There have been constructions given for certain special cases, including when $\rho_r=0$~(\cite{complim}) and when $\rho_w=0$~(\cite{extractor}), but to the best of our knowledge, nothing is known for the general case. 

\medskip

Note that over large alphabets, both of these questions were addressed by the construction of~\cite{WSN}, which pairs folded Reed-Solomon codes, which are optimally list-decodable, with explicit ``algebraic manipulation detection'' codes. The latter ingredient is still valid over a binary alphabet, but we do not know explicit binary codes which can be list-decoded with optimal rate.

\subsubsection*{Acknowledgements}

Thanks to Vincent Tan for many helpful discussions throughout the course of this work, and for constructive suggestions which greatly improved the presentation of this work. Thanks as well to Sidharth Jaggi for sharing a draft of his work on myopic adversaries. Thanks to Omur Ozel for pointing out an error in the statement of Theorem~\ref{thm:rand-cap} in the first version of this paper. 

\bibliographystyle{alpha}
{\small
}

\end{document}